\definecolor{ForestGreen}{rgb}{0.0333,0.4451,0.0333}
\definecolor{DarkRed}{rgb}{0.65,0,0}
\definecolor{Red}{rgb}{1,0,0}
\newcommand{\eps}{\epsilon}
\newcommand{\poly}{\operatorname{poly}} 
\newcommand{\bound}{\operatorname{bound}}
\crefname{claim}{Claim}{Claims}
\definecolor{mygreen}{rgb}{0,0.65,0.66}
\newcommand{\OPT}{\operatorname{OPT}}
\newcommand{\DP}{\operatorname{DP}}
\newcommand{\LCST}{\operatorname{LCST}}
\newcommand{\wmix}{w_{mix}}
\newcommand{\dmix}{d_{mix}}
\def\mathcolor#1#{\@mathcolor{#1}}
\def\@mathcolor#1#2#3{%
  \protect\leavevmode
  \begingroup
    \color#1{#2}#3%
  \endgroup
}
\newcommand*{\vsepfbox}[1]{%
  \begingroup
    \sbox0{\fbox{#1}}%
    \setlength{\fboxrule}{0pt}%
    \mbox{\kern-\fboxsep\fbox{\unhbox0}\kern-\fboxsep}%
  \endgroup
}
\theoremstyle{plain} \numberwithin{equation}{section}
\newtheorem{theorem}{Theorem}[section]
\numberwithin{theorem}{section}
\newtheorem{lemma}[theorem]{Lemma}
\newtheorem{definition}[theorem]{Definition}
\theoremstyle{definition}
\DeclareMathOperator*{\argmin}{argmin}
\newcommand{\subalign}[1]{%
  \vcenter{%
    \Let@ \restore@math@cr \default@tag
    \baselineskip\fontdimen10 \scriptfont\tw@
    \advance\baselineskip\fontdimen12 \scriptfont\tw@
    \lineskip\thr@@\fontdimen8 \scriptfont\thr@@
    \lineskiplimit\lineskip
    \ialign{\hfil$\m@th\scriptstyle##$&$\m@th\scriptstyle{}##$\hfil\crcr
      #1\crcr
    }%
  }%
}
\newcounter{note}[section]
\title{Planar Length-Constrained Minimum Spanning Trees}
\author{%
   \begin{tabular}{cc}
     D Ellis Hershkowitz\thanks{Supported in part by NSF grant CCF-2403236.} & \hspace{0.5cm} Richard Z Huang\footnotemark[1] \\\\
     \multicolumn{2}{c}{Brown University} \\
     \multicolumn{2}{c}{}
   \end{tabular}
 }
\date{}
\begin{document}
\maketitle

\begin{abstract}
    In length-constrained minimum spanning tree (MST) we are given an $n$-node graph $G = (V,E)$ with edge weights $w : E \to \mathbb{Z}_{\geq 0}$ and edge lengths $l: E \to \mathbb{Z}_{\geq 0}$ along with a root node $r \in V$ and a length constraint $h \in \mathbb{Z}_{\geq 0}$. Our goal is to output a spanning tree of minimum weight according to $w$ in which every node is at distance at most $h$ from $r$ according to $l$.

    We give a polynomial-time algorithm for planar graphs which, for any constant $\eps > 0$, outputs an $O\left(\log^{1+\eps} n\right)$-approximate solution with every node at distance at most $(1+\eps)h$ from $r$. Our algorithm is based on new length-constrained versions of classic planar separators and $\alpha$-divisions which may be of independent interest. Additionally, our algorithm works for length-constrained Steiner tree and bounds the integrality gap of the natural linear program as $O(\log ^ 2 n /\eps)$, again with $(1+\eps)$ slack in the length constraint. Complementing this, we show that any algorithm on general graphs for length-constrained MST in which nodes are at most $2h$ from $r$ cannot achieve an approximation of $O\left(\log ^{2-\epsilon} n\right)$ for any constant $\epsilon > 0$ under standard complexity assumptions; as such, our results separate the approximability of length-constrained MST in planar and general graphs. 

    
    
\end{abstract}
\thispagestyle{empty}
\newpage
\pagenumbering{gobble}
\newpage

\begingroup
\hypersetup{linkcolor=ForestGreen}
\tableofcontents
\endgroup
\newpage
\pagenumbering{arabic}

\setcounter{page}{1}

\setcounter{page}{1}

\section{Introduction}
The minimum spanning tree problem is among the most well-studied algorithms problems \cite{cormen2022introduction,karger1995randomized,chazelle2000minimum,pettie2002optimal}. In \textbf{minimum spanning tree (MST)} we are given an $n$-node graph $G = (V,E)$ and an edge weight function $w : E \to \mathbb{Z}_{\geq 0}$. Our goal is to output a spanning tree $T$ of $G$ connecting all nodes of $V$ of minimum total edge weight $w(T) := \sum_{e \in T}w(e)$. 

A great deal of work has focused on MST because it formalizes how to cheaply design a simple communication network (with weights treated as costs). However, MST guarantees connectivity between nodes and connectivity alone may not be sufficient for fast and reliable communication. For example, if sending a message over an edge incurs some latency and two nodes are very far in the MST, then communication may be prohibitively slow. Likewise, if sending a message over an edge has some failure probability, then long communication paths lead to more chances for failure and therefore less reliable communication \cite{akgun2011new, de2018extended}.

By forcing output networks to communicate over short paths, length-constrained MST aims to address these shortcomings.\footnote{This problem has appeared under many names in the literature, including: hop-bounded MST \cite{haeupler2021tree, chekuri2024approximation}, bounded diameter MST \cite{ravi1998bicriteria,kapoorsarwat2007bdmst} and shallow-light tree \cite{kortsarz1999shallowlight}.} \textbf{Length-constrained MST} is identical to MST but we are also given a root $r\in V$, edge lengths $l : E \to \mathbb{Z}_{\geq 0}$ and a length constraint $h \geq 1$, and every node in our output spanning tree $T$ must be at distance at most $h$ in $T$ from $r$ according to $l$.\footnote{Another way of defining these problems has no root and requires that all pairs of nodes are at distance at most $h$ from each other; all of our results hold for this version with an extra factor of $2$ in the length guarantee.} 

Unfortunately, length constraints make problems significantly harder. First, length constraints destroy much of the structure that makes classic problems feasible. For example, the natural notion of distance to consider in MST (the shortest path distances where the distance between nodes $u$ and $v$ is the minimum weight of a $u$-$v$ path according to $w$) forms a metric. On the other hand, the natural notion of distance to consider in length-constrained MST (the length-constrained distances where the distance between nodes $u$ and $v$ is the minimum weight of a $u$-$v$ path according to $w$ that has length at most $h$ according to $l$) is inapproximable by a metric \cite{haeupler2021tree}. Second, length constraints cause algorithmic hardness: while MST is poly-time solvable \cite{karger1995randomized}, length-constrained MST has no poly-time $o(\log n)$-approximation (under standard assumptions) \cite{naor1997retractedshallowlight}.

Despite these difficulties, there has been considerable work on length-constrained MST over the past several decades. This work has studied length-constrained MST in general graphs \cite{kortsarz1999shallowlight, charikar1999approximation}, metrics \cite{althaus2005approximating}, when the length constraint is a small constant \cite{bar2001generalized, alfandari1999approximating} and length-constrained MST in the Euclidean plane \cite{laue2008approximating,fakcharoenphol2025ptas}. More generally, there are many recent works on length-constrained network design problems \cite{boehm2022hop,khani2016improved,hajiaghayi2009approximating,chimanispoerhase2014shallowlight, haeupler2021tree,filtser2022hop,chekuri2024approximation,haeupler2023maximum,ghaffari2021hop, haeupler2022hop,haeupler2024new,haeupler2023parallel,haeupler2024dynamic,haeupler2025length,haeupler2024low}. We discuss these in detail in \Cref{sec:relWork}. For length-constrained MST, remarkably, a poly-time $O\left(n^{\eps}\right)$-approximation for any constant $\eps > 0$ of \cite{charikar1999approximation} is the best poly-time approximation more than two decades on and no poly-log approximations are known. 

However, perhaps surprisingly, poly-time poly-log approximations are known for length-constrained MST \emph{if we allow for a large slack in the length}. In particular, we say that an algorithm for length-constrained MST has \textbf{length slack} $s \geq 1$ if the returned solution guarantees that every node is at distance at most $h s$ from $r$ according to $l$. Then, \cite{ravi1998bicriteria} gave a poly-time $O(\log n)$ approximation with $O(\log n)$ length slack. More recently, \cite{hershkowitz2024simplelengthconstrainedminimumspanning} gave a poly-log approximation with length slack $O(\log n / \log \log n)$. Indeed, length slack is often necessary for efficient algorithms in length-constrained graph algorithms: even just poly-time computing exact length-constrained distances is NP-hard, hence requiring $1+\eps$ length slack \cite{ashvinkumar2025faster,garey2002computers}. However, to-date every poly-log approximation for length-constrained MST requires super-constant length slack. 

\noindent This leads us to the central question of this work.
\begin{quote}\centering
    \textit{ Is there a non-trivial graph class which admits poly-time poly-log approximations for length-constrained MST with constant length slack?}
\end{quote}
In particular, the focus of this work is planar graphs; note that length-constrained MST is still NP-hard even when restricted to planar graphs \cite{friggstad2023planardst}.


\subsection{Our Contributions}
We settle this question by giving (what is essentially) a poly-time $O(\log n)$-approximation with $1+\epsilon$ length slack for length-constrained MST in planar graphs. This is summarized by our main result, as follows.\footnote{We say that function $f(n)$ is $\poly(n)$ if there exists a constant $c$ such that $f(n) = O\left(n^c\right)$.}
\begin{tcolorbox}
\begin{restatable}{theorem}{thmmainreal}\label{thm:planarmain2}
    For any constant $\eps>0$, there is an $O\left(\log^{1+\eps}n\right)$-approximation for length-constrained MST with length slack $1+\eps$ in planar graphs running in time $\poly(n) \cdot n^{O\left(1/\eps^2\right)}$. 
\end{restatable}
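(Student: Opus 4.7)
The plan is a recursive divide-and-conquer algorithm built on a length-constrained version of planar shortest-path separators, combined with portal enumeration to control the length budget across recursion levels.

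First, I would develop the key structural tool: a length-constrained planar separator. Starting from a shortest-path tree rooted at $r$ under $l$, I would use planarity to extract a separator $S$ consisting of $O(1)$ root-to-leaf shortest paths whose removal splits $G$ into subgraphs each containing at most a constant fraction of the vertices of any candidate optimal solution. Because $S$ is a union of root-paths, any vertex in a piece can reach $r$ by detouring through $S$ with only a small additive length penalty, laying the groundwork for a controlled length-slack recursion.

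Second, the algorithm recurses on pieces of $G \setminus S$. At each recursion level, I would discretize $S$ into $n^{O(1/\eps)}$ portals at geometrically-spaced distances from $r$, enumerate all $O(1/\eps)$-subsets of portals (giving $n^{O(1/\eps)}$ guesses per level for how the optimal solution crosses the boundary), and recurse on each piece as a length-constrained Steiner tree instance where the chosen portals are additional roots with updated length budgets. Each level loses a $\Theta(\eps / \log n)$ fraction of the length budget to portal discretization, so after $O(\log n)$ levels the total length slack is $1+\eps$. For the weight, a charging argument across the portal interfaces loses an $O(\log^{\eps} n)$ multiplicative factor per level, giving $O(\log^{1+\eps} n)$ across $O(\log n)$ levels. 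The enumerations compose across the key levels for running time $\poly(n) \cdot n^{O(1/\eps^2)}$.

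The hardest step is the separator itself. Unlike classical planar separators, which only partition vertices, a length-constrained separator must simultaneously (i) be balanced, (ii) be built from shortest paths from $r$ so that re-routing is cheap, and (iii) admit a portal set of size $n^{O(1/\eps)}$ that captures the boundary behavior of an optimal solution up to $(1+\eps/\log n)$ length loss per level. Classical approaches like Lipton--Tarjan or Thorup--Klein give (i) and (ii), but they either lose too much length or require too many portals to also achieve (iii). Making all three hold simultaneously, together with a clean recursion that composes the per-level $1+\eps/\log n$ length loss into a total $1+\eps$ slack without blowing up the weight approximation, is the main technical obstacle and the place where the planar structure must be exploited most carefully.
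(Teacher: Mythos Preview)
Your high-level divide-and-conquer shape matches the paper, but two load-bearing pieces are missing or wrong.

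First, a shortest-path tree under $l$ alone does not give a separator you can afford. A path that is shortest in length can have weight far exceeding $\OPT$ (the paper makes exactly this point: add a weight-$0$ Hamiltonian path of enormous length and every $l$-shortest path becomes useless for $w$). If you buy the separator, you blow up the weight; if you do not buy it, you have no way to connect the pieces. The paper resolves this with a \emph{mixture metric} $\wmix(e)=D^{(h)}(G)\cdot l(e)/h + w(e)$: a shortest-path tree under $\wmix$ has every root-to-leaf path of length at most $2h$ \emph{and} weight at most $2D^{(h)}(G)\le O(\OPT)$, so the resulting cycle separator can be bought and charged to $\OPT$. Your proposal has no analogue of this, and the ``hardest step'' paragraph does not identify this weight issue at all.

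Second, the claimed ``$O(\log^\eps n)$ weight factor per level via a charging argument across portal interfaces'' is where the entire approximation ratio lives, and you give no mechanism for it. The paper's mechanism is concrete: build an $\alpha$-division hierarchy with $\alpha=\log^{\eps/2}n$ (so depth $\log_\alpha n$, not $\log n$), partition each bought boundary into $\beta\approx \log n/(\eps^2\log\log n)$ pieces of diameter $h/\beta$, guess in multiples of $h/\beta$ the length at which $T^*$ reaches each piece ($\beta^\beta=n^{O(1/\eps^2)}$ guesses, handled by a DP over the hierarchy), and reduce connecting a child region's pieces to its parent's pieces to a length-constrained Steiner tree instance with $O(\beta)$ terminals, solved by the $O(t^\delta)$-approximation of Charikar et al.\ with $\delta=\eps/2$. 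That Steiner subroutine is exactly where the $\log^{\eps}n$ factor comes from. Your ``enumerate $O(1/\eps)$-subsets of $n^{O(1/\eps)}$ portals'' does not accomplish this: the optimal tree can cross the separator far more than $O(1/\eps)$ times, geometric spacing gives multiplicative rather than additive length error, and no charging argument collapses arbitrary crossings to an $O(\log^\eps n)$ overhead without a black-box of this kind.
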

\end{tcolorbox}
\noindent Our algorithm extends to various additional settings. It gives an $O(\log n)$ approximation with length slack $1+o(1)$ in quasi-polynomial time (\Cref{sec:qpt}). Likewise, a slight modification gives an algorithm for the length-constrained Steiner tree problem (\Cref{sec:steiner}) which is identical to length-constrained MST but where we must connect some terminal set $U \subseteq V$ to $r$ with length at most $h$ rather than all vertices. Furthermore, we show that our techniques can be adapted to give an $O(\log ^ 2 n)$-approximation with length slack $1+\eps$ that is competitive with the natural linear program (LP); see \Cref{sec:intgap}. 
This bounds the integrality gap of the LP as poly-logarithmic (with length slack $1+\eps$) which can be contrasted with the fact that in general graphs the LP has a polynomial integrality gap \cite{zosin2002directed,li2024polynomial}. 

Complementing our algorithm, we observe that such a result is impossible in general graphs.
\begin{tcolorbox}
    \begin{restatable}{theorem}{thmlowerbound}\label{thm:newlowerbound}
    For every fixed $\eps>0$, length-constrained MST cannot be $O\left(\log^{2-\eps}n\right)$-approximated in poly-time with length slack $s$ for $s < 2$ unless $\textsf{NP}\subseteq \textsf{ZTIME}\left(n^{\poly(\log n)}\right)$.
\end{restatable}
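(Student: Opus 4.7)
We plan to reduce from Directed Steiner Tree (DST), which by \cite{halperin2003polyloginapx} admits no polynomial-time $O(\log^{2-\eps} n)$-approximation unless $\textsf{NP}\subseteq \textsf{ZTIME}(n^{\poly(\log n)})$. The known hardness instances can be taken to be layered DAGs with some depth $L$: a root $r$ at layer $0$, terminals at layer $L$, and every edge going from layer $i$ to layer $i{+}1$.

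Given such a DST instance, the first step is to construct a length-constrained MST instance by taking the underlying undirected graph, preserving edge weights, assigning every edge length $1$, and setting the length constraint $h = L$. Since the instance is layered, every vertex lies within undirected distance $L$ of $r$, so viewing every vertex as a terminal makes the MST and Steiner-tree formulations coincide. With slack $s = 1$, this reduction is tight: any $r$-to-terminal path of length at most $L$ in the undirected layered graph must advance exactly one layer per step and hence corresponds to a directed DAG path, so the optimum length-constrained MST cost equals the DST optimum.

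The main obstacle, and where the bulk of the work lies, is extending the hardness to any fixed slack $s < 2$. With slack $s$, a path from $r$ to a terminal may have up to $\lfloor (s-1)L/2 \rfloor$ \emph{backtracks} (steps from layer $i{+}1$ back to layer $i$), potentially producing cheaper undirected ``shortcuts'' unavailable to DST. To handle this, the plan is to modify the hardness instance by replacing each DAG edge with a one-way gadget whose undirected version penalizes backtracks: for example, subdivide each edge into a long path and attach dummy branches so that any backward traversal must pay extra weight comparable to an extra forward edge. Choosing gadget parameters appropriately in terms of $s$ (treated as a fixed constant strictly less than $2$), every backtrack in a length-constrained MST solution either becomes infeasible under the length budget $sh$ or can be charged to the weight of inserting an additional forward edge in the corresponding DST solution.

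The final step is then routine: any $\alpha$-approximation for length-constrained MST with slack $s < 2$ yields an $O(\alpha)$-approximation for DST by directing the solution outward from $r$ and charging backtracks as above, which combined with the $\log^{2-\eps} n$ hardness of DST gives the claimed lower bound. The threshold $s = 2$ is tight in this argument because for slack at least $2$ any undirected Steiner tree can be ``doubled'' into a walk of length at most $2h$, reducing the problem to ordinary undirected Steiner tree and hence admitting an $O(\log n)$-approximation; so genuinely new ideas (beyond the above reduction) would be needed to push hardness to slack exactly $2$ or beyond.
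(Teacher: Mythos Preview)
Your slack-$1$ reduction from layered DST is fine, but the extension to slack $s<2$ via ``one-way gadgets'' is where the argument breaks, and this is the entire content of the theorem. You never specify the gadget, and there is a real obstruction: in an \emph{undirected} graph, no local gadget can distinguish the direction in which it is traversed, so it cannot charge weight to ``backward'' steps and not to ``forward'' ones. Your suggested fix (subdividing and hanging dummy branches) does nothing --- subdivision is direction-symmetric, and dummy branches are either terminals (which must be bought regardless of direction) or non-terminals (which can be ignored). Moreover, you cannot sidestep the problem by making the number of layers small: the Halperin--Krauthgamer hardness needs depth $L$ growing with $n$, and height-reducing to constant $L$ costs a $k^{\Theta(1/L)}$ factor that kills the $\log^{2-\eps}n$ bound, while any super-constant $L$ allows $\Theta((s-1)L)$ backtracks. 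A single well-placed backtrack over a cheap edge can bypass an expensive forward edge that DST is forced to buy, so the undirected optimum can be genuinely smaller; your charging scheme (``charge each backtrack to an extra forward edge'') does not account for this.

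The paper avoids this difficulty by reducing from group Steiner tree rather than DST, and by putting \emph{all} length on a few designated edges rather than spreading it uniformly. Concretely: give every original edge length $0$; for each vertex $v$ in a group, add a copy $v'$ and a length-$h$, weight-$0$ edge $\{v,v'\}$; connect all copies within a group by length-$0$, weight-$0$ edges; and connect $r$ to every original vertex by a length-$h$, weight-$0$ edge. Now any $r$-to-$v'$ path must use at least one length-$h$ edge, and the only way to cheat (reach a copy without paying for a genuine group-Steiner connection) requires two length-$h$ edges, hence length $\ge 2h$. So for every $s<2$ the length-constrained MST optimum equals the group Steiner optimum, and the $\log^{2-\eps}n$ hardness transfers directly. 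The point is that the slack-$2$ threshold falls out of a two-valued length assignment, not from any attempt to simulate directedness.
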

\end{tcolorbox}
\noindent Thus, our results  show that planar length-constrained MST is formally easier than length-constrained MST on general graphs (under standard complexity assumptions). 

\subsection{Overview and Intuition}
In this section, we give an overview of the challenges in obtaining our results and our techniques.

\subsubsection{Algorithm Intuition}
We begin with an overview of our algorithm.

\paragraph{Algorithm High-Level.} At a very high-level, the idea of our algorithm is to reduce solving our instance of planar length-constrained MST to a collection of instances of (non-planar) length-constrained Steiner tree at a cost of $1+\eps$ in length slack and an additive cost of $O(\log n)\cdot \OPT$ where $\OPT$ is the weight of the optimal length-constrained MST. These instances of length-constrained Steiner tree will be such that:
\begin{itemize}
    \item For each considered instance of length-constrained Steiner tree, the optimal length-constrained MST of the original graph (appropriately restricted to the instance) is feasible.
    \item No edge of the optimal solution appears in more than  $O(\log n)$ of the instances of length-constrained Steiner tree.
    \item Each considered instance of length-constrained Steiner consists of at most $O(\log n)$ terminals.
\end{itemize}
The utility of the last fact is that the aforementioned $O\left(n^{\eps}\right)$-approximation of \cite{charikar1999approximation} for length-constrained MST also gives an $O\left(t^{\eps}\right)$-approximation for length-constrained Steiner tree where $t$ is the number of terminals. Thus, if we apply the algorithm of \cite{charikar1999approximation} to our instances of length-constrained Steiner tree, its guarantees along with the above gives an $O\left(\log ^{1+\eps} n \right)$-approximation.
Of course, the real challenge here is to construct a collection of instances satisfying the above.

\paragraph{Why Classic Planar Separators Don't Work.} 
If our goal is a collection of instances in which each edge of the optimal solution participates in at most $O(\log n)$ instances, a natural strategy is that of \emph{planar separators}. In particular, it is known that, given any spanning tree of a planar graph, there exists a fundamental cycle of this tree (where the edge in the fundamental cycle but not in the tree may or may not be in the graph) where at most $2/3$ of the graph lies ``inside'' the cycle and at most $2/3$ of the graph lies ``outside'' the cycle. Even better, since this holds for any spanning tree, one can use a shortest path tree so that the cycle can be covered by two shortest paths \cite{lipton1979separator}. This cycle can then be used as a ``separator'' for divide-and-conquer approaches where we repeatedly recurse on the inside and outside of the cycle. As each edge will appear in $O(\log n)$ recursive calls, this gives a natural approach for $O(\log n)$-approximations \cite{klein2013structured}.

What typically makes such divide-and-conquer approaches feasible for graph problems with just edge weights is that one is free to fully use the shortest paths which cover the cycle in the output solution. In particular, any shortest path has weight at most the diameter of the graph---that is, the maximum distance between two nodes in the graph---and the diameter is typically a lower bound on the optimal solution (up to constants). As such, one can include these shortest paths in the output solution while remaining competitive with the optimal solution. Likewise, if one must guarantee a connection between two nodes of bounded weight, it is typically safe for an algorithm to connect these two nodes along a shortest path.

However, these sorts of guarantees do not clearly carry over to length-constrained MST where we have both edge weights and lengths. In particular, we would have to choose a separator which is covered both shortest paths with respect to weights or lengths, not both. However, it is easy to see that any path which is shortest according to weights has weight which (within a factor of $2$) lower bounds the weight of the optimal length-constrained MST. However, such a path could have length far greater than $h$ and so be much too long to use in our solution. Likewise, (assuming our input is feasible) a shortest path according to lengths certainly has length at most $2h$ and so is roughly on the right order in terms of length but could have weight far greater than that of the optimal solution. Rather, what is needed is a separator which can be covered with paths that have both low weight and low length.


\paragraph{Length-Constrained Planar Separators.} We give such a separator in the form of $h$-length, or $h$-length-constrained separators. As with classic separators, our length-constrained separators  break the graph into two parts each of which contain at most $2/3$ of the input graph. Like classic separators, our length-constrained separators can be covered by paths whose weight (up to constants) lower bound the optimal solution. However, unlike classic separators, the lengths of the paths covering our separators is at most $O(h)$. 

Specifically, the sense in which the weight of the paths covering our separators lower bound the weight of the optimal solution is somewhat different than the classic setting. In particular, lower bounding the optimal solution by way of the diameter of the graph (according to weights) cannot work in the length-constrained setting. In particular, one can always add a weight $0$ Hamiltonian path to the graph of length $\gg h$ edges which does not change the optimal solution's weight but makes the diameter (according to weights) $0$. Instead, what is needed for such a separator is a notion of distance that more tightly lower bounds the optimal solution. Such a lower bound can be gotten by $h$-length-constrained distances, defined as follows. Say that a path is $h$-length if its length according to $l$ is at most $h$. Then, the \textbf{$h$-length-constrained distance} between two nodes is the minimum weight of an $h$-length path between them and the $h$-length-constrained diameter $D^{(h)}$ is defined as the maximum $h$-length-constrained distance between two nodes. We will also say that an $h$-length path is $h$-length shortest if it is a minimum weight $h$-length path between its endpoints. In addition to having length $O(h)$, the paths covering our separators will have weight at most $O\left(D^{(h)}\right)$; it is easy to see that $D^{(2h)}$ lower bounds the weight of the optimal length-constrained MST and so (up to doubling $h$) the weight of our separators lower bound the weight of the optimal solution.

At this point, it might seem easy to get length-constrained planar separators: just take a shortest path tree with respect to $h$-length-constrained distances (rather than shortest path distances as in the classic case) and apply the aforementioned cycle separator. Unfortunately, shortest path trees with respect to $h$-length-constrained distances do not exist. Namely, it is easy to construct examples of (even planar) graphs where any rooted spanning tree contains root to leaf paths which are not $h$-length shortest; see \Cref{fig:noST}.

\begin{figure}[H]
    \centering
    \begin{subfigure}[t]{.32\textwidth}
      \centering
        \includegraphics[width=.9\linewidth]{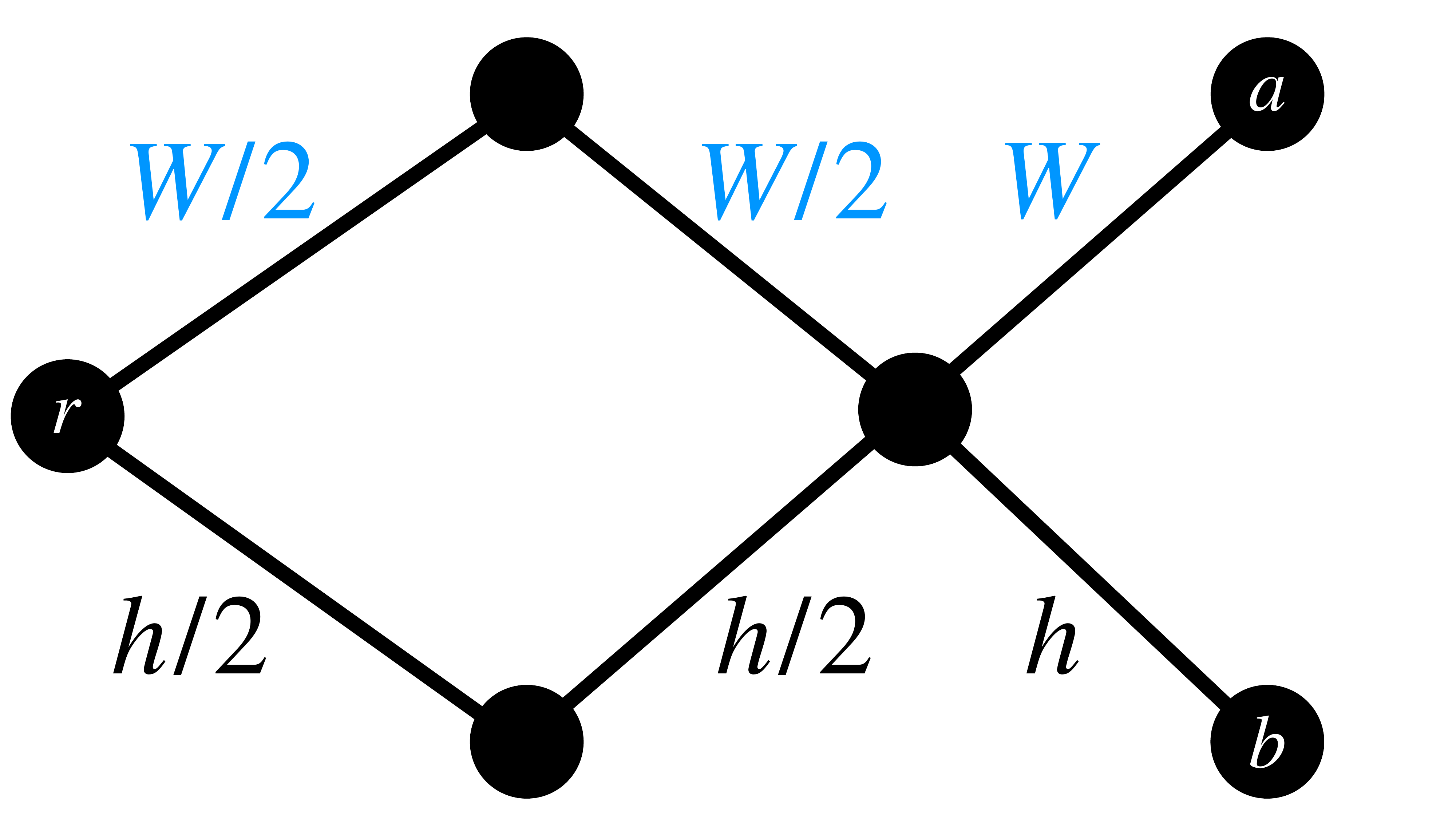}
      \caption{Input graph.}\label{sfig:noST1}
    \end{subfigure}%
    ~
    \begin{subfigure}[t]{.32\textwidth}
      \centering
      \includegraphics[width=.9\linewidth]{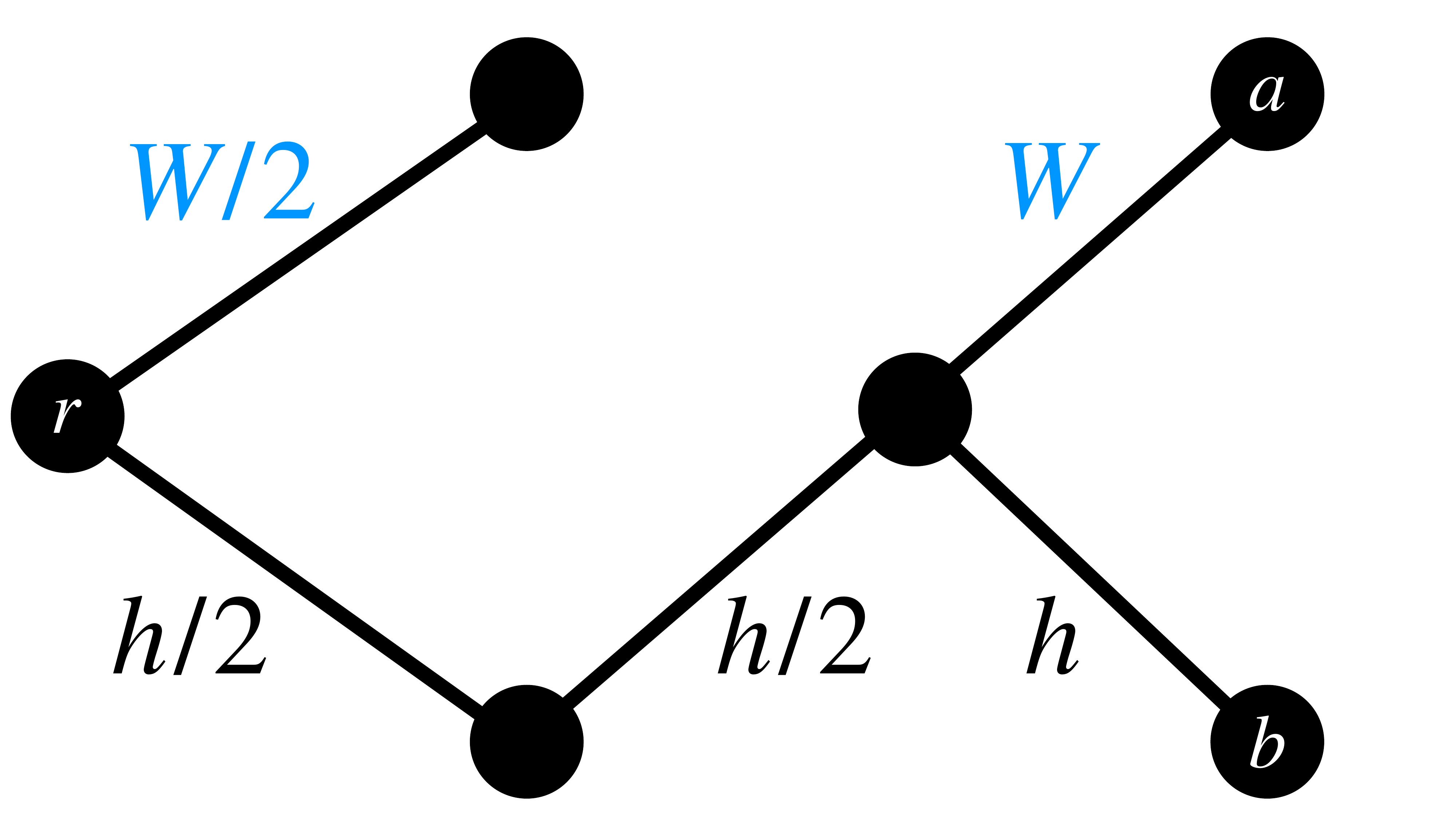}
      \caption{$r$ to $a$ $h$-length path.}\label{sfig:noST2}
    \end{subfigure}%
    ~
    \begin{subfigure}[t]{.32\textwidth}
      \centering
      \includegraphics[width=.9\linewidth]{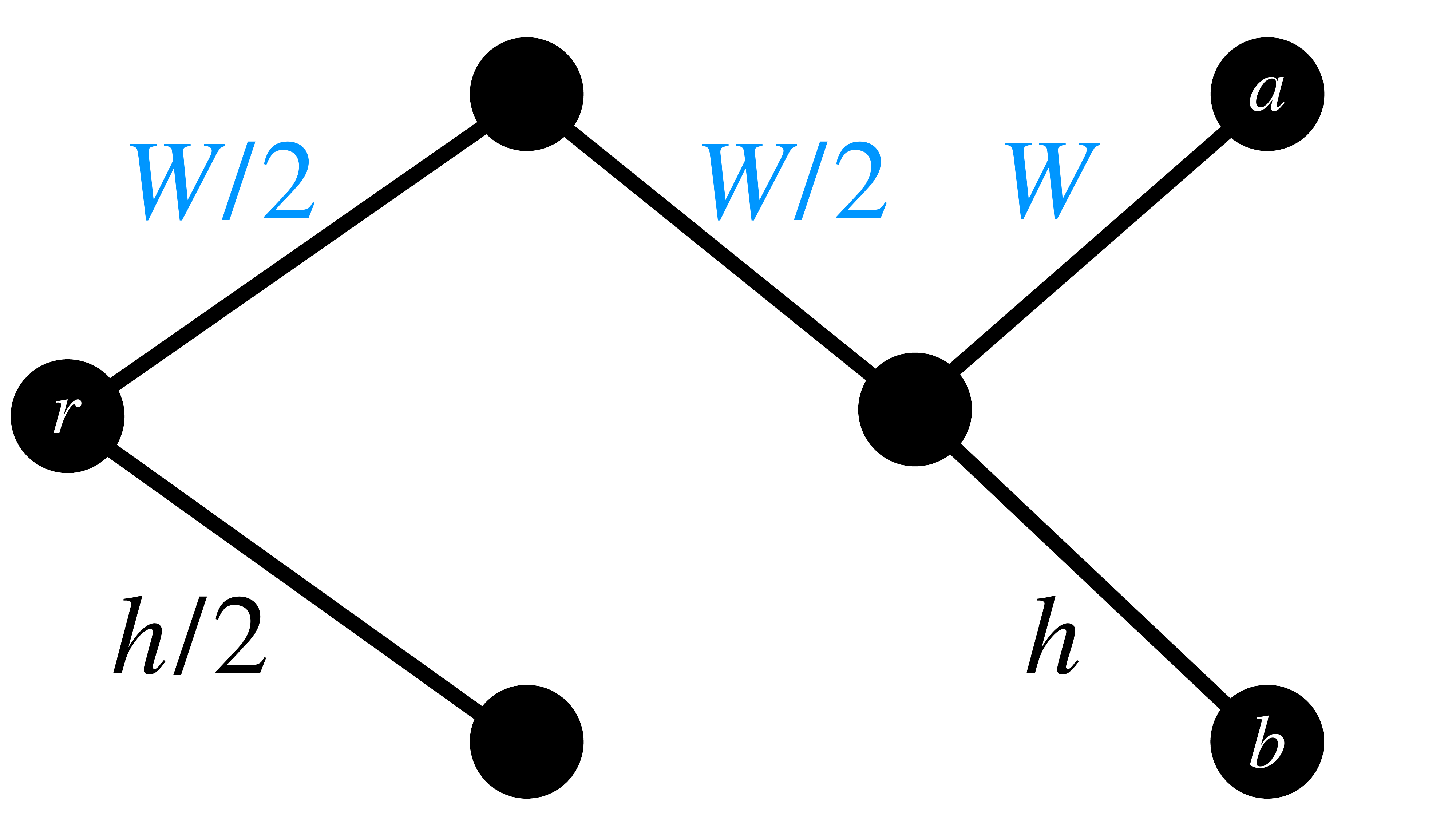}
      \caption{$r$ to $b$ $h$-length path.}\label{sfig:noST3}
    \end{subfigure}%
    \caption{A graph in which no spanning tree contains only $h$-length shortest paths from the root $r$ for any $h, W > 0$. Edge lengths in black; edge weights in blue; all unlabeled edge lengths and weights $0$. In \ref{sfig:noST2} the unique $r$ to $a$ $h$-length shortest path is included, preventing the inclusion of the unique $r$ to $b$ $h$-length shortest path. \ref{sfig:noST3} gives the flipped case.}
    \label{fig:noST}
\end{figure}

Nonetheless, we show, that such trees approximately exist in an appropriate sense. In particular, there exist spanning trees all of whose root to leaf paths are $2h$-length with weight at most $2D^{(h)}$ (\Cref{lemma:mixture}). Applying the classic planar cycle separator to such trees then gives us our length-constrained separators. We prove the existence of and give algorithms for finding these trees by taking a careful linear combination of lengths and weights to get a single new weight function with respect to which we then take a shortest path tree (\Cref{defn:mixturemetric}); this is similar to the ``mixture metric'' of \cite{haeupler2021tree}.

Summarizing, we show how to compute separators which break the graph into parts that are at most $2/3$ the size of the original graph and which can be covered by paths of length $O(h)$ and weight $2D^{(h)}$. See \Cref{defn:lc-sep} for the formal definition of our length-constrained separators and \Cref{lemma:lengthseparator} for the algorithm that computes them.

\paragraph{Length-Constrained Planar Divisions.}
We then use our length-constrained separators to construct length-constrained versions of divisions of planar graphs. Roughly, an $\alpha$-division is a separator that breaks the graph into $\alpha$ parts or ``regions'', each with a $1/\alpha$ fraction of the size of the input graph. Typically, an $\alpha$-division is gotten by taking a planar separator and then recursing on the inside and outside of the separator to recursive depth $\log \alpha$.

However, doing so with our length-constrained separators is unsuitable for our purposes as far as length is concerned. In particular, what we ultimately need for our algorithm is a division where the parts of the separator to which each region is adjacent---a.k.a.\ the boundary of that region---can be covered by edges of total length $O(h)$. However, if we just naively apply length-constrained separators to recursive depth  $\log \alpha$, a region can end up adjacent to $\log \alpha$ length-constrained separators and so, it seems, all one can say is that its adjacent separators can be covered by length $O(h \cdot \log \alpha)$.

We get around this by using the fact that planar separators can be made to work for general node weightings. In particular, we alternate between separators which break up the graph itself and separators which reduce the total length it takes to cover the boundary. Since each time we apply a separator we increase the length it takes to cover the boundary by $h$ and each time we separate the separators we reduce the total length by $2/3$ multiplicatively, we end up with regions whose boundaries can be covered by edges of total length $O(h)$, as desired. As far as weight is concerned, we show that (after appropriately modifying edge lengths), the length-constrained diameter only increases as we take our separators and so the total weight of our length-constrained $\alpha$-division is at most $O\left(\alpha \cdot D^{(h)}\right)$. 

Summarizing, we give a length-constrained $\alpha$-division which breaks a planar graph up into $\alpha$ regions each consisting of a $1/\alpha$-fraction of the graph where there is a collection of edges of total weight $O\left(\alpha \cdot D^{(h)}\right)$ that covers the boundaries of all regions together and which on each region's boundary has  total length $O(h)$. See \Cref{defn:LC-division} for the formal definition of our divisions and \Cref{lemma:lc-div} for the algorithm that computes them.

\paragraph{Length-Constrained Planar Division Hierarchies.}
Next, we use our length-constrained divisions to build a tree-type hierarchy of length-constrained $\alpha$-divisions of depth $O\left(\log_\alpha n\right)$. To do so, we simply recursively apply our length-constrained $\alpha$-divisions with the lengths of edges on the boundary zero-ed out. Using the fact that length-constrained diameter lower bounds $\OPT$, we end up with a collection of regions organized in a tree and a collection of edges which cover the boundaries of all of these regions where these edges have weight $O(\alpha \cdot \log n \cdot \OPT)$ and these edges when restricted to the boundary of a single region have length at most $O(h)$. See \Cref{defn:hierarchy} for the definition of our hierarchies and \Cref{lemma:hierarchy} for their algorithm.

\paragraph{Breaking Up Region Boundaries.}
Our goal now becomes to use our hierarchy to solve length-constrained MST. Our algorithm will use all of the edges which cover the boundaries of all of our regions of our $\alpha$-division hierarchy with $\alpha = O\left(\log^ \eps n\right)$. This costs us only $O(\alpha \log n)\cdot \OPT = O\left(\log^{1+\eps} n\right) \cdot \OPT$ and so is consistent with our approximation guarantee but may look inadequate as far as length is concerned: in particular, traversing the boundary of a single region in our hierarchy may cost us $O(h)$ in length and since the hierarchy has depth $O\left(\log_\alpha n\right)$, it looks like this will cost us a prohibitive $O\left(h \cdot \log_\alpha n\right)$ overall in length.\footnote{In fact, using the edges covering the boundary of a length-constrained $\alpha$-division hierarchy as a solution for length-constrained MST immediately gives $O(\alpha \cdot \log n)$ approximation with $O\left(\log_\alpha n\right)$ length slack.}

We get around this by breaking apart the edges covering each boundary into smaller length ``pieces'' where we limit traversal to be internal to each of these pieces. In particular, we break each boundary's edges into $O(\beta)$ pieces, each with diameter (according to $l$) at most $h/\beta$ for $\beta \approx \frac{\log n}{\eps^2 \cdot \log \log n }$ (see \Cref{lemma:low-diam-partition}). Since the depth of our hierarchy is $\log_\alpha n \approx \log n / (\eps \log \log n)$, we are free to traverse within one piece at each level of our hierarchy on the way to the root and this only costs us at most an extra $\frac{h}{\beta}\cdot \log_\alpha n \approx \frac{h \eps^2 \cdot \log \log n}{\log n} \frac{\log n}{\eps \log \log n} \leq h\eps$ overall in length.

\paragraph{Reducing to Length-Constrained Steiner Tree with Few Terminals.} Lastly, we solve a series of instances of length-constrained Steiner tree to connect the boundary pieces of a parent to each of the boundary pieces of each of its children. In particular, for each region in our hierarchy and each of its child regions, we solve an instance of length-constrained Steiner tree where we treat each boundary piece of the child as a terminal and the pieces of the parent as the root of our instance. 

Furthermore, we set up this instance in such a way that the optimal weight of this instance is no more than the weight of the optimal length-constrained MST restricted to this region. Since the number of boundary pieces is $O(\beta) \approx \log_\alpha n/\eps \leq O(\log n)$, the algorithm of \cite{charikar1999approximation} gives an $O\left(\log ^\eps n\right)$-approximation for this instance of length-constrained Steiner tree. Also, since each level of our hierarchy partitions the optimal length-constrained MST, we pay $O\left(\log_\alpha n \cdot \log ^{\eps} n\right) \cdot \OPT = O\left(\log ^{1+\eps} n \right)\cdot \OPT$ overall for all of these instances. See \Cref{sec:chandInstance} for a description of these instances and note that these instances may not be planar.

Correctly setting up each of these instance of length-constrained Steiner tree requires that we have a reasonably accurate guess for the length the optimal length-constrained MST takes to reach each boundary piece. In particular, we guess this length for each boundary piece in multiples of $h/\beta$ for $\beta$-many guesses per piece. Guessing with this coarseness again loses us about an $h/\beta$ at each level of our hierarchy and so $\frac{h}{\beta} \cdot \log_\alpha n \approx h\eps$ overall in terms of length. Since each child's boundary is broken up into $\beta\approx \frac{\log n}{\eps^2 \log \log n}$ pieces, we must simultaneously guess this length for all $\beta$ pieces of each child for a total number of $\beta ^ \beta \approx n^{1/\eps^2}$ guesses per child, giving us a polynomial runtime overall for any constant $\epsilon$. Using a standard dynamic programming approach then allows us to perform at least as well as the ``correct'' guess. 

See \Cref{sec:DP} for the formal definition of the dynamic program and \Cref{sec:realalgdesc} for our final algorithm. We also note that the size of the dynamic program can be manipulated via the parameters $\alpha,\beta$ to give a quasipolynomial-time algorithm with stronger approximation guarantees; see \Cref{sec:qpt}. Lastly, we show that a much simpler (but similar-spirited) algorithm that avoids the use of length-constrained divisions and dynamic programming with length-constrained Steiner tree subinstances entirely achieves an LP-competitive $O(\log^2 n)$-approximation with $1+\eps$ length slack; see \Cref{sec:intgap}.

\subsubsection{Hardness of Approximation Intuition}
Our hardness of approximation is based on a very simple reduction from the group Steiner tree problem which we can describe nearly in full here. In group Steiner tree, we are given a graph and groups of vertices and a root $r$ and must connect at least one vertex from each group to $r$. In order to reduce group Steiner tree to length-constrained MST, we pull each node in each group away by a length $h$ and weight $0$ edge and connect each group together with  weight $0$ and length $0$ edges. This means that once a solution reaches one node in a group, it can trivially reach the rest. We then connect each original node to the root with a  weight $0$ but length $h$ edge so any solution can trivially connect all original nodes to the root but not in a way that allows it to cheat in group Steiner tree. Note that this hardness is not immediate from the reduction of DST to length-constrained MST since this reduction only holds for length slack $1$.

\subsection{Additional Related Work}\label{sec:relWork}

Before moving onto our formal results, we give an overview of additional related work.

\paragraph{Details of Prior Work on Length-Constrained MST}\cite{kortsarz1999shallowlight} gave an approximation of $O\left(n^{\eps} \exp(1/\eps)\right)$ but with running time $n^{1/\eps} \poly(n)$ for $\eps > 0$. This work was later found to contain a bug by \cite{charikar1999approximation}, which fixed this bug and improved the approximation guarantee to $O\left(n^{\eps}/\eps^3\right)$\footnote{The original paper claims an $O\left(n^\eps/\eps^2\right)$-approximation, but this result was based on the initial statement of Zelikovsky's height reduction lemma in \cite{zelikovsky1997series} which had an error.}.
\cite{hershkowitz2024simplelengthconstrainedminimumspanning} showed that, given any $\eps > 0$, one can $O\left(n^{\epsilon}/\epsilon\right)$-approximate length-constrained MST with length slack $O(1/\eps)$ whereas, conversely, \cite{kapoorsarwat2007bdmst} gave an $O(1/\eps)$-approximation with length slack $O\left(n^{\epsilon}/\epsilon\right)$ for any $\eps > 0$. Recently, \cite{chekuri2024approximation} gave a poly-time LP-competitive $\poly(\log n)$-approximation with $\poly(\log n)$ length slack.
We also note that \cite{naor1997retractedshallowlight} claimed to give an $O(\log n)$-approximation with length slack $2$ (for even the directed case) but later retracted this due to an error (see \cite{chimanispoerhase2014shallowlight}). We summarize these works and how they compare to our own in \Cref{fig:related}. Notably, no poly-logarithmic approximation for general graphs is currently known when allowing for at most constant length slack. 

\begin{figure}[H]
\begin{center}\renewcommand{\arraystretch}{1.25}
\begin{tabular}{|l|l|l|l|l|l|}
\hline
 Approximation & Length Slack &  Running Time &Graph Class & Citation  \\ \hline
 $O\left(n^\eps \cdot \exp(1/\eps)\right)$ &    $1$      &  $\poly(n) \cdot n^{1/\eps}$    & General  &\cite{kortsarz1999shallowlight} \\ \hline
$O\left(n^\eps /\eps^3\right)$ &    $1$      &  $\poly(n) \cdot n^{1/\eps}$    & General  &\cite{charikar1999approximation} \\ \hline
 $O(\log n)$ & $O(\log n)$    & $\poly(n)$     & General &  \cite{ravi1998bicriteria} \\ \hline
 $O(1/\eps)$ &  $O\left(n^\eps / \eps\right)$        &   $\poly(n)$   & General &\cite{kapoorsarwat2007bdmst} \\ \hline
 $O\left(n^\eps/\eps\right)$ & $O(1/\eps)$       & $\poly(n)$  &  General  &  \cite{hershkowitz2024simplelengthconstrainedminimumspanning}\\ \hline
 $O\left(\log^3 n\right)$ & $O\left(\log^3 n\right)$       &   $\poly(n)$   & General & \cite{chekuri2024approximation} \\ \hline
  $O\left(\log^{1+\eps} n\right)$ & $1+\eps$       &   $\poly(n) \cdot n^{O\left(1/\eps^2\right)}$   & Planar & \textbf{This work} \\ \hline
\end{tabular}
\end{center}\caption{A summary of work on length-constrained MST.}\label{fig:related}
\end{figure}
\noindent There are also several works which have given approximation algorithms for special cases of length-constrained MST, all with length slack $1$. \cite{ravi1998bicriteria} gave a $(1+\eps)$-approximation in poly-time for bounded treewidth graphs for constant $\eps > 0$. For the case where all edge lengths are $1$ and the input graph is a complete graph whose weights give a metric, \cite{althaus2005approximating} gave a poly-time $O(\log n)$-approximation. 
For the related problem where our goal is a solution in which all nodes are at distance at most $h$, \cite{bar2001generalized} gave an $O(\log n)$ approximation, assuming and all edge length are $1$ and $h \in \{4,5\}$. For this same problem if all edge weights are in $\{1,2\}$ and $h=2$ then a $5/4$-approximation in poly-time is possible \cite{alfandari1999approximating}. For the metric problem in the Euclidean plane, \cite{laue2008approximating} and \cite{fakcharoenphol2025ptas} both gave poly-time $(1+\eps)$-approximations. Lastly, the cost-distance and buy-at-bulk network design problems are two related problems to length-constrained MST that involve two edge-weight functions, and have also been studied extensively \cite{awerbuch1997buy,chekuri2001costdist,meyerson2008cost,chekuri2024approximation}.

\paragraph{Details on Length-Constrained Graph Algorithms.} Beyond length-constrained MST, there is a considerable and growing body of work on length-constrained graph algorithms. There is a good deal of work on the length-constrained version of many well-studied generalizations of MST \cite{boehm2022hop,khani2016improved,hajiaghayi2009approximating,chimanispoerhase2014shallowlight}. Along these lines, perhaps most notably, it is known that length-constrained distances can be embedded into distributions over trees  which gives poly-time poly-log approximations with poly-log length slack for numerous such generalizations, including length-constrained group Steiner tree, Steiner forest and group Steiner forest \cite{haeupler2021tree,filtser2022hop,chekuri2024approximation}. Beyond network design, there is recent work in length-constrained flows \cite{haeupler2023maximum} and length-constrained oblivious routing schemes \cite{ghaffari2021hop, haeupler2022hop} which aim to route in low congestion ways over \emph{short} paths.

Lastly, there are many recent exciting developments in length-constrained expander decompositions which are, roughly, a small number of length increases so that reasonable demands in the resulting graph can be routed by low congestion flows over short paths \cite{haeupler2022hop,haeupler2024new,haeupler2023parallel,haeupler2024dynamic,haeupler2025length,haeupler2024low}. This line of work has recently culminated in, among other things, the first close-to-linear time algorithms for $O(1)$-approximate min cost multicommodity flow \cite{haeupler2024low}.

\paragraph{Directed Steiner Tree.}
Lack of progress on length-constrained MST is partially explained by the fact that length-constrained MST is closely related to directed Steiner tree (DST): it is easy to show that a poly-time $\alpha$-approximation for length-constrained MST (with length slack $1$) implies a poly-time $\alpha$-approximation for DST and vice versa (see \Cref{sec:dst-reduction}). The existence of a poly-time poly-log approximation for DST is an active area of research and major open question \cite{zosin2002directed, halperin2007integrality,rothvoss2011directed, grandoni2019log2,li2024polynomial,friggstad2024logarithmic}. Thus, a poly-time poly-log approximation for length-constrained MST (with length slack $1$) is the same open question. 

DST on planar graphs has recently proven a fruitful line of attack on this question. In particular, \cite{friggstad2023planardst} gave a poly-time $O(\log n)$-approximation for DST in planar graphs and it is known that DST in general graphs does not have an $O\left(\log^{2-\eps} n\right)$-approximation for any $\eps > 0$ under standard complexity assumptions \cite{halperin2003polyloginapx}. Their algorithm was also extended by \cite{chekuri2024directedpolymatroid} to give an LP-competitive $O(\log ^2 n)$ approximation algorithm. Thus, like our work, this shows a separation between the complexity of planar and general graph DST.  Notably, however, the reductions from length-constrained MST to DST do not preserve planarity and so it is not clear that one can, for example, invoke the approximation algorithm of \cite{friggstad2023planardst} to get algorithms for planar length-constrained MST.

\section{Preliminaries}
We begin by giving an overview of the notation and conventions we will use.

\paragraph{Graphs.} For any subgraph $H$ of a graph $G=(V,E)$ we let $V(H)$ and $E(H)$ denote the vertex and edge sets of $H$ respectively. We let $n$ be the number of vertices and $m$ be the number of edges of $G$. For a subset of vertices $U\subseteq V$ we let $G[U]$ be the induced subgraph on $U$. For two subsets of vertices $A,B$ we let $E(A,B)$ denote the set of edges with an endpoint in $A$ and an endpoint in $B$. Given a planar embedded graph $G=(V,E)$ and a cycle $C$ of $G$, we say a vertex/edge is \textit{inside} $C$ if it belongs to a component of $G\setminus C$ that is not adjacent to the outer face of $G$, and \textit{outside} of $C$ if it is not inside. 

\paragraph{Lengths and weights.} We let $w(e)$ and $l(e)$ refer to the nonnegative weight and length of an edge $e$, respectively. For vertex-weighted graphs, we let $W(v)$ be the nonnegative weight of some vertex $v$. In other words, $w:E\to\mathbb{Z}_{\geq0},l:E\to\mathbb{Z}_{\geq0},W:V\to\mathbb{Z}_{\geq0}$. Let $l(E)=\sum_{e\in E}l(e),w(E)=\sum_{e\in E}w(e)$ for a set of edges $E$ and $W(V)=\sum_{v\in V}W(v)$ for a set of vertices $V$. We will often abuse notation and let $l(T)=l(E(T)),w(T)=w(E(T)),W(T)=W(V(T))$ for a subgraph $T$. 

\paragraph{Distances.} We let $d(u,v)$ be the shortest path distance between $u,v$ under the length function $l$ unless otherwise stated. Sometimes we use a subscript to make clear which graph distances we are using, that is, $d_T(u,v)$ refers to the distance between $u,v$ in the subgraph $T$. The (weak) diameter of a subgraph $H$ is $D(H):=\max_{u,v\in V(H)}\left(d_H(u,v)\right)$.

Let
$    \textbf{P}_h\left(u,v\right) := \{P = \left(u, \ldots, v\right) : |P| \leq h\}
$
be all $h$-length paths---i.e.\ all paths with length at most $h$---between $u$ and $v$. Likewise, let their $h$-length-constrained distance be
$$    d^{(h)}(u,v) := \begin{cases}
    \infty & \text{if }\textbf{P}_h=\emptyset \\
    \min \{w(P) : P \in \textbf{P}_h\left(u,v\right)\} & \text{o.w.}
\end{cases}
$$
and given a graph $G=(V,E)$ and root vertex $r$ we define 
$$ D^{(h)}(G) := \max_{u,v\in V}\left(d^{(h)}(u,v)\right)$$
to be the $h$-length-constrained diameter of $G$.

\paragraph{Length-Constrained Minimum Spanning and Steiner Tree.} 
In length-constrained minimum spanning tree, we are given a graph $G=(V,E)$ with a root $r\in V$ edge weight and length functions $w:E\to\mathbb{Z}\geq0,l:E\to\mathbb{Z}\geq0$ and are required to return a spanning tree of $G$ with minimum total edge weight among spanning trees $T$ of $G$ satisfying $d_T(r,v)\leq h$ for each $v\in V$. In length-constrained Steiner tree, we are given the same input along with a terminal subset $U\subseteq V$ and are required to return a (not necessarily spanning) tree of $G$ with minimum total edge weight among trees $T$ of $G$ satisfying $d_T(r,t)\leq h$ for each $t\in U$.

For an instance of length-constrained MST with input $h$, we refer to feasible solutions as $h$-length spanning trees and optimal solutions as $h$-length MSTs, and let $\OPT$ denote the weight of an $h$-length MST of the instance. For the rest of the paper we fix a single optimal solution $T^*$ since we will need to reason about different subgraphs of a single optimal solution.

\section{Length-Constrained Planar Separators}\label{sec:sep}
In this section, we introduce length-constrained planar separators, which are separators with length bounded by $h$ and weight bounded by the $h$-length-constrained diameter. We will typically assume all graphs are vertex-weighted, edge-weighted, and have nonnegative edge lengths. 
\begin{definition}[Two-Sided Balanced Separator]\label{defn:sep}
    Given a graph $G=(V,E)$ with vertex weights $W$, a subgraph $S$ of $G$ is a balanced separator if $V$ can be partitioned into three sets $A,B,V(S)$ such that 
    \begin{enumerate}
        \item \textbf{Separation}: $E(A,B)=\emptyset$.
        \item \textbf{Balanced}: $\sum_{v\in A}w(v)\leq 2W(G)/3$ and $\sum_{v\in B}w(v)\leq 2W(G)/3$.
    \end{enumerate}
\end{definition}
\noindent Our main contribution of this section is a length-constrained version of the above.
\begin{definition}[Length-Constrained Separator]\label{defn:lc-sep}
    Given a graph $G$ with edge lengths $l$, edge weights $w$, and vertex weights $W$, and $h\geq 1$, a subgraph $P$ of $G$ is called an $h$-length separator if
    \begin{enumerate}
        \item \textbf{Balanced}: $P$ is a balanced separator.
        \item \textbf{Length-constrained}: $P$ is a path in $G$ of length at most $O(h)$ and weight at most $O\left(D^{(h)}(G)\right)$.
        \item \textbf{Inside/outside}: Adding an edge connecting the endpoints of $P$ creates a cycle $C$ such that all vertices of $A$ (resp. $B$) lie on the inside (resp. outside) of $C$.
    \end{enumerate}
\end{definition}
\noindent To prove that length-constrained separators exist (and that we can efficiently compute them), we use a classic planar separator result. 
\begin{theorem}[Cycle Separator; see e.g.\ Lemma 2 of
\cite{lipton1979separator}]\label{thm:cycle-separator}
    Given a connected triangulated planar graph $G=(V,E)$ with vertex weights $W$ and a spanning tree $T$ of $G$, one can compute a balanced separator $C$ of $G$ in linear time such that $C=P_1\cup P_2\cup e_C$ is a fundamental cycle of $T$ (that is, $e_C=(u,v)$ and $P_1,P_2$ are two simple edge-disjoint paths between $u,w$ and $w,v$ respectively in $T$ where $u,v,w\in V$) and all vertices of $A$ (resp. $B$) lie on the inside (resp. outside) of $C$.
\end{theorem}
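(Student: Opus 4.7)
The plan is to adapt Lipton and Tarjan's classical cycle separator argument: I would find a balanced fundamental cycle of the given spanning tree $T$ by a centroid search on the planar dual. Since $G$ is triangulated, $|E|=3n-6$, and since $T$ has $n-1$ edges, there are exactly $2n-5$ non-tree edges. For each non-tree edge $e=(u,v)$, the unique $u$-$v$ path in $T$ together with $e$ forms a simple cycle $C_e = P_1 \cup P_2 \cup \{e\}$, where $w$ is the $T$-LCA of $u$ and $v$, $P_1$ is the $u$-$w$ path in $T$, and $P_2$ is the $w$-$v$ path in $T$. By the Jordan Curve Theorem applied to the given planar embedding, $V \setminus V(C_e)$ partitions canonically into interior $A_e$ and exterior $B_e$, which immediately yields the inside/outside condition required by the theorem.

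The key reduction I would use is planar duality: the dual edges of the non-tree edges of $G$ form a spanning tree $T^*$ of the dual graph $G^*$ (a standard fact for any planar graph equipped with a primal spanning tree). Each face of $G$ is a node of $T^*$, and deleting the dual edge $e^*$ corresponding to a non-tree edge $e$ splits $T^*$ into two subtrees whose face-sets are precisely the faces inside $C_e$ and those outside $C_e$. I would then transfer the primal vertex weights to face weights: root $T$ arbitrarily, and for each non-root vertex $v$ charge $W(v)$ to a canonically chosen face (say, the face to the left of the parent tree edge of $v$). This produces a weighting on $T^*$ such that for any non-tree edge $e$ the total weight of one subtree of $T^* \setminus \{e^*\}$ equals $W(A_e)$ up to the weight of vertices lying on $C_e$ themselves, which are part of the separator $V(S)$ rather than $A$ or $B$.

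With this weighted $T^*$ in hand, the problem reduces to a standard weighted tree centroid search: find an edge $e^*$ of $T^*$ whose removal splits $T^*$ into two pieces each of total weight at most $2W(G)/3$. Such an edge exists in any weighted tree and can be identified in $O(n)$ time by a single pass that computes subtree sums from an arbitrary root followed by a walk toward the heavier child. The corresponding non-tree edge $e_C$ of $G$, together with its tree paths $P_1, P_2$, forms the desired balanced fundamental cycle, and the earlier Jordan-curve observation provides the inside/outside conclusion.

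The main obstacle I foresee is making the charging from primal vertices to dual faces sufficiently clean so that the centroid of $T^*$ genuinely certifies the $2/3$ balance on $W(A_e)$ and $W(B_e)$ rather than on face weights alone; this requires a short planar-topology argument to verify that the ``left of parent edge'' rule is well-defined and assigns each non-root vertex to exactly one face, and that boundary vertices on $C_e$ correctly end up in $V(S)$. Once this bookkeeping is done, the linear-time guarantee follows from standard linear-time algorithms for computing the planar dual, rooting $T$, computing subtree sums, and finding a tree centroid.
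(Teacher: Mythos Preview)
First, note that the paper does not give its own proof of this statement: it is quoted as a classical result of Lipton and Tarjan and used as a black box, so there is no in-paper argument to compare against.

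On the substance of your proposal: the high-level plan via the interdigitating dual tree $T^*$ formed by the non-tree edges is a legitimate route, and your charging rule (each non-root vertex to the face left of its parent tree edge) is well-defined and has the monotonicity you need, namely that a vertex strictly inside $C_e$ is always charged to a face inside $C_e$, and likewise for outside, so that $W(A_e)\le W'(F_{\text{in}})$ and $W(B_e)\le W'(F_{\text{out}})$.

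The gap is the assertion ``Such an edge exists in any weighted tree.'' This is false: a three-vertex path with all weight on the middle vertex has no edge whose removal leaves both sides with weight at most $2W/3$. In your setting, a single triangular face can absorb the weight of up to two primal vertices under your charging, and nothing prevents those two vertices from together carrying more than $2W(G)/3$; then no edge of $T^*$ is $2/3$-balanced on face weights, and your ``walk toward the heavier child'' halts at an edge where one side still exceeds $2W(G)/3$. The $2/3$ guarantee is \emph{not} a black-box tree-centroid fact: it genuinely depends on the interaction between the face weighting and the vertices that land on the cycle and hence in $V(S)$ rather than in $A$ or $B$. Lipton and Tarjan's original argument handles precisely this by walking triangle-by-triangle into the heavy side and using that each step moves at most one vertex among the statuses ``strictly inside,'' ``on the cycle,'' and ``strictly outside''; when the walk overshoots, the offending heavy vertex sits on the cycle and counts toward neither side. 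You flag this as the main obstacle in your final paragraph, but the body of the proposal treats the edge-centroid step as already settled and calls the remaining work ``bookkeeping,'' which undersells where the actual content of the proof lies.
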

\noindent The second ingredient we need to compute length-constrained separators is a \textit{mixture metric}. 
\begin{definition}[Mixture Metric]\label{defn:mixturemetric}
    Given a graph $G=(V,E)$ with edge lengths $l$, edge weights $w$, 
    an edge $e\in E$, and $h\geq1$ we let the $h$-mixture weight of $e$ be 
    $$ \wmix(e):=\frac{D^{(h)}(G)\cdot l(e)}{h}+w(e) $$
    and the $h$-mixture metric $\dmix(u,v)$ for every $u,v$ pair in $V$ is the shortest path distance metric where we use the mixture weights as the edge weights.
\end{definition}
\noindent Similar mixture metrics were used previously in \cite{meyerson2008cost,haeupler2021tree} to obtain the properties that we prove next. In particular, the mixture metric provides an easy way to compute paths that are simultaneously low length and low weight:
\begin{lemma}\label{lemma:mixture}
    Given a graph $G=(V,E)$ with edge lengths $l$, edge weights $w$, 
    and $h\geq 1$ such that $D(G)\leq h$ we have for any shortest $u,v$ path $P$ under the $h$-mixture metric it holds that 
    $ l(P)\leq 2h $
    and 
    $ w(P)\leq 2D^{(h)}(G)$.
\end{lemma}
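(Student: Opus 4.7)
The plan is to use the $h$-length-constrained shortest $u,v$ path as a witness and then exploit the nonnegativity of both summands defining $\wmix$.

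First, I would invoke the hypothesis $D(G)\leq h$ to guarantee that for every pair $u,v$ there is some $u,v$-path of length at most $h$, so $d^{(h)}(u,v) \leq D^{(h)}(G) < \infty$. Let $P^*$ be a $u,v$-path achieving $d^{(h)}(u,v)$, so by definition $l(P^*) \leq h$ and $w(P^*) \leq D^{(h)}(G)$. Plugging this into the definition of mixture weight gives
\[
\wmix(P^*) \;=\; \frac{D^{(h)}(G)}{h}\,l(P^*) + w(P^*) \;\leq\; D^{(h)}(G) + D^{(h)}(G) \;=\; 2D^{(h)}(G).
\]

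Next, since $P$ is a shortest $u,v$ path under the mixture metric, $\wmix(P)\leq \wmix(P^*) \leq 2D^{(h)}(G)$. Now I would simply split the inequality into its two nonnegative summands: because $\tfrac{D^{(h)}(G)}{h}\,l(P)\geq 0$ and $w(P)\geq 0$, each is individually bounded by $2D^{(h)}(G)$. This yields $w(P)\leq 2D^{(h)}(G)$ immediately, and dividing the length term through by the coefficient $D^{(h)}(G)/h$ gives $l(P)\leq 2h$ whenever $D^{(h)}(G)>0$.

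The only obstacle I anticipate is the degenerate case $D^{(h)}(G)=0$, where the coefficient on $l$ vanishes and $\wmix$ reduces to $w$, so a mixture-shortest path could in principle have arbitrary length. I would handle this by tie-breaking: among paths minimizing $\wmix$ one selects a lexicographically smallest one that also minimizes $l$, which a standard Dijkstra implementation can produce. In this case $D^{(h)}(G)=0$ means every pair admits a weight-$0$ path of length at most $h$, and the hypothesis $D(G)\leq h$ further ensures there is a $u,v$-path of length at most $h$; the tie-broken shortest mixture path then has $l(P)\leq h \leq 2h$ as desired. This edge case is the only subtle point; the main argument is just one line of algebra on the definition of $\wmix$.
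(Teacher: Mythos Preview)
Your proof is correct and follows essentially the same approach as the paper: bound $\wmix(P)$ by the mixture weight of an $h$-length-constrained shortest path, then split the resulting inequality $\wmix(P)\le 2D^{(h)}(G)$ into its two nonnegative summands. Your explicit treatment of the degenerate case $D^{(h)}(G)=0$ via tie-breaking is a detail the paper glosses over; note that the tie-breaking technically weakens the ``for any shortest path'' quantifier in the statement, but this is harmless for the lemma's later use in building a shortest-path tree.
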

\begin{proof}
    Observe that for any pair of vertices $u,v$, there exists a $u,v$ path $P'$ with length at most $h$ and weight at most $D^{(h)}(G)$. Let $P$ be a shortest path between from $u$ to $v$ under the $h$-mixture metric. Therefore, we have
    \begin{align*}
        \wmix(P) &\leq \wmix(P') \\
        &=\sum_{e\in P'} \frac{D^{(h)}(G)\cdot l(e)}{h}+w(e) \notag\\
        &= \frac{D^{(h)}(G)\cdot l(P
        )}{h}+\sum_{e\in P'}w(e) \notag\\
        &\leq D^{(h)}(G) + D^{(h)}(G)
         \\
        &= 2D^{(h)}(G) 
    \end{align*}
    where the first line is because $P,P'$ are both $u,v$ paths and $P$ is shortest path under the mixture metric, and the fourth line is because $l(P')\leq h,w(P')\leq D^{(h)}(G)$.  Note that for $P$ we still have
    \begin{align*}
        \wmix(P) &= \frac{D^{(h)}(G)\cdot l(P)}{h}+\sum_{e\in P}w(e)
    \end{align*}
    Now if $l(P')>2h$ then the first term in the equation above is strictly greater than $2D^{(h)}(G)$, and if $w(P')>2D^{(h)}(G)$ then the second term is strictly greater than $2D^{(h)}(G)$. Therefore, if $\wmix(P)\leq 2D^{(h)}(G)$ then $l(P')\leq 2h$ and $w(P)\leq 2D^{(h)}(G)$. Since $\wmix(P)\leq 2D^{(h)}(G)$ it follows that $l(P)\leq 2h$ and $w(P)\leq 2D^{(h)}(G)$. 
\end{proof}
\noindent We now have everything we need to find length-constrained separators.
\begin{lemma}[Length-Constrained Separator Existence and Algorithm]\label{lemma:lengthseparator}
    Given a planar graph $G$ with edge lengths $l$, edge weights $w$, vertex weights $W$, and $h\geq 1$, one can find an $h$-length separator $P$ in polynomial time such that $l(P)\leq 4h$, $w(P)\leq 4D^{(h)}(G)$. 
\end{lemma}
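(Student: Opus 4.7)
The plan is to combine the classic cycle separator (\Cref{thm:cycle-separator}) with the mixture metric of \Cref{defn:mixturemetric}. At a high level, the cycle separator produces a fundamental cycle consisting of two tree paths and one closing edge; by choosing the spanning tree to be a shortest-path tree under the mixture metric, each of the two tree paths will be simultaneously short in $l$ and low in $w$ by \Cref{lemma:mixture}.

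Concretely, I would first reduce to the case where $G$ is connected and $D(G) \leq h$, so that $D^{(h)}(G) < \infty$ and the mixture metric is well-defined; the degenerate cases can be handled by working on connected components or by noting that the weight bound becomes vacuous. Next I would triangulate $G$ into a planar triangulated graph $\hat G$ by adding dummy edges of enormous length and weight, chosen large enough that their mixture weight strictly exceeds the mixture weight of any path in $G$ between their endpoints. I would then compute a shortest-path tree $T$ of $\hat G$ under $\dmix$ rooted at an arbitrary vertex $r_0$; the huge mixture weights on dummy edges force $T \subseteq G$. Applying \Cref{thm:cycle-separator} with $\hat G$, $T$, and the given vertex weighting $W$ produces a fundamental cycle $C = P_1 \cup P_2 \cup e_C$, where $P_1$ is the $T$-path from some $u$ to the lowest common ancestor $w$ of the endpoints of $e_C$, and $P_2$ is the $T$-path from $w$ to the other endpoint $v$. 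I would take $P := P_1 \cup P_2$, which is a single $u$-$v$ path lying entirely in $T \subseteq G$; the balanced and inside/outside properties of $P$ then follow immediately from \Cref{thm:cycle-separator}, with $e_C$ serving as the closing edge required in \Cref{defn:lc-sep}.

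For the length and weight bounds, observe that $P_1$ and $P_2$ are each subpaths of root-to-vertex paths in the shortest-path tree $T$, and therefore each is itself a $\dmix$-shortest path between its endpoints in $G$ (since a $\dmix$-shortest path between vertices of $G$ avoids dummy edges). Applying \Cref{lemma:mixture} to each yields $l(P_i) \leq 2h$ and $w(P_i) \leq 2D^{(h)}(G)$, which sum to $l(P) \leq 4h$ and $w(P) \leq 4D^{(h)}(G)$, as desired. The main obstacle I expect is reconciling the triangulation with the mixture-metric analysis: one must check that no dummy edge appears in $T$ and that, even when $e_C$ is itself a dummy edge, the planar cycle $C$ still yields the correct inside/outside partition on $V \setminus V(P)$ guaranteed by \Cref{thm:cycle-separator}. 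A secondary subtlety is that the hypothesis $D(G) \leq h$ of \Cref{lemma:mixture} must be preserved through whatever preprocessing is used to reduce to the non-degenerate case, which should follow because triangulation with high-cost dummy edges does not decrease $D^{(h)}(G)$ or affect $l$ on edges of $G$.
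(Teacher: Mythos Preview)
Your proposal is correct and follows essentially the same argument as the paper: build a shortest-path tree under the mixture metric, invoke the cycle separator (\Cref{thm:cycle-separator}) to get $C = P_1 \cup P_2 \cup e_C$, apply \Cref{lemma:mixture} to each of $P_1,P_2$ separately (each being a subpath of a root-to-leaf path and hence a $\dmix$-shortest path), and concatenate. The one cosmetic difference is the order of operations: the paper computes the shortest-path tree $T$ on $G$ \emph{first} and only then triangulates, so $T$ is automatically a subgraph of $G$ and the dummy-edge bookkeeping you describe (heavy weights to keep $T$ off dummy edges, verifying the inside/outside partition survives) never arises.
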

\begin{proof}
    Our algorithm for finding a length-constrained separator is as follows: we find a shortest path tree $T$ rooted at $r$ under the $h$-mixture metric, triangulate the graph 
    and apply \Cref{thm:cycle-separator} using $T$ to obtain a cycle $C=P_1\cup P_2\cup e_C$. Note that since we triangulate after fixing the tree $T$, it follows that all edges $P_1\cup P_2$ are edges in $T$ (and $e_C$ might not be an edge in the original graph). We can compute the mixture weights by running $O\left(n^2\right)$ shortest paths computations to find $D^{(h)}(G)$: for every pair $u,v\in V$ we compute the shortest $u,v$ path under $w$ restricted to paths with length at most $h$. Then this is a polynomial time algorithm. 
    
    Since $T$ is a shortest paths tree under the mixture metric, any simple path $P$ of $T$ must satisfy $l(P)\leq 2h,w(P)\leq2D^{(h)}(G)$ by \Cref{lemma:mixture}, so this holds for $P_1,P_2$. We can concatenate $P_1,P_2$ to obtain a single path $P$ and the lemma statement follows.
\end{proof}

\noindent The distinction between $P$ and $C$ is important. In particular, the edge $e_C$ may not even exist in the original input graph, so if our algorithm is based on buying edges of these separators then it definitely should avoid buying $e_C$. However, $C$ nicely defines the two sides of the separator, so it is also important to define it explicitly. For the remainder of the paper, we assume all separators are length-constrained.

\section{Length-Constrained Divisions and Hierarchies}
It will be convenient for us to use a generalization of our length-constrained separators to decompose our input graph into small enough subgraphs. \textit{Divisions} of graphs are one path forward for this, having been extensively studied and applied in many graph algorithms (see e.g.\ \cite{klein2013structured}). 

\subsection{Length-Constrained Divisions}
We start with the basics:
\begin{definition}[Region and Boundary]\label{defn:region-boundary}
    A \textbf{region} of a graph $G$ is an edge-induced subgraph $H$ of $G$. A vertex in a region $H$ of $G$ is a \textbf{boundary vertex} of $H$ if there exists an edge $e=(u,v)\in E(G)$ such that $e\not\in H$. We let $\bound_G(H)=\left\{v\in V(H):\exists\text{ }e=(u,v)\in E(G)\text{ s.t. }e\not\in E(H)\right\}$ be the set of boundary vertices of a region $H$ of $G$.
\end{definition}
\noindent We sprinkle in a length-constrained twist to the above definition.
\begin{definition}[Length-Constrained Region]\label{defn:lc-region}
    An \textbf{$h$-length region} of a graph $G=(V,E)$ with edge lengths $l$ is a pair $\left(H, L_H\right)$ where $H$ is a region of $G$ and $L_H$ is a (not necessarily induced) subgraph of $H$ with $V\left(L_H\right)=\bound_G(H)$ satisfying
    \begin{enumerate}  
        \item \textbf{\textit{Length-constrained boundary}}:  $l(L_H)\leq O(h)$.
        \item \textbf{\textit{Few components in boundary}}: $L_H$ contains at most $O(1)$ connected components. 
        \item \textbf{\textit{Separated boundary}}: for any two connected components of $L_H$, there is no path from one component to the other in $G-\left(E(H)\setminus E\left(L_H\right)\right)$.
    \end{enumerate}
\end{definition}

\noindent Given a length-constrained region $\left(H,L_H\right)$ we refer to $H$ as the \textit{region} and $L_H$ as the \textit{boundary}. Note that $(G,\emptyset)$ is a length-constrained region of $G$.
\begin{definition}[Length-Constrained Division]\label{defn:LC-division}
     An \textbf{$h$-length $\alpha$-division} of an $h$-length region $\left(H,L_H\right)$ of a graph $G=(V,E)$ with edge lengths $l$, edge weights $w$, and vertex weights $W$ is a set 
     $$\mathcal{H}=\left\{\left(H_1,L_1\right),\left(H_2,L_2\right),\dots\right\}$$
     of $h$-length regions of $H$ satisfying 
    \begin{enumerate}
        \item \textbf{\textit{Complete}}: any $e\not\in E\left(L_H\right)\cup \left(\bigcup_{i}E(L_i)\right)$ appears in exactly $1$ region of $\mathcal{H}$ and $\bigcup_{i}E\left(H_i\right)=E(H)$.
        \item \textbf{\textit{$\alpha$-divided}}: $|\mathcal{H}|=O(\alpha)$ and each region $H_i$ of $\mathcal{H}$ satisfies $$W\left(V\left(H_i\right)\setminus\bound_H\left(H_i\right)\right) \leq \frac{W\left(V(H)\setminus\bound_G(H)\right)}{\alpha}$$ 
        that is, the total weight of non-boundary vertices is at most a $1/\alpha$ fraction of the total weight of non-boundary vertices of $H$.
        \item \textbf{\textit{Light boundary}}: $w\left(\left(\bigcup_{i}L_{i}\right)\setminus L_H\right)\leq  O\left(\alpha\cdot D^{(h)}(H)\right)$. 
    \end{enumerate}
\end{definition}
\noindent \cite{klein2013structured} gave algorithms for computing $\alpha$-divisions (they say $r$ instead of $\alpha$) with few holes, which are another type of division satisfying nice properties. Unfortunately, none of their nice properties seem helpful for length-constrained MST since they aren't motivated by length constraints, similarly to how classic planar separators aren't helpful for length-constrained MST.

To compute our length constrained divisions, we use similar ideas from the $O(n\log n)$-time algorithm (Algorithm 1) in \cite{klein2013structured} in repeatedly computing separators and reweighting vertices in order to balance other properties of the beyond total vertex weight. However, we require many modifications to cope with length constraints. Our algorithm (unsurprisingly) uses length-constrained separators instead, and introduces markedly different vertex-weighting regimes than that of \cite{klein2013structured} to achieve the properties of length-constrained regions and divisions described in the previous definitions. Additionally, we need to transform our subinstances in the following manner before recursing on both sides of our separators:
\begin{definition}[Boundary Flattening]\label{defn:h+}
    Given an $h$-length region $\left(H,L_H\right)$ of a graph $G=(V,E)$ with edge lengths $l$, and edge weights $w$ we let $H^{0}$ be the result of setting edge $e$'s length and weight to $0$ for each $e\in E(L_H)$.
\end{definition}
\noindent Note that given the length-constrained region $(G,\emptyset)$ of graph $G$, we have $G^{0}=G$ since there is no boundary. Flattening the boundary is an important step, as it will allow us to relate the $h$-length-constrained diameter of smaller sub-regions with that of the original region/graph.
\begin{lemma}[Flattening Doesn't Increase Diameter]\label{lemma:shrinking-diam}
    Given an $h$-length region $\left(H,L_H\right)$ of a graph $G$ with edge lengths $l$, edge weights $w$, and vertex weights $W$, we have that $D^{(h)}(H^{0})\leq D^{(h)}(G)$. 
\end{lemma}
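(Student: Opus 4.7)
The plan is to show that for every pair $u,v \in V(H) = V(H^0)$ we have $d^{(h)}_{H^0}(u,v) \leq D^{(h)}(G)$; the lemma then follows by taking the maximum. The bound is trivial if $D^{(h)}(G) = \infty$, so fix $u,v \in V(H)$ and let $P$ be an $h$-length shortest $u,v$ path in $G$, so that $l_G(P) \leq h$ and $w_G(P) = d^{(h)}_G(u,v) \leq D^{(h)}(G)$. I will construct a $u,v$ walk $P'$ in $H^0$ satisfying $l_{H^0}(P') \leq h$ and $w_{H^0}(P') \leq D^{(h)}(G)$; extracting a simple subpath from $P'$ finishes the job since edge lengths and weights are nonnegative.

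First I would decompose $P$ into a sequence of maximal subpaths alternating between \emph{inside segments} (using only edges in $E(H)$) and \emph{outside segments} (using only edges in $E(G) \setminus E(H)$). Inside segments already lie in $H^0$, and their length and weight in $H^0$ are at most their length and weight in $G$, since edges of $E(L_H) \subseteq E(H)$ have been zeroed out in $H^0$. So the real work is in replacing each outside segment by something cheap in $H^0$.

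The key observation is that each outside segment $Q$ with endpoints $x,y$ satisfies $x, y \in V(L_H) = \bound_G(H)$: each endpoint is incident to an edge of $Q$ lying in $E(G) \setminus E(H)$, which by definition makes it a boundary vertex of $H$. Moreover, since $E(L_H) \subseteq E(H)$ we have $E(G) \setminus E(H) \subseteq E(G) \setminus (E(H) \setminus E(L_H))$, so $Q$ is an $x$-to-$y$ path in the graph $G - (E(H) \setminus E(L_H))$. The separated boundary property of length-constrained regions (\Cref{defn:lc-region}) forbids any such path between different connected components of $L_H$, so $x$ and $y$ must lie in the same connected component of $L_H$.

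I would then replace each outside segment $Q$ by an $x$-to-$y$ walk $Q'$ within the shared component of $L_H$, which exists by connectivity. Every edge of $Q'$ lies in $E(L_H)$, so $l_{H^0}(Q') = w_{H^0}(Q') = 0$. Concatenating all inside segments (unchanged) with these zero-length, zero-weight replacements yields a $u,v$ walk $P'$ in $H^0$ with $l_{H^0}(P') \leq l_G(P) \leq h$ and $w_{H^0}(P') \leq w_G(P) \leq D^{(h)}(G)$, and extracting a simple subpath gives the desired $h$-length path in $H^0$. I do not anticipate any significant obstacle: the entire argument follows cleanly from the three properties defining an $h$-length region, with the separated boundary condition being the crucial ingredient that lets every excursion of $P$ outside $H$ be shortcut through $L_H$ at no cost in $H^0$.
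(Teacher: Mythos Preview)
Your proposal is correct and takes essentially the same approach as the paper: decompose an $h$-length shortest path in $G$ into alternating segments, then use the separated boundary property to replace each segment leaving $H$ by a zero-cost path inside $L_H$. The only cosmetic difference is that the paper partitions according to $E(H)\setminus E(L_H)$ versus its complement, whereas you partition according to $E(H)$ versus $E(G)\setminus E(H)$; both work and yield the same conclusion.
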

\begin{proof}
    We show that for any $u,v\in V(H)$, the minimum weight $h$-length $u,v$ path in $H^{0}$ never has larger weight than that of $G$. Fix some $u,v\in V(H)$ and let $P$ be the minimum-weight $h$-length $u,v$ path in $G$. If $P$ only traverses through $H$ then $P$ is also a path in $H$ with $w_{H^{0}}(P)\leq w(P)$ and we are done. 
    
    Otherwise, $P$ can be broken into subpaths $p_1,p_2,\dots,p_k$ (i.e.\ a partition of its edges) such that each part is a contiguous subpath of $P$ and the parts $p_1,p_2,\dots,p_k$ alternate between containing only edges in $E(H)\setminus E\left(L_H\right)$ and containing only edges not in $E(H)\setminus E\left(L_H\right)$. For each $i$ let $p'_i$ be $p_i$ if $p_i$ is a part containing edges in $H$. Otherwise, let $p'_i$ be the $0$-weight path containing only edges in $L_H$ between $p_i$'s endpoints; this is always possible by the separated boundary property of length-constrained regions (see \Cref{defn:lc-region}). Indeed, if there is no such path in $H^{0}$, then $p_i$'s endpoints must lie on separate connected components of $L_H$. By separated boundary, $p_i$ necessarily contains an edge of $E(H)\setminus E(L_{H})$ to connect the two connected components. Then $P'=\bigcup_{i\in[k]} p'_i$ is a possible $h$-length $u,v$ path in $H^{0}$. Clearly $w(P')\leq w(P)$. Therefore we have $D^{(h)}(H^{0})\leq D^{(h)}(G)$.
\end{proof}
\noindent We are ready to show how to compute a length-constrained division.
\begin{lemma}[Length-Constrained Division Existence and Algorithm]\label{lemma:lc-div}
    Given an $h$-length region $\left(H,L_H\right)$ of a planar graph $G=(V,E)$ with edge lengths $l$, edge weights $w$, vertex weights $W$, and $\alpha>0$, we can compute in polynomial time an $h$-length $\alpha$-division $\mathcal{H}$ of $H$. 
\end{lemma}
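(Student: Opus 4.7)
The plan is to produce the $h$-length $\alpha$-division by recursively applying length-constrained separators (Lemma \ref{lemma:lengthseparator}), inspired by the $r$-division algorithm of \cite{klein2013structured} but with two modifications: we use length-constrained rather than ordinary separators, and we track boundary length in place of ``holes''. At each recursive call on a subregion $(H_i, L_{H_i})$, I would first flatten its boundary to obtain $H_i^0$ (Definition \ref{defn:h+}), which by Lemma \ref{lemma:shrinking-diam} gives $D^{(h)}(H_i^0) \leq D^{(h)}(H)$, so any separator we find has weight $O(D^{(h)}(H))$. I would then alternate between two kinds of splits: a \emph{mass split}, which invokes Lemma \ref{lemma:lengthseparator} with a vertex weighting equal to $W$ restricted to non-boundary vertices, and a \emph{length split}, which invokes it with a vertex weighting derived from each boundary vertex's share of $l(L_{H_i})$. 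A length split is used whenever the current boundary length exceeds $c\cdot h$ for an appropriate constant $c$, and a mass split otherwise; the recursion halts on a subregion once its non-boundary weight is at most $W(V(H)\setminus\bound_G(H))/\alpha$. Each split cuts along the cycle induced by the separator $P$, placing $P$ into both children's boundaries and distributing the inherited boundary pieces on each side.

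Two invariants drive the analysis. For size, each mass split produces children with non-boundary weight at most $\tfrac{2}{3}$ of the parent's, so by the standard unbalanced-recursion charging (leaves having non-boundary weight at least $\Omega(W/\alpha)$ just before the final split) the number of leaves is $O(\alpha)$. For boundary length, each mass split adds at most $O(h)$ to a child's boundary, while each length split shrinks a child's inherited boundary length $\ell$ to at most $\tfrac{2}{3}\ell + O(h)$; the resulting recurrence $\ell_{i+1} \leq \tfrac{2}{3}\ell_i + O(h)$ has fixed point $O(h)$, and combined with the initial bound $l(L_H) = O(h)$ from $(H, L_H)$ being an $h$-length region, every leaf satisfies $l(L_i) = O(h)$. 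Because length splits are applied at most a constant number of times per mass split, the total number of internal nodes is still $O(\alpha)$, each corresponding to a separator of weight $O(D^{(h)}(H))$, giving the light-boundary bound $O(\alpha\cdot D^{(h)}(H))$ on the total weight of newly-created boundary edges.

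Completeness is immediate from the edge-disjoint recursive partition (with $P$ placed into both children's boundaries), which also gives the $\alpha$-divided property via the stopping condition. The separated-boundary property of each leaf is inherited from the inside/outside guarantee of length-constrained separators (Definition \ref{defn:lc-sep}): any path between two components of $L_i$ in $G - (E(H_i) \setminus E(L_i))$ would have to cross some previously added separator, which is itself an edge of $L_i$. Polynomial running time is clear since the recursion has $O(\alpha) \leq \poly(n)$ nodes and each performs only polynomial work via Lemmas \ref{lemma:mixture} and \ref{lemma:lengthseparator}.

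I expect the hardest step to be establishing the few-components-in-boundary requirement of Definition \ref{defn:lc-region}. A single separator $P$ is one path, but a child's boundary is $P$ together with the pieces of its parent's boundary on its side of the cycle, and a priori this component count could grow over the recursion. I plan to resolve this by exploiting the planar embedding of the cycle $C = P \cup \{e_C\}$: the cycle meets the existing boundary in a structured way in the plane, and either by choosing $P$'s endpoints to attach to an existing boundary component, or by augmenting $L_i$ with $O(1)$ zero-length, zero-weight connector paths through $H_i$ to merge components, we can maintain $O(1)$ components per region across all levels without disturbing the $\alpha$-divided, light-boundary, or separated-boundary properties.
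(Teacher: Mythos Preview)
Your overall plan mirrors the paper's closely: recursive length-constrained separators with alternating vertex weightings, boundary flattening at each call to keep $D^{(h)}$ bounded via Lemma~\ref{lemma:shrinking-diam}, and a fixed-point recurrence for the $O(h)$ boundary-length invariant. The light-boundary and $\alpha$-divided arguments you sketch are essentially those of the paper.

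There is, however, a genuine gap at exactly the point you flag as hardest: the few-components property. Neither of your proposed fixes works. You cannot choose the endpoints of $P$ to attach to an existing boundary component: Lemma~\ref{lemma:lengthseparator} returns a path determined by the cycle separator of the mixture-metric shortest-path tree, and gives no control over where its endpoints land relative to the current boundary. Your fallback of ``augmenting $L_i$ with $O(1)$ zero-length, zero-weight connector paths through $H_i$'' is not well-defined: such paths are made of actual edges of $H_i$ with their real lengths and weights (flattening zeros out only edges already in $L_i$, not arbitrary edges of $H_i$), so adding them could blow up both $l(L_i)$ and the light-boundary total. Worse, merging components of $L_i$ by paths interior to $H_i$ directly contradicts the separated-boundary property, which asserts precisely that distinct components of $L_i$ cannot be joined without using an edge of $E(H_i)\setminus E(L_i)$.

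The paper resolves this with a \emph{third} separator type, cycling $i\bmod 3$ through mass, component-count, and boundary-length weightings. For the component step it contracts each connected component of the current boundary to a single vertex of weight $1$ (all other vertices weight $0$), invokes Lemma~\ref{lemma:lengthseparator}, and uncontracts before recursing. Each separator adds at most one new boundary component, while the component step reduces the count by a $2/3$ factor; the same fixed-point recurrence you apply to boundary length then keeps the component count $O(1)$ throughout. With this third alternation inserted, the rest of your argument goes through essentially as written.
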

\begin{proof}
    We compute an $h$-length separator $P=P_1\cup P_2$ and cycle $C=P\cup e_C$ of $H$ that separates $V(H)$ into $A,B,V(C)$ using \Cref{lemma:lengthseparator}. Letting $H_A=H\left[A\cup V(C)\right],H_B=H\left[B\cup V(C)\right]$, we then recurse on the two flattened regions 
    \begin{align*}
        &\left(H_A^{0}, P\cup\left(E\left(H_A\right)\cap E\left(L_H\right)\right)\right), \\
        &\left(H^{0}_B,P\cup\left(E\left(H_B\right)\cap E\left(L_H\right)\right)\right)
    \end{align*}
    for $\frac{\log\alpha}{\log(3/2)}=O(\log\alpha)$ levels. See \Cref{fig:separate-regions}. However, we alternate between different node weighting regimes for each separator computation. Letting $i$ be the current recursive depth and $\hat{H}$ be the current region, we do the following if
    \begin{enumerate}
        \item $i\mod3=0$: we balance non-boundary vertices of $\hat{H}$ by giving them their weight under $W$.
        \item $i\mod3=1$: we balance connected components of $L_{\hat{H}}$ by contracting each connected component of $L_{\hat{H}}$ into a single vertex and giving those vertices weight $1$ and giving the rest of the vertices weight $0$. However, we uncontract these connected components after computing the separator and before recursing.   
        \item $i\mod3=2$: we balance boundary \textit{lengths} by assigning each vertex $v\in V\left(L_{\hat{H}}\right)$ weight $\phi(v)=\sum_{e\in L_{\hat{H}},
        e\ni v} l(e)$ and giving the rest of the vertices weight $0$.
    \end{enumerate}
    In other words, we alternate between separating the weight of non-boundary vertices, number of connected components of the boundary, and boundary lengths for each resulting region, all using $h$-length separators. We let the regions computed at the bottom of the recursive tree be the regions of $\mathcal{H}=\left\{\left(H_1,L_1\right),\left(H_2,L_2\right),\dots\right\}$. 

    We show that each $\left(H_i,L_i\right)\in\mathcal{H}$ is an $h$-length region of $H$. First note that when we recurse on the two regions, a vertex in either region is a boundary vertex if it is contained in an edge of both regions. This is exactly a vertex belonging to the separator $P$ we took to get these two regions, and we add this to the boundary of both regions. So $V\left(L_i\right)=\bound_H\left(H_i\right)$. We now prove that all three properties in \Cref{defn:lc-region} hold:
    \begin{enumerate}
        \item \textbf{Length-constrained.} Let the potential of a boundary $L_{\hat{H}}$ be $\sum_{v\in V(L_{\hat{H}})}\phi(v)$. So the potential of $L_{\hat{H}}$ is at most $2\cdot O(h)=O(h)$ initially by the assumption that $H$ is an $h$-length region and the length-constrained boundary property. Observe that for each separator we take we can add at most $4h$ to the potential by \Cref{lemma:lengthseparator}. Meanwhile, every third separator we take (when $i\mod 3=2$) reduces the potential by a multiplicative $2/3$ by our node-weighting and the balanced property in \Cref{defn:sep} guaranteed by \Cref{lemma:lengthseparator}. So we repeatedly add $12h$ to the potential and multiply it by $2/3$, and solving the recurrence gives that the added potential is at most $24h$ after $O(\log\alpha)$ levels. Then 
        $$l\left(L_{\hat{H}}\right)< 2\cdot l\left(L_{\hat{H}}\right)=\sum_{v\in V\left(L_{\hat{H}}\right)}\phi(v) \leq O(h)$$
        and this holds for any $L_i$.
        \item \textbf{Few components.} The number of connected components in $L_{\hat{H}}$ is initially at most $O(1)$ by the assumption that $H$ is an $h$-length region and the few components property. Each separator we take can add at most one connected component, while every third separator (when $i\mod3=1)$ reduces the number of connected commponents by a multiplicative $2/3$ by our node-weighting and the balanced property of \Cref{defn:sep} guaranteed by \Cref{lemma:lengthseparator}. So we repeatedly add $1$ connected component and multiply it by $2/3$, and solving the recurrence gives that there are at most $O(1)$ connected components in $L_{\hat{H}}$ after $O(\log\alpha)$ levels and the same holds for any $L_i$.
        \item \textbf{Separated.} Any two connected components $J,K$ of $L_i$ must have belonged to different separators that we took since a separator is a connected subgraph. If there some path between $J,K$ using no edge in $E\left(H_i\right)\setminus L_i$, then there must be an edge with an endpoint in $J$ and an endpoint outside of $H$, contradicting the separation property in \Cref{defn:sep} guaranteed by \Cref{lemma:lengthseparator}. Then any path between $J,K$ in $H$ must use an edge in $E\left(H_i\right)\setminus E\left(L_i\right)$.
    \end{enumerate}
    Now we show that $\mathcal{H}$ is an $h$-length $\alpha$-division of $H$. 
    \begin{enumerate}
        \item \textbf{Complete.} Given a separator $P$ and cycle $C=P\cup e_C$ of $H$, any edge in $H$ must either be in $P$, or on the inside/outside of $C$ but not both, so any edge that was never captured by a separator during the recursive process can be in exactly one region. These are exactly the edges not in $L\cup\left(\bigcup_{i} L_i\right)$. By definition, each edge in $L\cup\left(\bigcup_{i} L_i\right)$ is contained in some region, and we are done. 
        \item \textbf{$\alpha$-divided.} The number of regions is $2^{O(\log\alpha)}=O(\alpha)$ since we make two recursive calls for $O(\log\alpha)$ recursive levels. The total node weight of the non-boundary vertices in each region is reduced by a constant fraction (by the balanced property of \Cref{defn:sep} guaranteed by \Cref{lemma:lengthseparator}) $O(\log\alpha)$ times. This is what we need. 
        \item \textbf{Light boundary.} When we recurse on a side $A$ of a separator of region $H$, we always choose the boundary of the child region $H_A$ to be $P\cup \left(H_A\cap L_H\right)$, and by \Cref{lemma:lengthseparator}, the weight of each $P$ is at most $O\left(D^{(h)}(H)\right)$. By \Cref{lemma:shrinking-diam}, we have $D^{(h)}(\hat{H})\leq D^{(h)}(H)$ for any region $\hat{H}\in\mathcal{H}$. Then it follows that $w\left(\left(\bigcup_{i}L_i\right)\setminus L_H\right) \leq O\left(2^{O(\log\alpha)} D^{(h)}(H)\right)=O(\alpha \cdot D^{(h)}(H))$ since we make two recursive calls for $O(\log\alpha)$ recursive levels. 
    \end{enumerate}
    Lastly, a paragraph on runtime. Every step of this algorithm can be done in polynomial time. The recursive tree has depth $O(\log\alpha)\leq O(\log n)$ recursive levels and at most $n$ children since we have a leaf for each of the $O(\alpha)$ regions in the division. Each recursive call is a polynomial time operation by \Cref{lemma:lengthseparator}, so this is a polynomial time algorithm. 
\end{proof}
\begin{figure}[H]
    \centering
    \begin{subfigure}[t]{.25\textwidth}
      \centering
        \includegraphics[width=.8\linewidth]{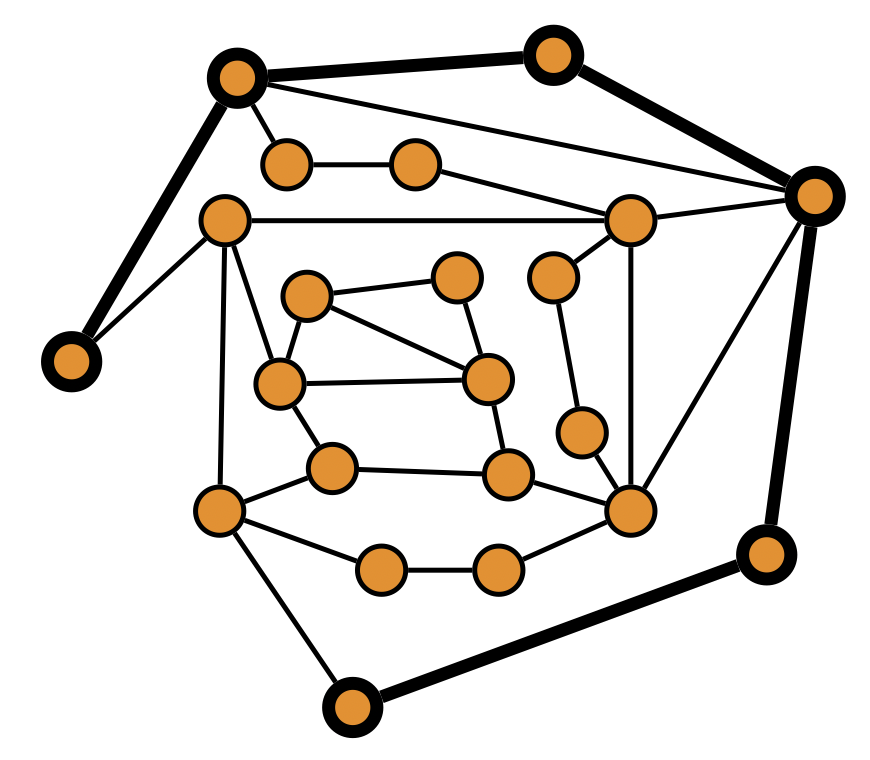}
      \caption{$(H,L_H)$.}\label{subfig:region}
    \end{subfigure}%
    ~
    \begin{subfigure}[t]{.25\textwidth}
      \centering
      \includegraphics[width=.8\linewidth]{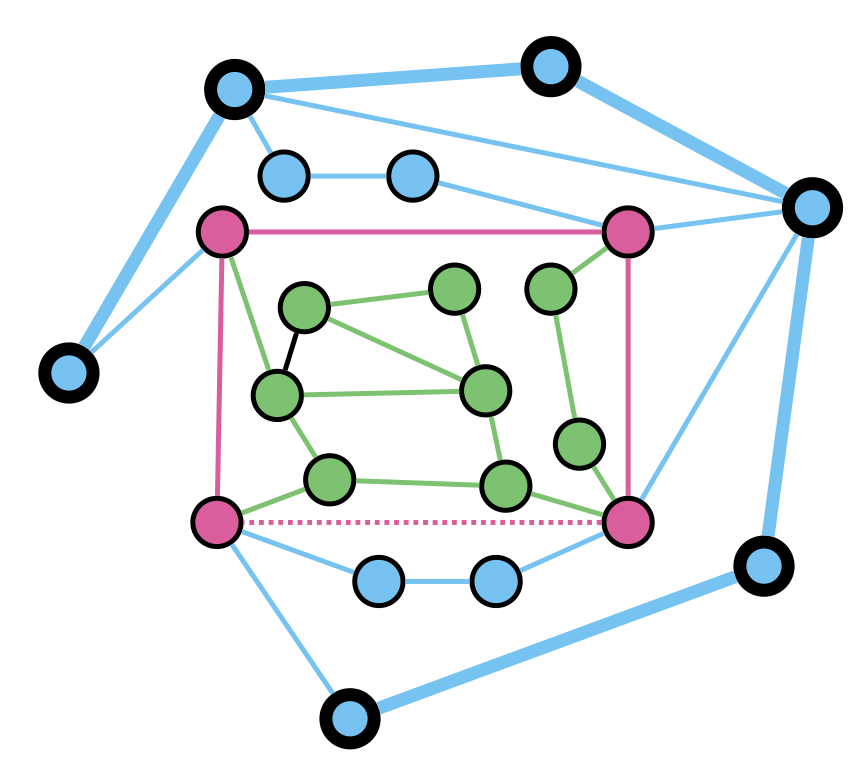}
      \caption{Separator $P$.}\label{subfig:cyclesep-region}
    \end{subfigure}%
    ~
    \begin{subfigure}[t]{.25\textwidth}
      \centering
        \includegraphics[width=.7\linewidth]{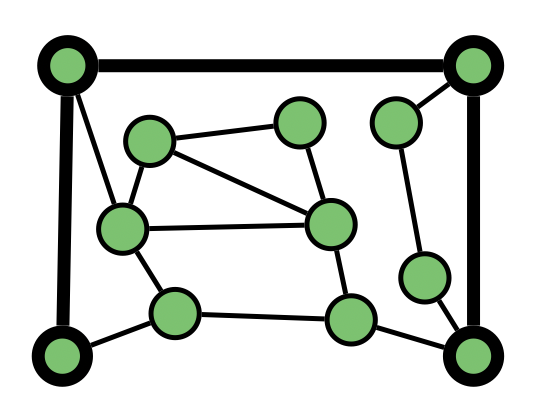}
      \caption{$(H_A,P\cup(H_A\cap L_H))$.}\label{subfig:inside-region}
    \end{subfigure}%
    ~
    \begin{subfigure}[t]{.25\textwidth}
      \centering
      \includegraphics[width=.8\linewidth]{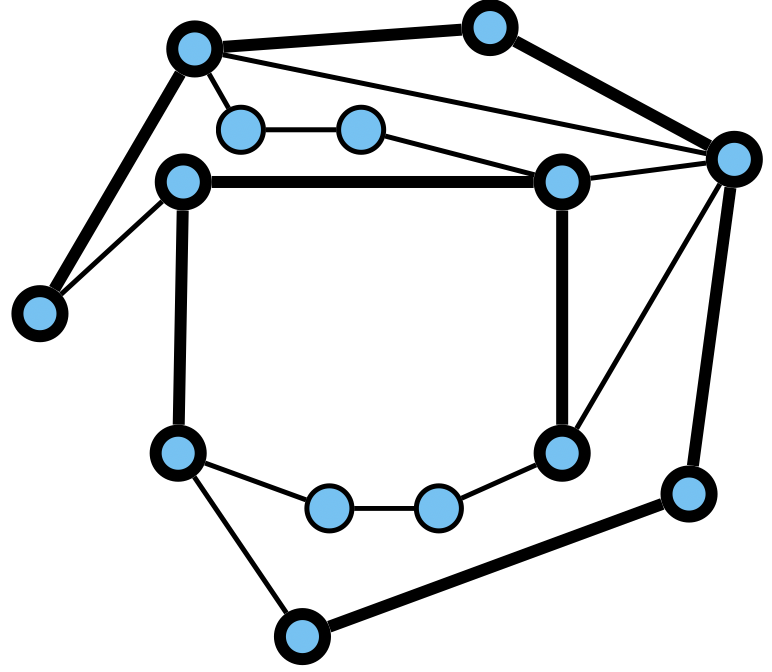}
      \caption{$(H_B,P\cup(H_B\cap L_H))$.}\label{subfig:outside-region}
    \end{subfigure}%
    \caption{A step of our algorithm to compute a length-constrained division. We're given a length-constrained region $(H,L_H)$ where $L_H$ contains the bolded edges/vertices (\Cref{subfig:region}). We take a separator $P$ which contains the pink edges/vertices, $C$ is the union of $P$ and the dotted pink edge $e_C$, and $H_A$ (resp. $H_B$) contains the green (resp. blue) edges/vertices (\Cref{subfig:cyclesep-region}). We recurse on $(H_A,P\cup(H_A\cap L_H)),(H_B,P\cup(H_B\cap L_H))$ where $P\cup(H_A\cap L_H)),P\cup(H_B\cap L_H))$ contain the bolded edges/vertices (\Cref{subfig:inside-region,subfig:outside-region}).}
    \label{fig:separate-regions}
\end{figure}

\noindent
Our main algorithm will recursively compute these length-constrained divisions (that recursively compute length-constrained separators), creating a hierarchy of length-constrained divisions. 

\subsection{Length-Constrained Hierarchies}
A hierarchy is basically a decomposition of a graph, where the leaves are sufficiently small regions.
\begin{definition}[Length-Constrained Division Hierarchy]\label{defn:hierarchy}
    An \textbf{$h$-length $\alpha$-division hierarchy} of a planar graph $G=(V,E)$ with edge lengths $l$, edge weights $w$, and vertex weights $W$ is a rooted tree $\mathcal{T}$ where leaves are $h$-length regions containing no non-boundary vertices and each edge in $E$ appears in some leaf's region. Every node is the union of the regions of its descendants in the tree, and the children of a node $v$ in the tree are the regions in an $h$-length $\alpha$-division $\mathcal{H}_v$.
\end{definition}
\noindent So the root of $\mathcal{T}$ is the length-constrained region $(G,\emptyset)$, and every child node $u$ with parent $v$ in a division hierarchy is associated with an $\alpha$-division of $v^{0}$. We will often refer to a node $\left(H,L_H\right)\in V(\mathcal{T})$ as $H$ for short.

We have the following observation immediately from the definition of division hierarchies. 
\begin{lemma}\label{lemma:hierarchy-tree}
    The depth of an $h$-length $\alpha$-division hierarchy $\mathcal{T}$ of a graph $G=(V,E)$ is at most $\log_{\alpha}W(V)$, and each non-leaf node in the hierarchy has $O(\alpha)$ children. 
\end{lemma}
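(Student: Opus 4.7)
The plan is to verify the two claims separately and essentially directly from the definitions, with the heavy lifting already done by the $\alpha$-divided property of \Cref{defn:LC-division}.

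That each non-leaf has $O(\alpha)$ children is immediate: by \Cref{defn:hierarchy}, the children of a non-leaf $v = (H, L_H)$ are precisely the regions of an $h$-length $\alpha$-division $\mathcal{H}_v$ of $v^{0}$, and the $\alpha$-divided property of \Cref{defn:LC-division} gives $|\mathcal{H}_v| = O(\alpha)$.

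For the depth bound, I would track the potential $\Phi(v) := W\left(V(H) \setminus \bound_G(H)\right)$ attached to each node $v = (H, L_H)$ of $\mathcal{T}$. At the root $(G, \emptyset)$ we have $\bound_G(G) = \emptyset$, hence $\Phi = W(V)$. For a non-leaf $v$ with child $u = (H_i, L_i)$, the $\alpha$-divided property applied to $\mathcal{H}_v$ yields
$$W\left(V(H_i) \setminus \bound_H(H_i)\right) \leq \frac{W\left(V(H) \setminus \bound_G(H)\right)}{\alpha} = \frac{\Phi(v)}{\alpha}.$$
The key small observation is that $\bound_G(H_i) \supseteq \bound_H(H_i)$: any edge in $E(H) \setminus E(H_i)$ witnessing $H$-boundary is also in $E(G) \setminus E(H_i)$ since $E(H) \subseteq E(G)$. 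Consequently $V(H_i) \setminus \bound_G(H_i) \subseteq V(H_i) \setminus \bound_H(H_i)$, which, together with nonnegativity of $W$, gives $\Phi(u) \leq \Phi(v)/\alpha$. Iterating along any root-to-leaf path, a node at depth $k$ satisfies $\Phi \leq W(V)/\alpha^{k}$.

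Finally, since vertex weights are integer-valued, once $W(V)/\alpha^{k} < 1$ we must have $\Phi = 0$, i.e.\ every vertex of $H$ is a boundary vertex, which is exactly the leaf condition in \Cref{defn:hierarchy}. This forces the depth to be at most $\log_\alpha W(V)$. The only mild subtlety worth double-checking is that the boundary-flattening operation $v \mapsto v^{0}$ preserves both the vertex set and the vertex weight function $W$ — it only zeroes edge lengths and weights (see \Cref{defn:h+}) — so the $\alpha$-divided inequality applied to $v^{0}$ transfers unchanged to the potential $\Phi$ measured in the ambient graph $G$.
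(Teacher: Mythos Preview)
Your proof is correct and follows essentially the same approach as the paper's: both invoke the $\alpha$-divided property of \Cref{defn:LC-division} to get the $O(\alpha)$ children bound and to show the non-boundary weight drops by a factor $\alpha$ at each level, then use integrality to conclude the depth bound. Your version is in fact slightly more careful than the paper's, making explicit the containment $\bound_G(H_i) \supseteq \bound_H(H_i)$ and the invariance of $W$ under boundary-flattening, whereas the paper simply asserts the factor-$\alpha$ decrease without spelling out these details.
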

\begin{proof}
    By the $\alpha$-divided property of \Cref{defn:LC-division}, the total weight of non-boundary vertices a child vertex in $\mathcal{T}$ must be at most a $1/\alpha$ fraction of that of its parent. Since all vertices of $G$ are non-boundary we have that $W(V)$ is the initial total weight of non-boundary vertices. So we can multiply $W(V)$ by $1/\alpha$ at most $\log_\alpha W(V)$ times before it drops below $1$, which is the total weight of non-boundary vertices in a leaf $\mathcal{T}$. This is exactly the depth of $\mathcal{T}$. Also by the $\alpha$-divided property of \Cref{defn:LC-division}, there are $O(\alpha)$ regions in an $\alpha$-division so any non-leaf node in $\mathcal{T}$ has $O(\alpha)$ children, one for each region in the corresponding division. 
\end{proof}
\noindent We will generally let $\operatorname{depth}(\mathcal{T})$ denote the depth of $\mathcal{T}$, and $\operatorname{child}(H)$ be the set of child nodes of a node $H$ in $\mathcal{T}$.
\subsection{From Hierarchies to Length-Constrained Minimum Spanning Tree}
We show the connection between the structures we have spent the past two sections defining and length-constrained MST, beginning with a natural definition. 
\begin{definition}[Restriction of $T$ on $H$]\label{defn:restriction}
    Given an $h$-length MST $T$ and length-constrained region $H$ of a graph $G=(V,E)$, we let $E\left(T_{|H}\right)=E(T)\cap \left(E(H)\setminus E\left(L_H\right)\right)$ and let $\OPT_{|H}=\sum_{e\in E(T_{|H})}w(e)$.
\end{definition}
\noindent This definition allows us to reasonably charge edges of $\OPT$ without double-charging edges that appear in multiple regions.
\begin{lemma}\label{lemma:restriction-sums}
    Given an $h$-length division $\mathcal{H}$ of a graph $G$  we have $\sum_{\left(H,L_H\right)\in\mathcal{H}}\OPT_{|H}\leq\OPT$.
\end{lemma}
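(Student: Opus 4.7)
The plan is to expand the sum by linearity across edges of the fixed optimal $h$-length MST $T^*$ and bound the number of regions in which each edge is counted. Writing
$$\sum_{(H, L_H) \in \mathcal{H}} \OPT_{|H} = \sum_{e \in E(T^*)} w(e) \cdot \mu(e),$$
where $\mu(e) := |\{(H, L_H) \in \mathcal{H} : e \in E(H) \setminus E(L_H)\}|$, it suffices to prove $\mu(e) \leq 1$ for every $e \in E(T^*)$. Combined with $w(T^*) = \OPT$, this would give the claimed bound $\sum_{(H, L_H) \in \mathcal{H}} \OPT_{|H} \leq \OPT$.

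I would proceed by splitting into two cases. First, suppose $e \in E(T^*)$ satisfies $e \notin \bigcup_{(H_i, L_i) \in \mathcal{H}} E(L_i)$, i.e., $e$ does not lie in any boundary of the division. Since $\mathcal{H}$ is a division of the trivial region $(G, \emptyset)$, the complete property of \Cref{defn:LC-division} directly guarantees that $e$ appears in exactly one region $H_i$; because $e \notin L_i$ there, this yields $\mu(e) = 1$ exactly.

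Second, suppose $e \in L_j$ for some $j$. Then the region $(H_j, L_j)$ contributes $0$ because $e \in E(L_j)$ is excluded from $E(H_j) \setminus E(L_j)$. The step requiring care is ruling out that $e$ could contribute via some other region $(H_k, L_k)$ with $e \in E(H_k) \setminus E(L_k)$. Here I would appeal to the consistent construction of boundaries in \Cref{lemma:lc-div}: when the algorithm splits a region $H$ along a separator $P$ into $H_A, H_B$, it sets their boundaries to $P \cup (E(H_A) \cap E(L_H))$ and $P \cup (E(H_B) \cap E(L_H))$, so every edge that ends up shared between two resulting regions is added to the boundary subgraph of both. Propagating this invariant across all levels of the recursive decomposition, any edge appearing in multiple regions of $\mathcal{H}$ lies in the boundary subgraph $L_k$ of every $H_k$ containing it, forcing $\mu(e) = 0$ in this case.

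Combining the two cases gives $\mu(e) \leq 1$ for every $e \in E(T^*)$, so $\sum_{(H, L_H) \in \mathcal{H}} \OPT_{|H} \leq \OPT$. The main subtlety is the second case: one must rely on the structural fact that shared edges are placed in each adjacent region's boundary (rather than arbitrarily in just one), which is why the specific construction of \Cref{lemma:lc-div}—and not just the abstract properties of an $h$-length division in isolation—is the natural source of the bound.
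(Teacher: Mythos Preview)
Your proof is correct and follows the same route as the paper's one-line argument, which just invokes the complete property of \Cref{defn:LC-division} together with the fact that $E(T_{|H})$ excludes boundary edges. Your case~2 observation---that the abstract complete property only constrains edges lying outside all boundaries, so ruling out $e\in L_j$ yet $e\in E(H_k)\setminus E(L_k)$ for $k\neq j$ really requires the specific recursive construction of \Cref{lemma:lc-div}---is a valid refinement that the paper leaves implicit.
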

\begin{proof}
    This follows from the complete property of length-constrained divisions (\Cref{defn:LC-division}) and the fact that $E\left(T_{|H}\right)$ only includes edges in $E(H)\setminus E\left(L_H\right)$.
\end{proof}
\noindent Recall we fixed an $h$-length MST $T^*$, so future appearance of $\OPT_{|H}$ is in reference to $E\left(T^*_{|H}\right)$.

Note that while we defined all of the previous length-constrained structures to work with general nonnegative vertex weights, it was not because the input graphs of our length-constrained MST instances are vertex-weighted (recall that the formal definition of length-constrained MST never included a vertex weight function). In fact, given an instance of length-constrained MST, we may assume the vertices are unweighted i.e.\ $W:V\to\{1\}$ and $W(V)=n$.

Since the input $h$ for length-constrained MST is really a bound on the distances from the root $r$, we need to slightly adjust our usage of the $h$-constrained diameter. In particular, note that given an $h$-length MST, any pair $u,v$ are certainly reachable by a $2h$-length path, since a possible $u,v$ path is $u$ to $r$ to $v$ and both subpaths have length at most $h$. Hence we will actually be using $2h$-length regions and divisions when we try to solve length-constrained MST, but the extra $2$ will be swallowed by big-Os. We show that we can similarly upper bound the $h$-length-constrained diameter of a region whose boundary is flattened as before, this time in the context of length-constrained MST:

\begin{lemma}[Flattening and $h$-length MSTs]\label{lemma:bounded-radius}
    Any $2h$-length region $\left(H,L_H\right)$ of a graph $G=(V,E)$ with edge lengths $l$ and edge weights $w$
    satisfies $D^{(2h)}\left(H^{0}\right)\leq \OPT_{|H}$. 
\end{lemma}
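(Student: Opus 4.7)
The plan is to closely mirror the proof of \Cref{lemma:shrinking-diam}, but charge the weight of a projected path to $\OPT_{|H}$ rather than to $D^{(h)}(G)$. Specifically, I want to show that for every $u, v \in V(H)$, there exists a $u$-$v$ path $P'$ in $H^{0}$ with $l_{H^{0}}(P') \leq 2h$ and $w_{H^{0}}(P') \leq \OPT_{|H}$, which then gives $D^{(2h)}(H^{0}) \leq \OPT_{|H}$ by taking the maximum.

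Fix $u, v \in V(H)$ and let $P$ be the unique $u$-$v$ path in the fixed optimal tree $T^*$. Since $T^*$ is $h$-length, the $u \to r \to v$ walk in $T^*$ has length at most $2h$, and $P$ (which is a subwalk) has length at most this. Now, just as in \Cref{lemma:shrinking-diam}, partition the edges of $P$ into maximal contiguous subpaths $p_1, \ldots, p_k$ that alternate between using only edges of $E(H) \setminus E(L_H)$ (call these \emph{interior}) and using no edges of $E(H) \setminus E(L_H)$ (call these \emph{exterior}). For each exterior $p_i$, replace it with a zero-weight, zero-length path between its endpoints inside $L_H$ in $H^{0}$; call the resulting walk $P'$.

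The key justification for this replacement is exactly the separated boundary property of \Cref{defn:lc-region}: the endpoints of any exterior $p_i$ lie in $V(H)$ (either they equal $u$ or $v$, or they are shared with an adjacent interior subpath whose edges are in $E(H)$) and moreover they are boundary vertices of $H$ (an interior vertex of $H$ has all incident edges in $E(H) \setminus E(L_H)$, so any incident edge of $p_i$ forces the endpoint to be in $\bound_G(H)$). Since $p_i$ itself is a path in $G - (E(H) \setminus E(L_H))$ between two boundary vertices, the separated boundary property forces both endpoints to lie in the same connected component of $L_H$, so the desired zero-weight, zero-length replacement path exists in $H^{0}$.

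It remains to bound the length and weight of $P'$. The length contribution comes only from interior subpaths, whose total length is at most $l(P) \leq 2h$, so $l_{H^{0}}(P') \leq 2h$. The weight contribution also comes only from interior subpaths, whose edges all lie in $E(T^*) \cap (E(H) \setminus E(L_H)) = E(T^*_{|H})$, hence their total weight is at most $w(T^*_{|H}) = \OPT_{|H}$. This gives $d^{(2h)}_{H^{0}}(u,v) \leq \OPT_{|H}$, and maximizing over $u, v \in V(H)$ completes the proof. The main subtlety is the endpoint analysis of the exterior subpaths, especially handling the cases where $u$ or $v$ is itself an endpoint; but the observation that interior vertices of $H$ have no edges outside $E(H) \setminus E(L_H)$ resolves this cleanly.
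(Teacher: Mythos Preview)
Your proof is correct and follows essentially the same approach as the paper: take the unique $u$-$v$ path in $T^*$, alternate-decompose it relative to $E(H)\setminus E(L_H)$, and use the separated boundary property to replace exterior pieces by zero-weight, zero-length paths in $L_H$. If anything, your argument is slightly more careful than the paper's in justifying why the endpoints of each exterior subpath must be boundary vertices.
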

\begin{proof}
    By definition all edge weights of $E\left(L_H\right)$ in $H^{0}$ are $0$ and $E\left(T^*_{|H}\right)$ contains the edges of $T^*$ that aren't in $E\left(L_H\right)$. Therefore $E\left(T^*_{|H}\right)$ is a subset of the edges with positive weight in $H^{0}$. Let $P$ be the unique $u,v$ path in $T^*$ of length at most $2h$. We can construct a $2h$-length path $P'$ in $H^{0}$ from $P$ as follows: we can break $P$ into subpaths $p_1,p_2,\dots,p_k$, (i.e.\ partition its edges) such that each part is a contiguous subpath of $P$ and the parts $p_1,p_2,\dots,p_k$ alternate between containing only edges that are in $E(H)\setminus E\left(L_H\right)$, or not. For each $i\in[k]$ let $p'_i$ be $p_i$ if $p_i$ contains edges in $E(H)\setminus E\left(L_H\right)$. Otherwise let $p'_i$ be the $0$-weight path containing only edges in $E\left(L_H\right)$ between $p_i$'s endpoints; this is always possible by the separated boundary property of length-constrained regions. Indeed, if there is no such path in $H^{0}$, then $p_i$'s endpoints must lie on separate connected components of $L_H$. By the separated boundary property of \Cref{defn:lc-region}, $p_i$ necessarily contains an edge of $E(H)\setminus E\left(L_{H}\right)$ to connect the two connected components. Clearly, the union of the $p'_i$'s with positive weight is a subset of $E\left(T^*_{|H}\right)$. Then $P'=\bigcup_{i\in[k]}p'_i$ is a feasible $h$-length $u,v$ path in $H^{0}$ satisfying $w(P')\leq \OPT_{|H}$. This holds for any $u,v$ pair, so the lemma statement follows. 
\end{proof}
\noindent Likewise, we can relate the total edge weight across all region boundaries in a hierarchy to $\OPT$.
\begin{lemma}[Length-Constrained Division Hierarchy Existence and Algorithm]\label{lemma:hierarchy}
    Given a planar graph $G=(V,E)$ with edge lengths $l$, edge weights $w$, and
    $h\geq1,\alpha\geq1$, one can compute in polynomial time a $2h$-length-constrained $\alpha$-division hierarchy $\mathcal{T}$ of $G$ satisfying 
    $$w\left(\bigcup_{H\in V(\mathcal{T})}L_H\right)\leq O\left(\alpha \log_\alpha n \right)\cdot \OPT$$ 
\end{lemma}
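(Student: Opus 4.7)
The plan is to construct the hierarchy $\mathcal{T}$ by repeatedly applying \Cref{lemma:lc-div}, starting from the trivial $2h$-length region $(G, \emptyset)$ as the root. For each internal node $(H, L_H) \in V(\mathcal{T})$ that still contains non-boundary vertices, we first flatten the boundary to obtain $H^{0}$, then invoke \Cref{lemma:lc-div} on $(H^{0}, L_H)$ with parameter $\alpha$ to produce a $2h$-length $\alpha$-division $\mathcal{H}_H$, and install the regions of $\mathcal{H}_H$ as the children of $H$ in $\mathcal{T}$. We stop recursing when a region contains no non-boundary vertices, and these become the leaves of $\mathcal{T}$. By construction this satisfies \Cref{defn:hierarchy}.

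For the runtime, \Cref{lemma:lc-div} runs in polynomial time, and by \Cref{lemma:hierarchy-tree} the depth of $\mathcal{T}$ is at most $\log_\alpha W(V) = \log_\alpha n$ (using the convention $W \equiv 1$) with at most $O(\alpha)$ children per node, so the total number of nodes is polynomial in $n$. Hence the entire construction runs in polynomial time.

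The main step is the weight bound. Fix a level $i$ of $\mathcal{T}$ and consider the collection of regions $\mathcal{L}_i = \{(H, L_H)\}$ at that level. For each such $H$, the total weight of the newly introduced boundary edges in its children is bounded by the light boundary property of \Cref{defn:LC-division} as
\[
w\left(\left(\bigcup_{H' \in \operatorname{child}(H)} L_{H'}\right) \setminus L_H\right) \leq O\left(\alpha \cdot D^{(2h)}(H^{0})\right).
\]
By \Cref{lemma:bounded-radius} applied to the $2h$-length region $(H, L_H)$, we have $D^{(2h)}(H^{0}) \leq \OPT_{|H}$. Since the regions at level $i$ form a $2h$-length division of $G$ (one can verify that the recursive application preserves the complete property, so the restrictions partition the edges of $T^*$ not captured by boundaries at earlier levels), \Cref{lemma:restriction-sums} gives $\sum_{H \in \mathcal{L}_i} \OPT_{|H} \leq \OPT$. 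Summing over the $O(\log_\alpha n)$ levels yields
\[
w\left(\bigcup_{H \in V(\mathcal{T})} L_H\right) \leq \sum_{i=0}^{O(\log_\alpha n)} \sum_{H \in \mathcal{L}_i} O(\alpha) \cdot \OPT_{|H} \leq O(\alpha \log_\alpha n) \cdot \OPT.
\]

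The main obstacle is justifying the chain of inequalities $D^{(2h)}(H^{0}) \leq \OPT_{|H}$ together with $\sum_H \OPT_{|H} \leq \OPT$ at a given level: this relies crucially on repeatedly flattening the boundary before recursing (so that \Cref{lemma:bounded-radius} applies to each subregion with the \emph{original} $T^*$ providing a short path), and on the complete property of divisions ensuring that restrictions $T^*_{|H}$ over a single level do not double-count any edge of $T^*$. Once these two facts are in place, the telescoping over levels is immediate.
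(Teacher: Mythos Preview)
Your proof is correct and follows essentially the same approach as the paper: recursively apply \Cref{lemma:lc-div} (with flattening), use the light boundary property together with \Cref{lemma:bounded-radius} to bound each level's newly introduced boundary weight by $O(\alpha)\cdot\OPT_{|H}$, then invoke \Cref{lemma:restriction-sums} to sum over a level and \Cref{lemma:hierarchy-tree} to sum over the $O(\log_\alpha n)$ levels. The paper's proof is organized identically, with the same chain of lemmas in the same order.
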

\begin{proof}
    To compute $\mathcal{T}$, we recursively compute $2h$-length-constrained $\alpha$-divisions until we have an instance with no non-boundary vertices. By \Cref{lemma:hierarchy}, the recursive tree has depth $O\left(\log_\alpha n\right)$ and each non-leaf node has $O(\alpha)$ children, so there are $\poly(n)$ nodes in the tree. Computing an $\alpha$-division can be done in polynomial time by \Cref{lemma:lc-div}, so computing $\mathcal{T}$ can be done in polynomial time.

    We prove the statement on weight. Let $l_i$ be the set of nodes at level $i$ in $\mathcal{T}$ where $l_1$ contains the root. Then we have
    \begin{align*}
        w\left(\bigcup_{H\in V(\mathcal{T})}L_H\right)
        &= \sum_{i\in[\operatorname{depth}(\mathcal{T})-1]} \sum_{H\in l_{i}} w_{H}\left(\left(\bigcup_{\hat{H}\in \operatorname{child}(H)}L_{\hat{H}}\right)\setminus L_H\right) \\
        &\leq \sum_{i\in[\operatorname{depth}(\mathcal{T})-1]} \sum_{H\in l_{i}} O\left(\alpha\cdot D^{(2h)}\left(H^{0}\right) \right) \\
        & \leq \sum_{i\in[\operatorname{depth}(\mathcal{T})-1]} \sum_{H\in l_{i}} O\left(\alpha\cdot \OPT_{|H} \right) \\
        &\leq \sum_{i\in[\operatorname{depth}(\mathcal{T})-1]}  O\left(\alpha\cdot \OPT \right) \\
        &\leq O\left(\alpha\log_\alpha n\right)\cdot\OPT
    \end{align*}
    where 
    the first line is because $G$ is the only region at level $1$ and $L_G=\emptyset$ so we only need to sum over child nodes in $\mathcal{T}$, 
    the second line is by the light boundary property of \Cref{defn:LC-division} guaranteed by \Cref{lemma:lc-div},
    the third line is by \Cref{lemma:bounded-radius}, 
    the fourth line is by \Cref{lemma:restriction-sums}, 
    and the fifth line is because there are $O\left(\log_\alpha n\right)$ levels in $\mathcal{T}$ by \Cref{lemma:hierarchy-tree}. 
\end{proof}
\noindent Our main algorithm will use these hierarchies with $\alpha=\log^{\eps/2}n$. 

\section{An $O(\log^{1+\eps} n)$ Approximation Algorithm with $1+\eps$ Length Slack}\label{sec:realalg}
In this section we present our main algorithm and prove the following theorem.
\thmmainreal*
\noindent 
We start by buying a $2h$-length-constrained $\alpha$-division hierarchy of the input graph; paying an $O\left(\alpha\log_\alpha n\right)$ factor in the weight approximation seems unavoidable by \Cref{lemma:hierarchy-tree}. The natural next step is to add a set of edges of low weight that decreases the distances from the root to be competitive with an optimal solution.

Our approach is based on the following idea: let $H$ be a subgraph with connected components $C_1,C_2,\dots,C_k$ such that $D\left(C_i\right)\leq a\text{ }\forall\text{ }i\in[k]$. If there exists a (cheap) set of edges $S$ such that each $C_i$ has a vertex $v$ such that $d_{H\cup S}(r,v)\leq b$, then $d_{P\cup S}(r,v)\leq a+b$ holds \textit{for every} $v\in V(H)$. Towards this intuition, we break the boundary at each level of our hierarchy into low diameter components (\Cref{subsec:pieces}), reduce each level into an instance of length-constrained \textit{Steiner} tree (\Cref{sec:chandInstance}) where the solution of each instance serves as a candidate set of cheap edges that reduce the length of our solution, and finally we use a standard dynamic programming approach to pick the best solutions (\Cref{sec:DP}). We walk through the intuition of our algorithm in more detail in the following subsections, and the formal description and analysis are given in \Cref{sec:realalgdesc}. 
\subsection{Partitioning the Boundary into Pieces}\label{subsec:pieces}
In order to obtain a collection of low diameter connected components as described above, it is natural to try to break up the length-constrained boundary of length-constrained regions. However, a subgraph having bounded total edge length does not imply that it has low diameter. We use the following fact to help us break up the boundary into (not too many) low diameter components:
\begin{lemma}[Partitioning the Boundary]\label{lemma:low-diam-partition}
    Given a connected graph $G=(V,E)$ with $l(E)\leq h$ and $\beta\geq1$, we can find in polynomial time a $\beta$-partition $\mathcal{P}=\left\{V_1,V_2,\dots,V_k\right\}$ of $V$ such that $|\mathcal{P}|\leq O(\beta)$ and for every $i\in[k]$ we have that $D\left(G[V_i]\right)\leq h/\beta$. 
\end{lemma}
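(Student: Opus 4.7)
The plan is to reduce the task to decomposing a spanning tree into bounded-radius subtrees via a single greedy DFS sweep.

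First, I would compute any spanning tree $T$ of $G$; such a tree exists since $G$ is connected, and since $E(T) \subseteq E$ we have $l(T) \leq l(E) \leq h$. Because $G[V_i]$ contains $T[V_i]$ on the same vertex set, it suffices to find a partition $\{V_1, \ldots, V_k\}$ with $k = O(\beta)$ such that each $T[V_i]$ is a connected subtree of $T$ of diameter at most $h/\beta$; this automatically forces $D(G[V_i]) \leq D(T[V_i]) \leq h/\beta$.

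Second, I would root $T$ arbitrarily and process its vertices in post-order. At each vertex $v$ I maintain a pending rooted subtree $P_v$ with root $v$, satisfying the invariant that every vertex of $P_v$ lies within distance $h/(2\beta)$ from $v$ inside $P_v$. Initially $P_v = (\{v\}, \emptyset)$. When processing $v$, for each child $u$ in turn I attempt to attach $P_u$ via the edge $(v,u)$: if $l(v,u) + \operatorname{depth}(P_u) \leq h/(2\beta)$ I merge $P_u$ and the edge into $P_v$; otherwise I finalize $P_u$ as an output part and ``cut'' the edge $(v,u)$. At the root I finalize $P_r$ as the final part, and I assign each vertex to the unique part containing it.

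Third, I would verify the diameter bound and count the parts. The invariant guarantees that for any finalized part $P_u$ and any $x,y \in V(P_u)$ the path through $u$ has length at most $2 \cdot h/(2\beta) = h/\beta$, so $D(T[V(P_u)]) \leq h/\beta$. A finalization at $v$ happens precisely when $l(v,u) + \operatorname{depth}(P_u) > h/(2\beta)$, so by pigeonhole either (i) $l(v,u) > h/(4\beta)$ -- the cut edge is ``long'' -- or (ii) $\operatorname{depth}(P_u) > h/(4\beta)$ -- the part contains some root-to-leaf path of length greater than $h/(4\beta)$ whose edges lie entirely inside $P_u$. Since the cut edges and the edges of distinct parts form a disjoint subfamily of $E(T)$, the total length of cut edges and the total length of one chosen root-to-leaf path per part are each bounded by $l(T) \leq h$. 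Hence there are at most $4\beta$ finalizations of each type, for a total of at most $8\beta + 1 = O(\beta)$ parts.

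The whole algorithm is a single DFS of $T$ with constant work per edge (updating $\operatorname{depth}(P_v)$ as merges occur), so it runs in polynomial time. The main conceptual obstacle addressed by this plan is that the natural alternative -- enforcing that each part's total edge weight is at most $h/\beta$ -- can fail badly: a star on many tiny edges would be chopped into many singletons, producing $\omega(\beta)$ parts. Enforcing the weaker, diameter-friendly invariant of bounded \emph{radius from a designated root} is what lets a single DFS collapse such configurations and yields the clean $O(\beta)$ bound.
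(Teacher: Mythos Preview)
Your proof is correct and follows essentially the same approach as the paper: decompose a spanning tree into $O(\beta)$ subtrees of radius at most $h/(2\beta)$ by greedily cutting edges, charging each cut to $\Omega(h/\beta)$ of tree edge length. The paper carries this out top-down in two phases (first delete all edges of length $> h/(4\beta)$, then repeatedly cut above ``minimally far'' vertices) with a potential-function count, whereas you do it in a single bottom-up post-order sweep with a disjoint-witness count; these are minor implementation variants of the same idea.
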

\begin{proof}
    Compute a BFS tree $T$ of $G$ rooted at an arbitrary vertex $r\in V$. Let $A\subseteq E(T)$ be the set of edges of $T$ with length strictly larger than $\frac{h}{4\beta}$. Then $T- A$ is a forest with connected components $T_1,T_2,\dots,T_f$ with $f\leq 4\beta$. This is because removing an edge from $T$ increases the number of subtrees in the forest by $1$ and there are at most $4\beta$ edges with length greater than $\frac{h}{4\beta}$ as otherwise we would have $l(E)>h$, contradicting our assumption. We root each $T_i$ at $r_i$, which is the vertex in $T_i$ with the highest level in $T$.
    
    Given a tree $H$ call a vertex $u\in V(H)$ \textit{minimally far} if it has a descendant $v$ in $V(H)$ such that $d_{H}(u,v)>\frac{h}{4\beta}$, but for any of its children $w\in V(H)$ and any descendant $v\in V(H)$ of $w$, we have $d_{H}(w,v)\leq \frac{h}{4\beta}$. For each $i\in[f]$ let $T_i(r)$ be the connected component of $T_i$ containing $r_i$ (initially $T_i=T_i(r)$ for every $i$). We now perform a greedy procedure: if there exists a minimally far vertex $v$ in $T_i(r)$ and with a parent $t$ then we remove the edge between $t,u$. Upon removing that edge, we break $T_i(r)$ into $T_i(r)$ and a new connected component that we root at $u$. By definition, the new component rooted at $u$ has no minimally far vertices with a parent. We repeat until there are no more minimally far vertices with a parent in $T_i(r)$, and iterate across all $T_i$'s. Let the set of edges we remove in this procedure be $B$. Then $A\cap B=\emptyset$ and $T-(A\cup B)$ is a forest with connected components $T'_1,T'_2,\dots,T'_{f'}$. Observe that any vertex that was ever minimally far with a parent is now a root (i.e.\ vertex with highest level in $T$) of its connected component $T'_j$ in $T-(A\cup B)$; denote it as $r'_j$ for every $j\in[f']$. 
    
    We claim that $f'\leq 4\beta$. Let $\phi_i=l\left(T_i(r)\right)$ be the total edge length of the connected component containing $r_i$ as we perform the greedy procedure and define a potential $\Phi=\sum_{i\in[f]}\phi_i$. If we remove an edge of $T_i$, then there is a minimally far vertex $u$ with parent $t$ in $T_i$. The subtree of $T_i$ rooted at $u$ must have total edge length of at least $\frac{h}{4\beta}$ by definition of minimally far vertices. So when we break off this subtree from $T_i(r)$ by removing the edge between $t,u$, then $\phi_i$ and thus $\Phi$ must go down by at least $\frac{h}{4\beta}$. We have $\Phi< h-\frac{hf}{4\beta}$ initially since we already removed $f$ edges each of length larger than $\frac{h}{4\beta}$ from $T$ in the first step. Then we can remove at most $4\beta-f$ edges across all of the $T_i$'s throughout the greedy procedure. Indeed, if we removed more than $4\beta-f$ edges, then $\Phi<0$ and thus there is some $i$ such that $\phi_i<0$, but this is clearly impossible. So we have $f'\leq f+4\beta-f=4\beta$. 
    
    For any $u,v \in V\left(T'_j\right)$ we have that the unique $u,v$ path in $V\left(T'_j\right)$ at worst goes from $u$ to $r'_j$ to $v$. Then $d_{T'_j}(u,v)\leq 2\left(\frac{h}{4\beta}+\frac{h}{4\beta}\right)=\frac{h}{\beta}$, where we get one $\frac{h}{4\beta}$ for the subpath from $u$ to a child $w$ of $r'_j$ since $r'_j$ was minimally far and another $\frac{h}{4\beta}$ for the $w,r'_j$ edge since we already removed all edges with length greater than $\frac{h}{4\beta}$, and the same thing for the subpath from $r$ to $v$. We let $\mathcal{P}=\left\{V\left(T'_1\right),V\left(T'_2\right),\dots,V\left(T'_{f'}\right)\right\}$ and we are done. Also see \Cref{fig:lowdiam-partition}.
\end{proof}
\begin{figure}[H]
    \centering
    \scalebox{0.7}{ 
    \begin{tikzpicture}[
      level distance=1.5cm,
      sibling distance=2.5cm,
      every node/.style={circle, draw=black, thick, fill=pink, minimum size=8mm},
      edge from parent/.style={draw=black, thick}
    ]
    
    \node (root) {}
      child { node (a1) {}
        child { node (a2) {}
          child { node (a3) {} }
        }
      }
      child { node (b1) {}
        child { node (b2) {} }
        child { node (b3) {}
          child { node (b4) {} }
        }
      };
    
    \node[above=4pt of a1, draw=none, fill=none, rectangle, inner sep=0] {\Large$\leq \frac{h}{4\beta}$};
    \node[above=4pt of b1, draw=none, fill=none, rectangle, inner sep=0] {\Large$\leq \frac{h}{4\beta}$};

    \draw[thick,decorate,decoration={brace,amplitude=8pt}]
      ($(a3.south west)+(-0.3,-0.2)$) -- ($(a1.north west)+(-0.3,0.1)$)
      node[midway,xshift=-12pt,left, draw=none, fill=none] {\Large$\leq \frac{h}{4\beta}$};
    
    \draw[thick,decorate,decoration={brace,mirror,amplitude=8pt}]
      ($(a3.south west)+(4.7,-0.2)$) --
      ($(a1.north west)+(4.7,0.1)$)
      node[midway,xshift=12pt,right, draw=none, fill=none] {\Large$\leq \frac{h}{4\beta}$};
    
    \end{tikzpicture}
    }
    \caption{A subtree $T'_i$ rooted at $r'_i$. A $u,v$ path in $T'_i$ at worst uses $\frac{h}{4\beta}$ to go from $u$ to a child $w$ of $r'_i$, $\frac{h}{4\beta}$ to go from $w$ to $r'_i$, $\frac{h}{4\beta}$ to go from $r'_i$ to a child $x$ of $r'_i$, and $\frac{h}{4\beta}$ to go from $x$ to $v$.}
    \label{fig:lowdiam-partition}
\end{figure}
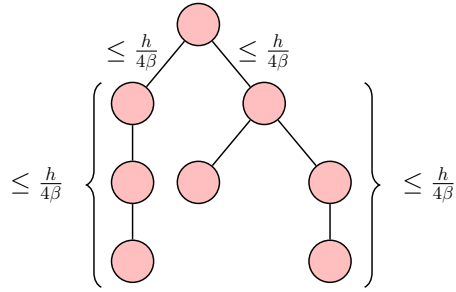
\noindent We will refer to the parts of $\mathcal{P}$ computed by \Cref{lemma:low-diam-partition} as \textit{pieces}. 

Recall from \Cref{defn:lc-region} that a $2h$-length-constrained region $\left(H,L_H\right)$ satisfies $l\left(L_H\right)\leq O(h)$, and $L_H$ has $k\leq O(1)$ connected components. Applying \Cref{lemma:low-diam-partition} to each connected component gives a partition of each component into pieces $\mathcal{P}_{i}=\left\{V_1,V_2,\dots\right\}$ such that $|\mathcal{P}_{i}|\leq O(\beta)$ and the induced subgraph of each piece in each corresponding component has diameter at most $h/\beta$. In particular, none of the pieces across any of the $\mathcal{P}_i$'s overlap on a vertex or edge, since by definition they belonged to different connected components to start out. 

Let $\mathcal{P}_H=\bigcup_{i\in[k]}\mathcal{P}_{i}$. Since $k\leq O(1)$ then $|\mathcal{P}_H|\leq O(1)\cdot O(\beta)=O(\beta)$. We say $\mathcal{P}_H$ is a $\beta$-partition of the boundary $L_H$ into pieces. Then given a $2h$-length-constrained region, we now have a means to break its boundary into connected components each with diameter at most $h/\beta$. The number of pieces being at most $O(\beta)$ will be used later. 

\subsection{From Pieces to a Whole (Length-Constrained Steiner Tree)}\label{sec:chandInstance}
In this subsection we show how to reduce length-constrained MST to many small instances of length-constrained Steiner tree, given a length-constrained division hierarchy. Recall that in length-constrained Steiner tree, we are given the same input along with a terminal subset $U\subseteq V$ and are required to return a tree of $G$ with minimum total edge weight among trees $T$ of $G$ satisfying $d_T(r,t)\leq h$ for each $t\in U$.
\paragraph{Construction.} Our approach with the hierarchy involves iterating across the hierarchy from bottom to top, considering an instance for every parent region and its children. For brevity we will always assume that a length-constrained region $(H,L)$ comes with a $\beta$-partition $\mathcal{P}$ of $L$ into pieces computed by \Cref{lemma:low-diam-partition}. 
\begin{definition}[Pieces to Length-Constrained Steiner Tree]\label{defn:pieces-to-whole}
    Given a node $(H,L)$ with child $(H',L')$ in a $2h$-length-constrained $\alpha$-division hierarchy $\mathcal{T}$ of graph $G$ where $\mathcal{P}=\left\{V_1,V_2,\dots\right\},\mathcal{P}'=\left\{V'_1,V'_2,\dots\right\}$ are $\beta$-partitions of $L,L'$ 
    into pieces by \Cref{lemma:low-diam-partition}, and a function $g$ (resp. $g'$) that maps each piece of $\mathcal{P}$ (resp. $\mathcal{P}'$) to a value between $0$ and $h$, we let the graph $F\left(H,H',g,g'\right)$ ($F$ for short when the underlying regions and functions are obvious) be defined as follows:
    \begin{enumerate}
        \item \textbf{Root}: If $(H,L)=(G,\emptyset)$, we let the root of $F$ be $\tilde{r}=r$ and we let $r$ be the only piece of $\mathcal{P}$. 
        Otherwise we insert a fake root $\tilde{r}$ in $V(F)$ even if $r\in V(H)$ already.
        \item \textbf{Vertices}: $V(F)$ includes $V(H)$ and a vertex $t'_i$ for every piece $V'_i\in \mathcal{P}'$ and let $U$ be the set of all $t'_i$'s. Although $U$ is defined with respect to a specific instance of $F$, for brevity we will always just refer to it as $U$ since it will always be written alongside its corresponding $F$. 
        \item \textbf{Edges}: $E(F)$ includes $E(H)$ with their lengths unchanged but weights of edges with both endpoints in a boundary set to $0$ (and all other weights unchanged), an edge of length $g\left(V_i\right)$ and weight $0$ between $r,v$ for every $v\in V_i$ for every $i\in|\mathcal{P}|$ (let $X$ denote the set of such edges), and an edge of length $h-g'\left(V'_i\right)$ and weight $0$ between $t'_i,v$ for every $v\in V'_i$ for every $i\in|\mathcal{P}'|$ (let $X'$ denote the set of such edges).
    \end{enumerate}
\end{definition}
\noindent See \Cref{fig:pieces-to-whole} of an illustration our construction of $F$. 
\begin{figure}[H]
    \centering
    \includegraphics[width=0.45\linewidth]{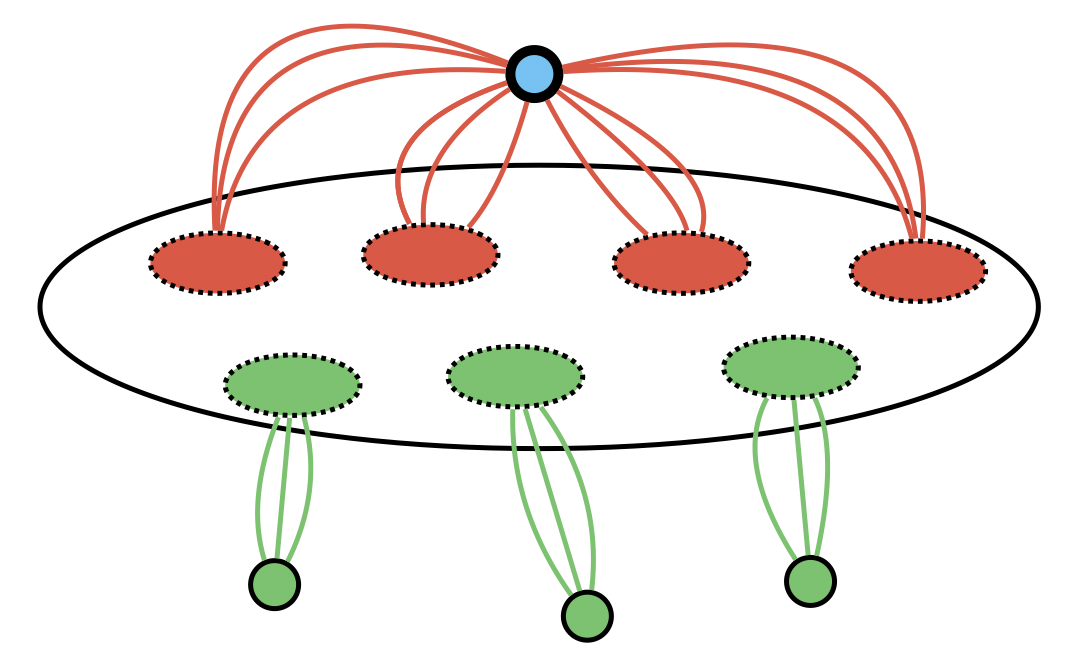}
    \caption{Graph $F\left(H,H',g,g'\right)$, where $H$ is the clear oval, $\tilde{r}$ is the bolded blue vertex, the pieces of $\mathcal{P}$ are the red dotted ovals, the pieces of $\mathcal{P}'$ are the green dotted ovals, $U$ contains the green vertices, $X$ contains the red edges, and $X'$ contains the green edges.}
    \label{fig:pieces-to-whole}
\end{figure}
\noindent There is some flexibility given by the functions $g,g'$. In a fantasy land, $g,g'$ could be the following:
\begin{definition}[Distance to Pieces]\label{defn:h_i}
    Given a length-constrained region $(H,L)$ and a partition $\mathcal{P}$ of $L$ into pieces, we let 
    $h^*\left(\hat{V}\right):=\min_{v\in \hat{V}}\left(d_{T^*}\left(r,v\right)\right)$
    be the minimum distance under $l$ between $r$ and a vertex in $\hat{V}\in\mathcal{P}$ in a fixed $h$-length MST $T^*$.  
\end{definition}
\noindent If we somehow knew $h^*$ ahead of time and let $g,g'=h^*$ to define the edge lengths of $F\left(H,H',g,g'\right)$, then we could define an instance of length-constrained Steiner tree with $t=O(\beta)$ terminals on $F\left(H,H',g,g'\right)$ with $U$ as the terminal set and $h$ as the length bound. 

Any path $P$ between $\tilde{r}$ and a terminal $t'\in U$ must include an edge $e=(\tilde{r},a)\in X$, an edge of $e=(b,t')\in X'$, and a subpath from $a$ to $b$ (whose length we denote as $z$) where $a\in \hat{V}$ and $b\in V'$ for parent piece $\hat{V}$ and child piece $V'$. Notably, the length $z$ subpath contains only edges in $E(H)$, and we did not change those edge lengths. So for any path $P$ between $\tilde{r}$ and a terminal $t'\in U$, a solution $T$ to this instance of length-constrained Steiner tree on $F$ must satisfy 
$$l_{F}(P)=h^*\left(\hat{V}\right)+\left(h-h^*(V')\right)+z\leq h\implies z\leq h^*(V')-h^*\left(\hat{V}\right)$$
If we only look at the subset of edges in $F$ shared with $E(H)\setminus E(L\cup L')$ then we obtain a path of length at most $h^*\left(V'\right)-h^*\left(\hat{V}\right)$ in $H$ between these pieces. By an inductive argument, a vertex of piece $V'$ can reach the root via length at most $h^*\left(V'\right)$ for any piece $V'$.
Furthermore, the union of $E\left(T^*_{|H}\right)$ and $X,X'$ is a feasible solution to the length-constrained Steiner tree instance on $F\left(H,H',h^*,h^*\right)$ (we will prove this for an even weaker assumption on $g,g'$ in \Cref{lemma:correct-guess}). It follows that solving the length-constrained Steiner tree instance gives us an edge set whose weight is competitive with $T^*$.

Then given this instance of length-constrained Steiner tree with $O(\beta)$ terminals, we can use the single-criteria approximation algorithm of \cite{charikar1999approximation} to obtain such a set of edges in $H$ that have good length and weight. Specifically, they showed the following:
\begin{theorem}[Theorem 3.3 in \cite{charikar1999approximation}]\label{thm:chandra}
    For any $\delta>0$, there is an $O\left(t^\delta/\delta^3\right)$ approximation algorithm with $1$ length slack that runs in time $t^{O\left(1/\delta\right)}$ for length-constrained Steiner tree.
\end{theorem}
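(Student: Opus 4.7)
The plan is to adapt the recursive greedy-density algorithm of \cite{charikar1999approximation} for directed Steiner tree, using $h$-length-constrained shortest paths in place of ordinary shortest paths. Define a family of procedures $A_1, \ldots, A_i$ with $i = \lceil 1/\delta \rceil$, where $A_j(v, B, h')$ takes a root $v$, an uncovered terminal subset $B$, and a length budget $h'$, and returns a tree rooted at $v$ that spans some subset of $B$ while approximately minimizing density (total weight per terminal covered). The base case $A_1(v, B, h')$ outputs, among all $h'$-length shortest paths from $v$ to terminals in $B$, the one of lowest weight. For $j \geq 2$, $A_j(v, B, h')$ enumerates over triples consisting of a candidate ``split vertex'' $u \in V$, a desired terminal count $k' \in \{1, \ldots, |B|\}$, and a length split $h'' \in \{0, 1, \ldots, h'\}$; for each such triple it computes an $h''$-length shortest $v$-$u$ path (via a standard $O(nh)$-entry DP on $(\text{vertex}, \text{length})$ pairs) and then iteratively accumulates the lowest-density output of $A_{j-1}(u, \cdot, h' - h'')$ until at least $k'$ terminals are covered, concatenating the two pieces. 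It returns the option of smallest density. The top-level algorithm repeatedly invokes $A_i(r, U, h)$, commits its output, removes the covered terminals from $U$, and repeats until $U = \emptyset$.

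The approximation analysis rests on a density lemma: in any invocation of $A_j$ there exists a candidate triple $(u, k', h'')$ yielding density at most $j \cdot k^{1/j}$ times the density of the optimal Steiner tree restricted to the current uncovered set of size $k$. This is proved by induction on $j$, using a Zelikovsky-style height reduction that, at an $O(j)$-factor cost in weight, transforms any tree into one with $j$ balanced recursive levels; an averaging argument then identifies a split vertex at the appropriate depth whose sub-tree achieves the required density. Combining this lemma with the classical harmonic greedy set-cover analysis yields total output weight at most $O(i \cdot t^{1/i} \cdot \ln t) \cdot \OPT$; setting $i = \lceil 1/\delta \rceil$ and absorbing logarithmic factors into $t^{1/i}$ via a routine calculation (or using the refined accounting of \cite{charikar1999approximation}) produces the claimed $O(t^\delta/\delta^3)$-approximation. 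The running time is dominated by the $O(n^2 \cdot t \cdot h)$ enumeration at each of the $i$ recursive levels, yielding $\poly(n) \cdot t^{O(1/\delta)}$ after the standard preprocessing that rounds lengths so that $h \leq \poly(t)$.

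The main obstacle will be proving the density lemma, specifically translating the height-reduction framework to the length-constrained setting where ``height'' refers to $l$-length from the root rather than to the number of edges in a path. I would handle this via a preliminary transformation that buckets vertices by their distance from the current root into $O(j)$ strata of width roughly $h/j$ each, so that the structural depth in the bucketed graph coincides with a $1/j$-fraction of the length budget; the standard analysis then applies to the bucketed graph mutatis mutandis, with the additional $1/\delta$ factor in the approximation absorbing the overhead from the bucketing. The $1$-length-slack guarantee is immediate by construction: every invocation of $A_j$ strictly respects its input budget $h'$, and these budgets sum telescopically to exactly $h$ along any root-to-terminal path of the output tree.
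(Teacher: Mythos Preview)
The paper does not prove this statement at all: \Cref{thm:chandra} is quoted as Theorem~3.3 of \cite{charikar1999approximation} and used as a black box. So there is no ``paper's own proof'' to match; the relevant comparison is between your direct adaptation and the route the paper implicitly relies on, namely the approximation-preserving reduction from length-constrained Steiner tree to directed Steiner tree spelled out in \Cref{sec:dst-reduction} (see \Cref{lemma:lcmst-dst}). Under that reduction one builds an $h$-layered DAG in which a root-to-terminal path has exactly as many edges as its total $l$-length; one then invokes the Charikar et al.\ recursive-greedy algorithm for directed Steiner tree verbatim, and the standard Zelikovsky height reduction applies with no modification because ``height'' in the layered graph \emph{is} length.

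Your proposal instead runs the recursive greedy procedure natively on the length-constrained instance, enumerating a length split $h''$ at each level. This is a legitimate alternative and would work, but the part you flag as the main obstacle---adapting height reduction so that ``height'' means $l$-length---is precisely the part your sketch leaves vague. The suggested fix (bucket vertices into $O(j)$ strata of width $h/j$ and argue that structural depth in the bucketed graph tracks length) is not obviously correct: a vertex's distance from the root in the optimal tree need not be monotone along the tree's own edges in the way height is, and it is unclear why the averaging step that picks out a good split vertex survives the bucketing without losing more than the claimed $1/\delta$ factor. The DST reduction sidesteps all of this, since in the layered DAG the usual tree-height argument applies without change. If you want a self-contained proof rather than a citation, the cleanest route is to compose \Cref{lemma:lcmst-dst} with the original Charikar et al.\ analysis, not to re-derive height reduction in a new setting.

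A minor secondary point: your running-time claim ends with ``standard preprocessing that rounds lengths so that $h \leq \poly(t)$''. Rounding can make $h \leq \poly(n)$ at the cost of $(1+o(1))$ length slack, but getting $h \leq \poly(t)$ is not standard and would not hold when $t$ is small; the layered-DAG route gives size $O(nh)$ and running time $(nh)^{O(1/\delta)}$, which is what the paper actually needs and uses.
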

\noindent Importantly, although $F\left(H,H',g,g'\right)$ is not necessarily planar, \Cref{thm:chandra} works for general graphs. Using their algorithm as a subroutine, we obtain a subset of edges of $E(H)\setminus E(L\cup L'$ with weight $O\left(\beta^\delta/\delta^3\right)\cdot\OPT_{|H}$ from $F$ that exactly matches the length bounds of $T^*$ for some vertex in each piece in $\mathcal{P}'$ in time $\beta^{O\left(1/\delta\right)}$. 

But if we don't know $h^*$, then we have no way of constructing the correct instance to apply \Cref{thm:chandra} to. We will need a guessing procedure to construct a $g$ that is ``like'' $h^*$. 
\paragraph{Guesses.} 
Clearly we have $h^*\left(V_i\right)\in[0,h]$ for any piece $V_i\in\mathcal{P}$. We could try multiple choices of $g$ to get closer to $h^*$, but we cannot afford to try every value in $[0,h]$ for every piece, as this could lead to exponentially many choices of $g$ (and we still need to run the length-constrained Steiner tree algorithm for each choice of $g$). Instead, we bucket guesses coarsely enough such that there are not too many guesses yet fine-grained enough such that we don't lose too much in length. 

Observe we already give up $h/\beta$ because each piece has diameter at most $h/\beta$ and at worst a vertex has to travel the entire diameter of its piece before exiting the piece. Motivated by this observation, we will use the following bucketing of guesses for each piece:
$$ \mathcal{G}:=\left\{\frac{h}{\beta},\frac{2h}{\beta},\frac{3h}{\beta},\dots,h\right\}$$
More formally:
\begin{definition}[Guess]\label{defn:guess}
    Given a $\beta$-partition $\mathcal{P}$ of pieces, a guess $g$ is mapping from each piece of $\mathcal{P}$ to a value in $\mathcal{G}:=\left\{\frac{ih}{\beta}\right\}_{i\in[\lceil \beta \rceil]}$.
\end{definition}
\noindent
There are $\beta$ terms in $\mathcal{G}$, so if there are $O(\beta)$ pieces then we need a total of $\beta^{\beta}$ total guesses. This expression needs to be polynomial on $n$ so that the number of guesses is polynomial. 
We will also later prove that if we iterate up a length-constrained $\alpha$-division hierarchy from bottom to top, connecting child pieces to parent pieces in the manner described earlier as we travel upwards, then we essentially add $h/\beta$ to the maximum distance-to-$r$ of our solution $\log_\alpha n$ times. 

So there is a tension between our setting of $\alpha,\beta$: we want $\beta$ to be small enough such that $\beta^\beta=\poly(n)$, we want $\alpha$ to be small because we already pay $\alpha$ in weight with the hierarchy alone, and we want the maximum distance-to-$r$ to be at most $h+h\eps$ (this is equivalent to the final length slack to be $1+\eps$), which requires $\log_{\alpha}n=h\eps\beta$. With this in mind, we will (roughly) set
$$\alpha=\log^\eps n,\beta=\frac{\log n}{\eps^2\log\log n}$$
which achieves $\beta^\beta=n^{O\left(1/\eps^2\right)}$ and $\log_{\alpha}n=\frac{\log n}{\eps\log\log n}=\frac{h\eps}{\frac{h\eps^2\log\log n}{\log n}}=h\eps\beta$. 

With all of these guesses in hand, it is natural to single one out as the best guess.
\begin{definition}[Correct Guess]\label{defn:correct}
    Given a $\beta$-partition of pieces $\mathcal{P}=\left\{V_1,V_2,\dots\right\}$, we say a guess $g$ is \textbf{correct} if $g(V_i)\in\left[h^*(V_i),h^*(V_i)+\frac{h}{\beta}\right)$ for every $i\in[|\mathcal{P}|]$, or equivalently, if $h^*(V_i)\in\left(g(V_i)-\frac{h}{\beta},g(V_i)\right]$ for every $i\in[|\mathcal{P}|]$. 
\end{definition}
\noindent Note that by definition of $\mathcal{G}$ there is always a correct guess for any region and partition of pieces. Next, we introduce some notation for the length-constrained Steiner tree solutions that we compute using \Cref{thm:chandra}.
\begin{definition}\label{defn:lcst}
    We let $\LCST(G,U,h)$ be the solution (i.e.\ a subset of edges) of the length-constrained Steiner tree instance on graph $G$ with terminal set $U$ and length bound $h$ returned by the algorithm of \Cref{thm:chandra}.
    If there is no feasible solution of the instance then we let $\LCST\left(G,U,h\right) = \operatorname{fail}$ and $w\left(\LCST\left(G,U,h\right)\right)=\infty$. 
    Given a parent node $(H,L)$ with child $(H',L')$ in a hierarchy $\mathcal{T}$ and corresponding guesses $g,g'$ we let
    $$\LCST^*\left(F\left(H,H',g,g'\right),U,h\right):=\LCST\left(F\left(H,H',g,g'\right),U,h\right)\cap \left(E\left(H\right)\setminus \left(E\left(L\right)\cup E\left(L'\right)\right)\right)$$
    be the set of edges in the solution that are also edges in $H$ that are not fully contained in $L,L'$. 
\end{definition}
\noindent Since a correct guess can be overestimate by $h/\beta$, we will actually consider the length-constrained Steiner tree instances with length bound $h+h/\beta$, not $h$. The following lemma relates our guesswork with the optimal solution, formalizing some of the intuition provided earlier. 
\begin{lemma}[Correct Guess is Good Enough]\label{lemma:correct-guess}
    Given a parent node $\left(H,L\right)$ with child $\left(H',L'\right)$ in a hierarchy $\mathcal{T}$ and correct guesses $g,g'$, we have that $w\left(\LCST^*\left(F,U,h+h/\beta\right)\right)\leq O\left(\beta^\delta/\delta^3\right)\cdot\OPT_{|H}$.
\end{lemma}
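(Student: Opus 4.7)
The plan is to exhibit a feasible solution to the length-constrained Steiner tree instance on $F := F(H, H', g, g')$ with terminal set $U$ and length bound $h + h/\beta$ whose $w_F$-weight equals $\OPT_{|H}$, and then invoke \Cref{thm:chandra}. Specifically, let $S := (E(T^*) \cap E(H)) \cup E(L) \cup E(L') \cup X \cup X'$; pruning $S$ to a spanning tree of its connected component containing $\tilde r$ and $U$ can only lower its weight. In $F$, every edge of $E(L) \cup E(L')$ has both endpoints in a boundary and every edge of $X \cup X'$ has weight $0$ by \Cref{defn:pieces-to-whole}, so all such edges contribute nothing to $w_F$. Hence $w_F(S) = w(E(T^*_{|H})) = \OPT_{|H}$.

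For feasibility, fix a terminal $t'_i \in U$ corresponding to $V'_i \in \mathcal{P}'$, and let $v^*_i \in V'_i$ realize $h^*(V'_i)$. The $r$-$v^*_i$ path $P_i$ in $T^*$ has length $h^*(V'_i)$. Choose $u^*_i$ to be the last vertex of $P_i$ lying in $V(L)$ (defaulting to $u^*_i = r$ in the root case $(H,L) = (G,\emptyset)$, where $\{r\}$ is the sole piece of $\mathcal{P}$ and $h^*(\{r\}) = 0$), and let $\hat V_i \in \mathcal{P}$ denote the piece containing $u^*_i$. Because regions are edge-induced, any edge of $T^*$ leaving $V(H)$ from a vertex $w \in V(H)$ forces $w \in V(L)$; by the maximal choice of $u^*_i$, the subpath $P_i[u^*_i, v^*_i]$ cannot leave $V(H)$, so each of its edges is in $E(H) \subseteq E(F)$ and hence in $S$. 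Concatenating the $X$-edge $(\tilde r, u^*_i)$ of length $g(\hat V_i)$, the subpath $P_i[u^*_i, v^*_i]$ of length $h^*(V'_i) - d_{T^*}(r,u^*_i)$, and the $X'$-edge $(v^*_i, t'_i)$ of length $h - g'(V'_i)$ gives an $\tilde r$-to-$t'_i$ walk in $S$ of total $F$-length
$$g(\hat V_i) + \bigl(h^*(V'_i) - d_{T^*}(r,u^*_i)\bigr) + \bigl(h - g'(V'_i)\bigr).$$
By correctness of the guesses (\Cref{defn:correct}), $g(\hat V_i) < h^*(\hat V_i) + h/\beta \le d_{T^*}(r,u^*_i) + h/\beta$ and $g'(V'_i) \ge h^*(V'_i)$, so the walk has length strictly less than $h + h/\beta$.

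To finish, observe that $|U| = |\mathcal{P}'| \le O(\beta)$, so applying \Cref{thm:chandra} to $(F, U, h + h/\beta)$ with parameter $\delta$ returns in time $\beta^{O(1/\delta)}$ a tree $\LCST(F, U, h+h/\beta)$ of $w_F$-weight at most $O(\beta^\delta/\delta^3) \cdot w_F(S) = O(\beta^\delta/\delta^3) \cdot \OPT_{|H}$. Since edges in $E(L) \cup E(L') \cup X \cup X'$ have zero weight in $F$, we get $w_F(\LCST(F,U,h+h/\beta)) = w(\LCST^*(F,U,h+h/\beta))$ directly from \Cref{defn:lcst}, yielding the claim.

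\textbf{Main obstacle.} The most delicate step is the feasibility verification, specifically showing that the subpath $P_i[u^*_i, v^*_i]$ is available inside $S$. This relies on a careful maximal choice of $u^*_i$ as the last boundary visit of $P_i$, combined with the edge-induced nature of regions, to rule out excursions of $T^*$ outside $V(H)$. Once this structural claim is in place, the length bound is a routine telescoping of the two correctness inequalities, with the only minor wrinkle being the special handling of the root region $(G, \emptyset)$.
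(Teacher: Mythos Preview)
Your proposal is correct and follows essentially the same approach as the paper: exhibit a feasible solution for the instance $(F,U,h+h/\beta)$ built from $T^*$ restricted to $H$ together with the zero-weight edges $X,X'$ (and boundary edges), verify feasibility by tracing the $T^*$-path from $r$ to the realizer $v^*_i$ back to its last visit of the parent boundary and telescoping the two correctness inequalities, then invoke \Cref{thm:chandra}. Your treatment of why the subpath $P_i[u^*_i,v^*_i]$ stays inside $E(H)$ is actually more explicit than the paper's (which simply asserts such a subpath exists in $E(T^*_{|H})$), and the minor imprecision $w_F(S)=\OPT_{|H}$ should read $\le$, but the structure is identical.
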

\begin{proof}
    By definition of correct guesses we have $g(V)\geq h^*\left(V\right)$ for any piece of $\mathcal{P}$ (and the analogous statement for $g'$). Then we can construct a feasible solution of the length-constrained Steiner tree instance on $F,U,h$ using the $h$-length MST $T^*$ as follows: let $T$ be a subgraph whose vertex set is $V(T)=V(H)$ and edge set is $E(T)=E\left(T^*_{|H}\right)\cup X\cup X'$ (where $X,X'$ are the edge sets as defined in \Cref{defn:pieces-to-whole}). 
    
     Note that for every child piece $V'$, the edges $E\left(T^*_{|H}\right)$ contains a path to a $v\in V'$ with length at most $h$ as otherwise $T^*$ wouldn't be an $h$-length MST. Let $P$ be the shortest such path to a vertex in $V'$ across all vertices in $V'$; by our fixing of $T^*$ and definition of $h^*$ we have $h^*\left(V'\right)=l(P)$. Then $P$ may go through multiple different parent pieces, but there is eventually a subpath $P_2$ of $P$ that starts at some parent piece $\hat{V}$ and ends at $V'$ without touching any other parent pieces (and let $P_1$ be the entire subpath that precedes $P_2$). In particular, we have 
     \begin{align*}
         g\left(\hat{V}\right)&\leq h^*\left(\hat{V}\right)+h/\beta\leq l\left(P\right)_2+h/\beta,\\
         g'\left(V'\right)&\geq h^*\left(V'\right)= l\left(P_1\right)+l\left(P_2\right)
     \end{align*}
    by definition of correct guesses (\Cref{defn:correct}) and by our choice of $P$. Then the solution $T$ is feasible to the length-constrained Steiner tree instance with graph $F$, terminals $U$, and length bound $h+h/\beta$ because it contains a $\tilde{r},t'$ path that is the union an edge of length $g\left(\hat{V}\right)$ from $X$, the subpath $P_2$, and an edge of length $h-g'\left(V'\right)$ from $X'$. This path has length at most 
    \begin{align*}
        g\left(\hat{V}\right) + l\left(P_2\right) + h-g'\left(V'\right)
        &\leq l\left(P_1\right) + h/\beta + l\left(P_2\right) + h - l\left(P_1\right) - l\left(P_2\right) \\
        &= h+h/\beta
    \end{align*}
    for any terminal $t'\in U$.
    
    Let $\OPT_{(F,U,h+h/\beta)}$ be the weight of an optimal solution of the length-constrained Steiner tree instance on graph $F$ with terminal set $U$ and length bound $h+h/\beta$. Then we have
    \begin{align*}w\left(\LCST^*\left(F,U,h+h/\beta\right)\right) 
        &\leq O\left(\beta^\delta/\delta^3\right)\cdot\OPT_F\\
        &\leq O\left(\beta^\delta/\delta^3\right)\cdot w(T)\\
        &= O\left(\beta^\delta/\delta^3\right)\cdot\beta^\delta \left(w(E(T^*_{|H})+w(X)+w(X')\right)\\
        &= O\left(\beta^\delta/\delta^3\right)\cdot w\left(E(T^*_{|H})\right)\\
        &= O\left(\beta^\delta/\delta^3\right)\cdot\OPT_{|H}
    \end{align*} 
    where the first line is because \Cref{thm:chandra} returns an $O\left(t^\delta/\delta^3\right)$ approximate length-constrained Steiner tree, 
    the second line is because a subgraph of $T$ is a feasible solution to the length-constrained Steiner tree instance on $F,U,h+h/\beta$, 
    the third line is by how we defined $E(T)$, 
    the fourth line is because the only edges with positive weight in $E(T)$ are those in $E\left(T^*_{|H}\right)$, 
    and the fifth line is by \Cref{defn:restriction}. 
\end{proof}
\subsection{Solving Many Length-Constrained Steiner Trees with Dynamic Programming}\label{sec:DP}
We maintain a table $\DP$ containing an entry for every region $H$ and every possible guess $g$. 
\begin{definition}[$\DP$]\label{defn:dp-table}
    Given a length-constrained division hierarchy $\mathcal{T}$, a partition $\mathcal{P}_H$ of $L_H$ into pieces for every $\left(H,L_H\right)\in V(\mathcal{T})$, a region $H\in V(\mathcal{T})$, and a guess $g$ we let 
    \begin{align*}
    \DP[H,g] =
    \begin{cases}
        0 & H\text{ is a leaf in }\mathcal{T} \\
        \displaystyle
        \sum_{H'\in\operatorname{child}(H)} 
        \min_{g'\in \mathcal{G}^{|\mathcal{P}_{H'}|}}\left(
        \DP[H,g'] + w\left(\LCST^*\left(F,U,h+\frac{h}{\beta}\right)\right)
        \right) 
        & \text{o.w.}
    \end{cases}
    \end{align*}
    Each entry $\DP[H,g]$ stores the weight of a set $S$ of edges and implicitly stores $S$ as well, so we will abuse notation and let $E\left(\DP\left[H,g\right]\right)=S$.
\end{definition}
\noindent With the appropriate choices of $\alpha,\beta$ we have that the size of $\DP$ is $\poly(n)$.

We give some intuition on how the dynamic program works.
In the base case, the region $H$ is a leaf in $\mathcal{T}$ and thus contains no non-boundary vertices, so we don't do anything. 
Otherwise, we try the algorithm in \Cref{thm:chandra} on the length-constrained Steiner tree instance with graph $F$ and terminals $U$ as defined in \Cref{defn:pieces-to-whole} with length bound $h+h/\beta$. Otherwise, we take a minimum across all guesses for each child of $H$. Note that sometimes our guesses may be invalid, i.e.\ there is no tree in the instance satisfying the guessed lengths. In this case we consider this guess a failure and set the entry to $\infty$. The dynamic program connects child pieces to parent pieces for every child of a parent, from the bottom up via our construction of the length-constrained Steiner tree instances. At the top level of the hierarchy is the region $(G,\emptyset)$ whose only piece is defined to be $r$ in \Cref{defn:pieces-to-whole}, so in this case we essentially connect all the pieces that have been inductively connected to the only parent piece of $G$, the root.
\subsection{Algorithm Description and Analysis}\label{sec:realalgdesc}
We can now put everything together to get the final algorithm. We start by carefully setting our parameters $\xi=\eps/2,\alpha=\log^{\xi}n,\beta=\frac{\log n}{\xi^2\log\log n}$. We compute and add a $2h$-length-constrained $\alpha$-division hierarchy $\mathcal{T}$ of $G$ to our solution, compute a $\beta$-partition into pieces for each node in $\mathcal{T}$, solve the corresponding $\DP$, add the edge set stored by $\argmin_{g}\DP[G,g]$ to our solution, and then return a shortest path tree of the solution rooted at $r$. We provide the pseudocode in \Cref{alg:planarmain2}. 
\begin{algorithm}[H]
    \caption{Main Algorithm}    
    \begin{algorithmic}[1]
        \label{alg:planarmain2}
        \Statex \textbf{Input}: undirected planar graph $G=\left(V,E\right)$ with root $r\in V$, edge length and weight functions $l:E\to\mathbb{Z}_{\geq0},w:E\to \mathbb{Z}_{\geq0}$, length bound $h$, constant $\epsilon>0$.
        \Statex \textbf{Output}: a $(1+\eps)h$-length spanning tree of $G$ rooted at $r$ with weight at most $O\left(\log^{1+\eps}n\right)\cdot\OPT$.
        \Statex
        \State Set $\xi=\eps/2,\alpha=\log^{\xi}n,\beta=\frac{\log n}{\xi^2\log\log n}$.
        \State Compute a $2h$-length-constrained $\alpha$-division hierarchy $\mathcal{T}$ of $G$ using \Cref{lemma:hierarchy}.
        \State For each $(H,L_H)\in V(\mathcal{T})$, compute a $\beta$-partition $\mathcal{P}_H$ of $L_H$ into pieces using \Cref{lemma:low-diam-partition}.
        \State Solve the dynamic programming table $\DP$ with respect to $\mathcal{T},\{\mathcal{P}_H\}_{H\in V(\mathcal{T})}$ as defined in \Cref{defn:dp-table}, applying \Cref{thm:chandra} with $\delta=\xi$ at each cell. 
        \State Set $T\gets\left(V,\left(\bigcup_{H\in V(\mathcal{T})} E(L_H)\right)\cup E\left(\argmin_g\DP\left[G,g\right]\right)\right)$.
        \State \textbf{Return } a shortest path tree $\textbf{T}$ of $T$ rooted at $r$.
    \end{algorithmic}
\end{algorithm}


It remains to prove that \Cref{alg:planarmain2} is an efficient $O\left(\log^{1+\eps}n\right)$ approximation algorithm for length-constrained MST with length slack $1+\eps$.
\begin{lemma}[Runtime]\label{lemma:realalg-runtime}
    \Cref{alg:planarmain2} runs in time $\poly(n)\cdot n^{1/\eps^2}$.
\end{lemma}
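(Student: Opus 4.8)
The plan is to bound the running time of each of the six steps of \Cref{alg:planarmain2} separately and observe that only the dynamic programming step (Step~4) can contribute a superpolynomial factor, and that this factor is $n^{O(1/\eps^2)}$ by our choice of $\beta$. Steps~1, 5 and~6 are immediately polynomial (Step~6 is a single shortest-path tree computation, and Step~5 assembles an edge set and takes a minimum over the $O(\beta)$ guesses for the root, whose boundary partition has a single piece). Step~2 is polynomial by \Cref{lemma:hierarchy}, and from its proof (see also \Cref{lemma:hierarchy-tree}) I would record that $\mathcal{T}$ has $\poly(n)$ nodes, depth $O(\log_\alpha n)$, and $O(\alpha)$ children per node. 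Step~3 invokes \Cref{lemma:low-diam-partition} once per connected component of the boundary of each node of $\mathcal{T}$; since each boundary has $O(1)$ components and there are $\poly(n)$ nodes, this is polynomially many polynomial-time calls.

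The crux is Step~4, which I would handle by first counting the entries of $\DP$ and then the work per entry. There is one entry per pair $(H,g)$ with $H\in V(\mathcal{T})$ (of which there are $\poly(n)$) and $g$ a guess, i.e.\ an element of $\mathcal{G}^{|\mathcal{P}_H|}$. By \Cref{defn:guess} we have $|\mathcal{G}|=\lceil\beta\rceil$, and by the discussion following \Cref{lemma:low-diam-partition} the boundary of any node has $O(1)$ connected components, each split into $O(\beta)$ pieces, so $|\mathcal{P}_H|=O(\beta)$. Hence the number of guesses per region is $\beta^{O(\beta)}$, and plugging in $\beta=\tfrac{\log n}{\xi^2\log\log n}$ with $\xi=\eps/2$ and using $\log\beta\le\log\log n$, so that $\beta\log\beta=O(\log n/\xi^2)$, gives
\[
  \beta^{O(\beta)} = \exp\!\bigl(O(\beta\log\beta)\bigr) = \exp\!\bigl(O(\log n/\xi^2)\bigr) = n^{O(1/\xi^2)} = n^{O(1/\eps^2)}.
\]
Thus $\DP$ has $\poly(n)\cdot n^{O(1/\eps^2)}$ entries, and, computing bottom-up, each is computed once.

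Next I would bound the cost of filling one entry $\DP[H,g]$. A leaf costs $O(1)$. For a non-leaf, \Cref{defn:dp-table} has us loop over the $O(\alpha)$ children $H'$ and, for each, minimize over the $\beta^{O(\beta)}=n^{O(1/\eps^2)}$ guesses $g'$; for each $(H',g')$ we build $F(H,H',g,g')$ (which has at most $\poly(n)$ vertices and edges), read off the already-computed $\DP[H',g']$ in $O(1)$ time, and compute $w\bigl(\LCST^*(F,U,h+h/\beta)\bigr)$ by running \Cref{thm:chandra} with $\delta=\xi$ on a length-constrained Steiner tree instance with $t=|\mathcal{P}_{H'}|=O(\beta)$ terminals. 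That call costs $\poly(n)\cdot t^{O(1/\delta)}=\poly(n)\cdot\beta^{O(1/\xi)}$, and since $\beta\le\log n$ and $\xi$ is a constant, $\beta^{O(1/\xi)}=(\log n)^{O(1/\xi)}=\poly(n)$. So one entry costs $O(\alpha)\cdot n^{O(1/\eps^2)}\cdot\poly(n)=\poly(n)\cdot n^{O(1/\eps^2)}$, using $\alpha=\log^{\xi}n=\poly(n)$. Multiplying by the number of entries bounds Step~4, and hence the whole algorithm, by $\poly(n)\cdot n^{O(1/\eps^2)}$, which is the claimed bound.

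The one genuinely delicate point — the step I would double-check most carefully — is the chain of estimates showing $\beta^{O(\beta)}=n^{O(1/\eps^2)}$. This is exactly the ``sweet spot'' that forces the choice $\beta\approx\tfrac{\log n}{\eps^2\log\log n}$: one needs the $\log\log n$ in the denominator of $\beta$ to cancel the factor $\log\beta\approx\log\log n$ produced by expanding $\beta^{\beta}$, so that the exponent lands at $O(\log n/\eps^2)$ rather than $\omega(\log n)$, which would only give a quasi-polynomial bound. Everything else is routine bookkeeping with the $\poly(n)$ and $\poly(\log n)$ factors, relying on the polynomial-time guarantees already established for the hierarchy (\Cref{lemma:hierarchy}), the boundary partition (\Cref{lemma:low-diam-partition}), and the Steiner tree subroutine (\Cref{thm:chandra}).
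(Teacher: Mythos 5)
Your proof is correct and follows essentially the same route as the paper's: bound Steps~1--3, 5, 6 as polynomial via \Cref{lemma:hierarchy} and \Cref{lemma:low-diam-partition}, then show the DP table has $\poly(n)\cdot\beta^{O(\beta)}=\poly(n)\cdot n^{O(1/\eps^2)}$ cells and that the per-cell work is dominated by calls to \Cref{thm:chandra} costing $\beta^{O(1/\xi)}=\polylog(n)$ each. You are actually slightly more careful than the paper's write-up in noting that filling one cell requires $O(\alpha)\cdot\beta^{O(\beta)}$ subroutine calls (one per child--guess pair) rather than one, but since $n^{O(1/\eps^2)}\cdot n^{O(1/\eps^2)}=n^{O(1/\eps^2)}$ this does not change the final bound.
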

\begin{proof}
    Step 2 of the algorithm can be done in time $\poly(n)$ by \Cref{lemma:hierarchy}, step 3 can be done in time $\poly(n)$ by \Cref{lemma:low-diam-partition} and the fact there are $\poly(n)$ vertices partitions to compute (one for each node in the hierarchy), step 4 can be done in time $\poly(n)\cdot n^{O\left(1/\eps^2\right)}$ since by our choices of $\xi,\alpha,\beta$ we have the number of cells in $\DP$ is 
    \begin{align*}
        |V(\mathcal{T})|\cdot \beta^{O(\beta)} &= O\left(\alpha\right)^{\log_\alpha n}\cdot \beta^{O(\beta)}\\
        &= O\left(\log^\xi n\right)^{\frac{\log n}{\xi\log\log n}}\cdot \left(\frac{\log n}{\xi^2\log\log n}\right)^{O\left(\frac{\log n}{\xi^2\log\log n}\right)}\\
        &= \poly(n) \cdot n^{O\left(1/\xi^2\right)} \\
        &= \poly(n)\cdot n^{O\left(1/\eps^2\right)}
    \end{align*}
    and our application of the algorithm of \cite{charikar1999approximation} runs in time $\beta^{O\left(1/\xi\right)}=\log^{O\left(1/\eps\right)} n \leq n^{O\left(1/\eps^2\right)}$ for each cell by \Cref{thm:chandra}, and computing a shortest path tree can be done in time $\poly(n)$. Therefore the running time of \Cref{alg:planarmain2} is $\poly(n)\cdot n^{O\left(1/\eps^2\right)}$.
\end{proof}
\noindent Note that since $\eps$ is already assumed to be a constant, the runtime can be simplified to $\poly(n)$, but we expose the dependence on $\eps$ from our guessing for extra clarity. 
\begin{lemma}[Everything]\label{lemma:realeverything}
    \Cref{alg:planarmain2} returns a spanning tree $\textbf{T}$ of $G$ such that $w(\textbf{T})\leq O\left(\log^{1+\eps}n\right)\OPT$ and for every $v\in V$ we have $d_\textbf{T}(r,v)\leq (1+\eps)h$.  
\end{lemma}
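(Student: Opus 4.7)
The plan is to verify the two claims separately: the weight bound $w(\mathbf{T}) \leq O(\log^{1+\eps}n) \cdot \OPT$ and the length bound $d_\mathbf{T}(r,v) \leq (1+\eps)h$. Since $\mathbf{T}$ is a shortest-path tree of $T$ rooted at $r$, we immediately have $w(\mathbf{T}) \leq w(T)$ and $d_\mathbf{T}(r,v)=d_T(r,v)$, so it suffices to bound these quantities for $T$, plus show $T$ is connected so that $\mathbf{T}$ actually spans $V$.

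For the weight bound, I would split $w(T) \leq w\left(\bigcup_{H \in V(\mathcal{T})} E(L_H)\right) + w\left(\DP[G, g^{\min}_G]\right)$. The first summand is at most $O(\alpha \log_\alpha n) \cdot \OPT$ by \Cref{lemma:hierarchy}. For the second, let $g^{\text{cor}}_H$ be a correct guess for each $H$ (with the mild adjustment that $g^{\text{cor}}_G(\{r\}) = h/\beta$ since $\mathcal{G}$ does not contain $0$). Induct on $\mathcal{T}$ from the leaves upwards to show $\DP[H, g^{\text{cor}}_H] \leq O(\beta^{\xi}/\xi^3) \cdot \sum_{H'' \text{ descendant of } H} \OPT_{|H''}$, using \Cref{lemma:correct-guess} for the LCST term at each parent–child pair and the inductive hypothesis for the $\DP[H',g']$ terms. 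Since $\DP$ minimizes over guesses, $\DP[G, g^{\min}_G] \leq \DP[G, g^{\text{cor}}_G]$, and \Cref{lemma:restriction-sums} applied level-by-level yields $\sum_H \OPT_{|H} \leq O(\log_\alpha n) \cdot \OPT$. Plugging in $\alpha = \log^{\xi}n$ and $\beta = \log n/(\xi^2 \log\log n)$ with $\xi = \eps/2$, both $\alpha \log_\alpha n$ and $\beta^\xi \log_\alpha n / \xi^3$ are $O(\log^{1+\eps/2} n / \log\log n) = O(\log^{1+\eps}n)$.

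For the length bound, I would prove by induction on the hierarchy depth $i$ the following claim: for every region $H_i$ at level $i$ and every piece $V \in \mathcal{P}_{H_i}$, there is a vertex $u_V \in V$ with $d_T(r, u_V) \leq g^{\min}_{H_i}(V) + 2i \cdot h/\beta$, where $g^{\min}_{H_i}$ is the guess that $\DP$ selected at $H_i$. The base case is $i = 0$ at the root $G$, whose only piece is $\{r\}$ itself, reachable at distance $0$. For the inductive step, fix $V' \in \mathcal{P}_{H_{i+1}}$. The LCST solution returned by \Cref{thm:chandra} on $F(H_i, H_{i+1}, g^{\min}_{H_i}, g^{\min}_{H_{i+1}})$ with length bound $h + h/\beta$ contains a $\tilde r$–$t'_{V'}$ path that decomposes as an $X$-edge to some $a \in V$ (a parent piece) of length $g^{\min}_{H_i}(V)$, an internal $H_i$-path of length at most $g^{\min}_{H_{i+1}}(V') - g^{\min}_{H_i}(V) + h/\beta$ ending at some $b \in V'$, and an $X'$-edge of length $h - g^{\min}_{H_{i+1}}(V')$. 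The internal segment lies in $\LCST^* \subseteq T$, and by the inductive hypothesis there is $u_V \in V$ with $d_T(r, u_V) \leq g^{\min}_{H_i}(V) + 2i \cdot h/\beta$. Since $L_{H_i}[V] \subseteq T$ has diameter at most $h/\beta$ by \Cref{lemma:low-diam-partition}, we can route $u_V \to a$ at cost $\leq h/\beta$, giving $d_T(r,b) \leq g^{\min}_{H_{i+1}}(V') + 2(i+1) h/\beta$.

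Finally, every vertex $v \in V$ is a boundary vertex of some leaf $H_k$ (leaves contain only boundary vertices), so $v \in V'$ for some piece $V' \in \mathcal{P}_{H_k}$. By the claim we reach $u_{V'} \in V'$ at distance $\leq g^{\min}_{H_k}(V') + 2k h/\beta \leq h + 2k h/\beta$, and travel within $V'$ via $L_{H_k} \subseteq T$ to $v$ at additional cost $\leq h/\beta$. With $k \leq \log_\alpha n = \log n/(\xi \log\log n)$ and $\beta = \log n/(\xi^2 \log\log n)$ we get $2k/\beta = 2\xi = \eps$, so $d_T(r,v) \leq h + (2k+1)h/\beta \leq (1+\eps)h$ for large $n$ (with a marginal tweak of $\xi$ absorbing the additive $h/\beta$). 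Connectedness of $T$ follows from the same argument since every vertex has a path to $r$, so the shortest-path tree $\mathbf{T}$ is a genuine spanning tree. The main obstacle is the length-slack induction: we must track that \emph{both} the $h/\beta$ slack from \Cref{lemma:low-diam-partition} and the $h/\beta$ slack in the LCST length bound accumulate linearly, not multiplicatively, in the hierarchy depth, and that the DP-chosen guesses (which need not be correct) still yield feasible LCST paths whose internal length is governed by $g^{\min}_{H_{i+1}}(V') - g^{\min}_{H_i}(V)$ rather than by the optimal $h^*$ values.
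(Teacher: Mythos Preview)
Your plan mirrors the paper's proof almost exactly: split into weight and length, bound the hierarchy boundaries via \Cref{lemma:hierarchy}, bound the DP via an induction using \Cref{lemma:correct-guess} with correct guesses, and for length run an induction along the hierarchy using the $X$/$X'$ decomposition of the LCST paths together with the $h/\beta$ piece diameters. The paper inducts bottom-up (showing each $v\in V(H)$ reaches some parent piece at length $\le h-g(\hat V)+i\cdot 2h/\beta$), whereas you induct top-down (showing each piece is reachable from $r$ at length $\le g^{\min}(V)+2i\cdot h/\beta$); these are dual formulations of the same telescoping argument and both work for the DP-chosen guesses because feasibility of the LCST instance is all that is used.

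One bookkeeping slip: your weight invariant $\DP[H,g^{\mathrm{cor}}_H]\le O(\beta^\xi/\xi^3)\sum_{H''\text{ desc of }H}\OPT_{|H''}$ does not close as stated. \Cref{lemma:correct-guess} bounds each parent--child LCST term by $O(\beta^\xi/\xi^3)\cdot\OPT_{|H}$ (the \emph{parent}'s restriction), and a parent has $O(\alpha)$ children, so the inductive step picks up an $O(\alpha)$ factor you have not accounted for. The paper's invariant is $\DP[H,g]\le \alpha\beta^\xi C\,i\cdot\OPT_{|H}$, which absorbs this. With the $\alpha$ restored your final estimate becomes $O(\alpha\beta^\xi\log_\alpha n/\xi^3)=O(\log^{1+\eps}n)$ rather than $O(\log^{1+\eps/2}n)$, so the headline bound is unaffected.
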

\begin{proof}    
    Feasibility of $\textbf{T}$ is immediate as $\bigcup_{H\in V(\mathcal{T})} E(L_H)$ alone spans all of $V$ by \Cref{defn:hierarchy}.

    Since $\DP$ solves instances in the hierarchy $\mathcal{T}$ from bottom to top, we will reverse our indexing of the levels by letting level $1$ be the level containing the leaves of $\mathcal{T}$ and level $\operatorname{depth}(\mathcal{T})$ be the level containing $(G,\emptyset)$. 
    
    We start with the weight approximation. Let $C'$ be the hidden constant in the approximation guarantee of \Cref{thm:chandra}, i.e.\ \Cref{thm:chandra} guarantees a $C't^\delta/\delta^3$-approximation for length-constrained Steiner tree, and let $C:=C'/\xi^3$. We will prove by induction that given a node $(H,L)\in V(\mathcal{T})$ in level $i$ and a correct guess $g$, it satisfies $\DP[H,g]\leq \alpha\beta^\xi Ci\OPT_{|H}$. 
    
    In the base case of $i=1$ this is trivial, since by \Cref{defn:dp-table} all such entries are set to $0$. 
    So we assume the statement holds for $i-1$ and we get 
    \begin{align*}
        \DP[H,g] &= 
        \sum_{H'\in \operatorname{child}\left(H\right)} \min_{g'}\left(\DP[H',g'] + w\left(\LCST^*\left(F\left(H,H',g,g'\right),U,h+h/\beta\right)\right)\right) \\
        &\leq \sum_{H'\in \operatorname{child}\left(H\right)} \alpha\beta^\xi C(i-1)\OPT_{|H'} + w\left(\LCST^*\left(F\left(H,H',g,g'\right),U,h+h/\beta\right)\right) \\
        &\leq \alpha\beta^\xi C(i-1)\OPT_{|H}+ \sum_{H'\in \operatorname{child}\left(H\right)}w\left(\LCST^*\left(F\left(H,H',g,g'\right),U,h+h/\beta\right)\right) \\
        &\leq \alpha\beta^\xi C(i-1)\OPT_{|H}+ \sum_{H'\in \operatorname{child}\left(H\right)}\beta^\xi C\OPT_{|H} \\
        &\leq \alpha\beta^\xi C(i-1)\OPT_{|H}+ \alpha\beta^\xi C\OPT_{|H} \\
        &= \alpha\beta^\xi Ci\OPT_{|H}
    \end{align*}
    where the first line is by \Cref{defn:dp-table},
    the second line is by induction,
    the third line is by \Cref{lemma:restriction-sums}, 
    the fourth line is by \Cref{lemma:correct-guess}, 
    and the fifth line is because there are $\alpha$ children by \Cref{lemma:hierarchy}.
    So by our choices of $\alpha,\beta,\xi$ we have
    \begin{align*}
    \alpha\beta^\xi Ci\OPT_{|H} &=
    \left(\frac{\log^{2\xi}}{\left(\xi^2\log\log n\right)^{\xi}}\right) Ci\OPT_{|H}\\
        &\leq O\left(\log^{2\xi}n\right)i\OPT_{|H} \\
        &= O\left(\log^{\eps}n\right) i\OPT_{|H}
    \end{align*}
    where the second line is because $C$ is a constant. 
    
    Combining with \Cref{lemma:hierarchy-tree}, we have that there is a guess $g$ for the top region $G$ such that 
    \begin{align*}
        \DP[G,g]
        \leq O\left(\log^{\eps}n\right)\left(\log_\alpha n\right)\OPT 
        = O\left(\log^\eps n\right)\left(\frac{\log n}{\xi\log\log n}\right)\OPT 
        \leq O\left(\log^{1+\eps} n\right)\OPT
    \end{align*}
    Now we analyze the length slack. We will prove by induction that given a node $(H,L)\in V(\mathcal{T})$ in level $i$ any vertex $v\in V(H)$, there exists a piece $\hat{V}$ of $L$'s boundary partition such that $v$ is connected to a vertex of $\hat{V}$ with length at most $h-g\left(\hat{V}\right)+i\frac{2h}{\beta}$ in the subgraph $S$ defined as follows: 
    \begin{align*}
        V(S)&=V(H),  \\
        E(S)&=\left(E\left(H\right)\cap\left(\bigcup_{H'\in \operatorname{child}(H)}E\left(\LCST^*\left(F\left(H,H',g,g'\right),U,h+h/\beta\right)\right)\right)\right)\cup E(L)
    \end{align*}    
    In the base case of $i=1$, all vertices in $H$ are boundary and thus belong to some piece in $L$'s boundary partition, so every $v\in V\left(H\right)$ is connected to a vertex of a piece (e.g.\ to itself) with length $0$ which is at most $h-g\left(\hat{V}\right)+2h/\beta$ since $h\geq g\left(\hat{V}\right)$.  
    
    So we assume the statement holds for $i-1$. We fix a node $(H,L)$ at level $i$ with guess $g$ and a $v\in V(H)$, and let $S$ be defined as before. If $v\in L$ then this is the same situation as in the base case we are done. Otherwise, $v$ must be in some child region $H'$ (with guess $g'$) of $H$, which we can apply induction to. By our construction of the instance in \Cref{defn:pieces-to-whole}, $\LCST\left(F\left(H,H',g,g'\right),U,h+h\beta\right)$ must contain a path $P$ from $\tilde{r}$ to $v$ that is the union of an edge $(\tilde{r},a)\in X$ with length $g\left(\hat{V}\right)$ where $a$ is in the piece $\hat{V}$ in $L$'s boundary partition, and a subpath $P$ from $a$ to a vertex $b$ in a child piece $V'$, and an edge $(b,t')\in X'$ with length $h-g'\left(V'\right)$. So we have 
    $$ g\left(\hat{V}\right)+l(P)+h-g\left(V'\right)\leq h+h/\beta$$
    which implies
    $$ l(P) \leq g'\left(V'\right) - g\left(\hat{V}\right)+ h/\beta$$
    Meanwhile, in $S$ there is a path from $a$ to $v$ that is the union of $P$, the path from $b$ to some vertex $c\in V'$ of length at most $h/\beta$ (since $b,c$ are in the same piece), and the $c,v$ path $P'$ computed inductively by the child. Recall that $a\in \hat{V}$; then we have 
    \begin{align*}
        d_S(a,v) 
        &\leq l(P)+l(P')+d_S(b,c)\\
        &\leq g'\left(V'\right)-g\left(\hat{V}\right)+h/\beta + h- g'\left(V'\right) + (i-1)\frac{2h}{\beta} + h/\beta\\
        &= h-g\left(\hat{V}\right) + i\frac{2h}{\beta} 
    \end{align*}
    where the second line is by induction.
    Recall that for the top level $(G,\emptyset)$ we let $\tilde{r}=r$ and the only parent piece is $r$. 
    Then at the top level we have for any $v\in V$ that 
    \begin{align*}
        d_S(r,v) 
        &\leq h-g\left(\hat{V}\right) + \operatorname{depth}(\mathcal{T})\cdot\frac{2h}{\beta} \\
        &\leq h + \operatorname{depth}\left(\mathcal{T}\right)\cdot\frac{2h}{\beta} \\
        &\leq h + \left(\log_\alpha n\right)\frac{2h}{\beta} \\
        &= h+\frac{\log n}{\xi\log\log n}\cdot\frac{2h\xi^2\log\log n}{\log n} \\
        &= h+2h\xi \\
        &= h+ h\eps
    \end{align*}
    where the second line is because $g\left(\hat{V}\right)\geq0$,
    the third line is by \Cref{lemma:hierarchy-tree}, 
    the fourth is by our setting of $\alpha,\beta$, and the sixth line is by our setting of $\xi$.
    

    It follows that there is a guess $g$ such that 
    $$T=\left(V,\left( \bigcup_{H\in V(\mathcal{T})} E\left(L_H\right) \cup E\left(\DP\left[G,g\right]\right) \right)\right)$$
    satisfies both $w(T)\leq O\left(\log^{1+\eps}n\right)\OPT$ and $d_T\left(r,v\right)\leq h+h\eps$ for any $v\in V$. To see why the weight bound holds, note that by \Cref{lemma:hierarchy} the total edge weight of the hierarchy is 
    \begin{align*}
        w\left(\bigcup_{H\in V(\mathcal{T})}E\left(L_H\right)\right) 
        &\leq O\left(\alpha\log_\alpha n\right)\OPT \\
        &= O\left(\frac{\log^{1+\xi} n}{\xi\log\log n}\right)\OPT \\
        &= O\left(\frac{\log^{1+\eps/2}n}{\eps\log\log n}\right)\OPT 
    \end{align*}
    and since this term is negligible compared to $\DP[G,g]$ we may conclude that
    \begin{align*}
        w(T) 
        &\leq O\left(\frac{\log^{1+\eps/2}n}{\eps\log\log n}\right)\OPT  + \DP[G,g] \\
        &\leq O\left(\frac{\log^{1+\eps/2}n}{\eps\log\log n}\right)\OPT+ O\left(\log^{1+\eps} n\right)\OPT \\
        &= O\left(\log^{1+\eps} n\right)\OPT
    \end{align*}
    Finally, taking a shortest path tree of $T$ rooted at $r$ can only reduce the total edge weight and does not change any distance-to-$r$. Therefore $w\left(\textbf{T}\right)\leq O\left(\log^{1+\eps}n\right)\OPT$ and $d_{\textbf{T}}\left(r,v\right)\leq h+h\eps$ for any $v\in V$. 
\end{proof}

\noindent \Cref{thm:planarmain2} follows from \Cref{lemma:realalg-runtime,lemma:realeverything}.

\section{Integrality Gap of the Flow-Based LP Relaxation}\label{sec:intgap}
In this section we give a much simpler algorithm for planar length-constrained MST with the same length slack and a slightly worse weight-approximation guarantee, but is LP-competitive. The LP relaxation we are concerned with is below:
\begin{equation}\label{equ:lcmst-flow}
\begin{split}
    \min\sum_{e\in E}w(e)x_e &\cr\text{ such that}& \\
    \sum_{P\in \textbf{P}_h(r,v)}f^v_P&= 1 \text{ }\forall\text{ }v\in V- r \\
    \sum_{P\in \textbf{P}_h(r,v),P\ni e} f^v_P&\leq x_e\text{ }\forall\text{ }e\in E,v\in V-r\\
    x_e&\geq 0\text{ }\forall\text{ }e\in E\\
    f^v_P &\geq 0\text{ }\forall\text{ }v\in V-r, P\in \textbf{P}_h(r,v)
\end{split}
\end{equation}
\noindent This is a standard flow-based relaxation. We define a flow variable for every $h$-length path from $r$ and guarantee that each non-root vertex gets sent exactly $1$ flow while using the $x_e$'s as the capacities. Our main result in this section is the following theorem:
\begin{theorem}\label{thm:intgap}
    Let $x$ be an optimal fractional solution to the flow-based LP of length-constrained MST with planar graph $G=(V,E)$, root $r$, and length bound $h$; there exists a polynomial-time algorithm that returns a spanning tree $T$ of weight $O\left(\frac{\log^2 n}{\eps}\right)w(x)$ and satisfying $d_T(r,v)\leq(1+\eps)h$ for every $v\in V$.
\end{theorem}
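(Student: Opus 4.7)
My plan is to adapt \Cref{alg:planarmain2} by replacing the use of \Cref{thm:chandra} at each dynamic-programming cell with a direct rounding of the flow-based LP on the local Steiner-tree instance, gaining LP-competitiveness at the cost of a multiplicative $\log n$ factor. I would use a $2h$-length $\alpha$-division hierarchy $\mathcal{T}$ from \Cref{lemma:hierarchy} together with boundary partitions from \Cref{lemma:low-diam-partition}, choosing $\alpha$ and $\beta$ similarly to \Cref{alg:planarmain2} so that $\beta^\beta$ remains polynomial (enabling the DP), the depth is $\log_\alpha n$, and the per-level additive length overhead $h/\beta$ sums to $\epsilon h$ overall (giving $1+\epsilon$ slack by the same telescoping argument as in \Cref{lemma:realeverything}).

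The heart of the LP analysis is a fractional analog of \Cref{lemma:bounded-radius}: for each region $H\in V(\mathcal{T})$, I would show $D^{(2h)}(H^{0})\leq O(x_{|H})$ where $x_{|H}:=\sum_{e\in E(H)}w(e)x_e$. Since planarity and the few-components property ensure each edge appears in $O(1)$ regions per level of $\mathcal{T}$, one has $\sum_{H\in\text{level}}x_{|H}\leq O(x)$. To bound $D^{(2h)}(H^{0})$, note that the LP constraint $\sum_{P\ni e}f^v_P\leq x_e$ implies $\sum_P f^v_P w(P)=\sum_e w(e)\sum_{P\ni e}f^v_P\leq x$ for every $v$; restricting each flow path to its maximal suffix inside $H$ (and invoking the separated-boundary path-replacement trick of \Cref{lemma:shrinking-diam} to convert outside-$H$ portions to zero-weight $L_H$-detours in $H^{0}$) gives an $h$-length path from $\bound_G(H)\cup\{r\}$ to $v$ in $H^{0}$ of weight at most $x_{|H}$. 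Concatenating two such paths for $u,v\in V(H)$ through a zero-weight $L_H$ bridge yields the desired $2h$-length $u$-$v$ path in $H^{0}$ of weight $O(x_{|H})$. Plugging this into the proof of \Cref{lemma:hierarchy} bounds the hierarchy boundary cost by $O(\alpha\log_\alpha n)\cdot x$.

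For each DP cell, I would solve the natural flow-based LP on the Steiner-tree instance $F(H,H',g,g')$ from \Cref{defn:pieces-to-whole}, which by a fractional analog of \Cref{lemma:correct-guess} is feasible with objective $O(x_{|H})$ under a correct guess, and round by sampling one $(h+h/\beta)$-length flow path per terminal. The expected per-cell weight is $O(\beta)\cdot x_{|H}$, summing to $O(\beta)\cdot x$ per level and $O(\beta\log_\alpha n)\cdot x$ overall across the DP. Choosing $\alpha$ and $\beta$ so that $\alpha\log_\alpha n+\beta\log_\alpha n=O(\log^2 n/\epsilon)$ and $h\log_\alpha n/\beta\leq\epsilon h$ (paralleling the parameter choice in \Cref{lemma:realeverything}) yields the claimed $O(\log^2 n/\epsilon)$ weight approximation and $1+\epsilon$ length slack, both preserved by the final shortest-path-tree step.

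The main obstacle is the fractional analog of \Cref{lemma:bounded-radius}: defining $x_{|H}$ so that it simultaneously bounds $D^{(2h)}(H^{0})$ and telescopes to $O(x)$ per level, and in particular handling the concatenation of suffix-flow paths $Q_u,Q_v$ that terminate in different connected components of $L_H$. The separated-boundary property precludes a zero-weight bridge inside $L_H$ in that case, so one must either argue via averaging over boundary-entry components, or via a bridging path through $r$ (when $r\in V(H)$) or through internal $H$-edges whose cost is already absorbed by $x_{|H}$. A secondary concern is verifying that the LP on $F(H,H',g,g')$ inherits value $O(x_{|H})$ under the slight length-budget inflation $h+h/\beta$ coming from a correct guess, which parallels the length bookkeeping in \Cref{lemma:correct-guess}.
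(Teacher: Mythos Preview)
Your approach is genuinely different from, and more complicated than, the paper's. The paper does not adapt the DP/guessing machinery of \Cref{alg:planarmain2}; instead it exploits the fact that the LP value $\textbf{w}(x)$ is a concrete weight budget, making guessing unnecessary. Concretely, the paper (i) uses \emph{simple} $h$-length regions requiring only $l(L_H)\leq O(h)$, dropping the few-components and separated-boundary properties; (ii) sets $\alpha=3/2$ and $\beta=\log n/\eps$; (iii) \emph{contracts} each separator path into the root rather than flattening, so every region boundary is a single path; and (iv) for each boundary piece, directly buys the minimum-length path in $G$ from $r$ to that piece of weight at most $\textbf{w}(x_{|H})$---a bicriteria shortest-path computation, with no LP rounding, Steiner-tree subroutine, or dynamic program. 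The hierarchy cost is $O(\log n)\cdot\textbf{w}(x)$ via the one-line bound $D^{(h)}(G)\leq\textbf{w}(x)$ (exactly your averaging argument over flow paths, applied once globally), and contraction ensures the $h$-radius never increases in sub-regions, so no per-region fractional diameter bound is needed. The shortcut paths cost $O(\beta)\cdot\textbf{w}(x_{|H})$ per region, summing to $O(\log^2 n/\eps)\cdot\textbf{w}(x)$; the length argument is the same $h/\beta$-per-level telescoping.

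The obstacles you flag---bridging suffix-flow paths that land in different components of $L_H$, and establishing a fractional analog of \Cref{lemma:correct-guess}---are real complications of your route that the paper's simplifications sidestep entirely: single-path boundaries have no multiple components to bridge, and without guessing there is no \Cref{lemma:correct-guess} to replicate. Your plan might be pushed through, but the paper's approach is shorter and avoids precisely the technical hurdles you identified.
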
 
\noindent 
On the (planar) directed Steiner tree side of things, \cite{chekuri2024directedpolymatroid} proved a similar theorem on the integrality gap of the corresponding cut-based LP relaxation by extending the algorithm of \cite{friggstad2023planardst}. 

Despite the LP having exponentially many variables, an extreme point of \Cref{equ:lcmst-flow} can be found in polynomial time via standard techniques, i.e.\ the ellipsoid method using min-cut computations as the separation oracle.

\subsection{Simpler Hierarchies and Piece-Connecting}
 Let $\textbf{w}(x)=\sum_{e\in E}w(e) x_e$ and $x_{|H}=$ be the restriction of $x$ on region $H$, i.e.\ $x_{|H_e}=x_e$ if $e\in E(H)$ and $0$ otherwise.
 
 We use a much simpler notion of length-constrained regions and divisions.
\begin{definition}[Simple Length-Constrained Region]\label{defn:simple-LC-region}
    A \textbf{simple $h$-length region} of a graph $G=(V,E)$ with edge lengths $l$ is a pair $H,L_H$ where $H$ is a region of $G$ and $L_H$ is a subgraph of $H$ with $V(L_H)=\bound_G(H)$ satisfying $l(L_H)\leq O(h)$. 
\end{definition}
\noindent Then we can define simple length-constrained divisions and simple length-constrained division hierarchies analogously with \Cref{defn:LC-division,defn:hierarchy}, but with simple $h$-length regions. 

\begin{lemma}[Simple Length-Constrained Division Hierarchy Existence and Algorithm]\label{lemma:simple-hierarchy}
    Given a planar graph $G=(V,E)$ with edge lengths $l$, edge weights $w$, $h\geq 1$, and extreme point $x$ to \Cref{equ:lcmst-flow}, one can compute in polynomial time a simple $2h$-length-constrained $3/2$-division hierarchy $\mathcal{T}$ of $G$ satisfying
    $$ w\left(\bigcup_{H\in V(\mathcal{T})} L_H\right)\leq O(\log n)\cdot \textbf{w}(x)$$
\end{lemma}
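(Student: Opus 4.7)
The approach adapts the construction of Lemma~\ref{lemma:hierarchy} but uses the simpler region definition and replaces $\OPT$ with the LP value $\textbf{w}(x)$. Since simple length-constrained regions only require the length-constrained boundary property, the construction reduces to a two-phase alternating vertex-weighting scheme within each application of the length-constrained separator from Lemma~\ref{lemma:lengthseparator}: one phase weights non-boundary vertices uniformly (to produce a $3/2$-division via balanced vertex count), while the other weights boundary vertices by incident boundary length (to maintain $l(L_H)\leq O(h)$ via the same geometric-series argument used in the proof of Lemma~\ref{lemma:lc-div}). Between levels, the boundary is flattened as in Definition~\ref{defn:h+}. The recursion terminates after $O(\log n)$ levels since each pair of separators shrinks non-boundary vertex counts by a constant factor.

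For the weight bound, the key ingredient I would prove is an LP analog of Lemma~\ref{lemma:bounded-radius}: for any simple $2h$-length region $(H,L_H)$ appearing in the hierarchy,
$$D^{(2h)}(H^{0})\leq O\!\left(\textbf{w}\!\left(x_{|H^{0}}\right)\right),$$
where $x_{|H^{0}}$ denotes the LP $x$ restricted to edges in $E(H)$ after flattening. The plan is to use standard flow decomposition of $x$ into $h$-length $r$-to-$v$ paths for each $v\in V$; for arbitrary $u,v\in V(H)$, concatenating such a decomposed path for $u$ with one for $v$ at $r$ yields a $2h$-length $u$-$v$ walk in $G$, and replacing its portions outside $H$ with zero-weight segments of the flattened boundary $L_H$ produces a $2h$-length $u$-$v$ path in $H^{0}$ of expected weight at most $2\textbf{w}(x_{|H^{0}})$.

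Combining these, each separator in the hierarchy has weight $O(D^{(2h)}(H^{0}))$ by Lemma~\ref{lemma:lengthseparator}, so the total boundary weight at level $i$ is at most $\sum_{H\in l_i} O\bigl(\textbf{w}(x_{|H^{0}})\bigr)\leq O(\textbf{w}(x))$, where the last inequality uses that edges of $G$ are partitioned across level-$i$ regions up to flattened boundary whose contribution to $\textbf{w}$ vanishes. Summing over the $O(\log n)$ levels yields the claimed $O(\log n)\cdot\textbf{w}(x)$ bound.

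The main technical obstacle will be proving the LP analog of Lemma~\ref{lemma:bounded-radius} without the separated-boundary property that full length-constrained regions enjoy. In the proof of Lemma~\ref{lemma:bounded-radius}, that property ensured that any subpath of $T^*$ leaving and returning to $H$ between boundary vertices in the same connected component of $L_H$ could be shortcut at zero weight through $L_H$. For simple regions, when a decomposed $r$-to-$v$ flow path exits and re-enters $H$ at different components of $L_H$, no zero-weight shortcut through $L_H$ need exist. I anticipate resolving this either by strengthening the invariant maintained by the construction to include just enough boundary connectivity for the shortcut argument, or by charging the extra weight of non-shortcuttable segments back to $\textbf{w}(x)$ using the fact that each such segment must traverse at least one edge of $E(H)\setminus E(L_H)$ carrying fractional LP flow.
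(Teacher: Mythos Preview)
Your approach diverges from the paper's in a way that creates exactly the obstacle you identify at the end. The paper does not flatten boundaries here; it \emph{contracts} each separator path into a single node identified with $r$ before recursing. This buys two simplifications. First, no alternating vertex-weighting scheme is needed to control $l(L_H)$: after contraction the accumulated boundary is a single vertex, so the only nontrivial boundary at any step is the most recent separator path, automatically of length $O(h)$. Second, and more importantly, the separated-boundary difficulty disappears: for any $v\in V(H)$, the last segment inside $H$ of any $h$-length LP flow path from $r$ to $v$ becomes, after contraction, an $h$-length path from the contracted root to $v$. A short direct calculation (the one the paper gives for $G$ itself) then shows the $h$-radius of the contracted region is at most $\textbf{w}(x_{|H})$, hence its $2h$-diameter is at most $2\,\textbf{w}(x_{|H})$, and Lemma~\ref{lemma:lengthseparator} bounds each separator's weight by $O(\textbf{w}(x_{|H}))$. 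Summing $\sum_{H\in l_i}\textbf{w}(x_{|H})\leq\textbf{w}(x)$ over $O(\log n)$ levels gives the claim.

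Your two proposed fixes for the flattening approach both run into trouble. Strengthening the invariant to recover boundary connectivity essentially reinstates the full structure of Definition~\ref{defn:lc-region}, which requires the three-phase scheme of Lemma~\ref{lemma:lc-div} and defeats the point of using simple regions. Charging non-shortcuttable segments to the full $\textbf{w}(x)$ rather than $\textbf{w}(x_{|H})$ breaks the level sum: those segments lie outside $H$ and contribute nothing to $\textbf{w}(x_{|H})$, and with up to $2^i$ regions at level $i$, a bound of $O(\textbf{w}(x))$ per region yields $O(2^i\cdot\textbf{w}(x))$, not $O(\textbf{w}(x))$. The clean fix is to switch from flattening to contraction.
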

\begin{proof}
    We first show how to compute a simple $2h$-length constrained $3/2$-division. Given a region $H$, we compute a length-constrained separator $P$ and cycle $C=P\cup e_C$ of $H$ that separates $H$ into $A,B,V(C)$ using \Cref{lemma:lengthseparator}. Let $H'$ be the result of contracting $P$ into a single node, and let the division of $H$ be $\mathcal{H}=\{H'[A], H'[B]\}$. In other words, a $3/2$-division is the result of taking a single level of separation on a region. Also note that we contract the boundary here instead of flattening it. To compute $\mathcal{T}$, we recursively compute simple $2h$-length-constrained $3/2$-divisions until we have an instance with no non-boundary vertices. The recursive tree has depth $O(\log n)$ and each non-leaf node has $2$ children, so there are $\poly(n)$ nodes in the tree. Hence this can all be computed in polynomial time.
    
    To bound the weight, we first claim that $\textbf{w}(x)\geq D^{(h)}(G)$.
    Recall that by definition of $h$-radii, there exists a vertex $v\in V-r$ such that every path $P\in\textbf{P}_h(r,v)$ satisfies $w(P)\geq D^{(h)}(G)$; fix such a $v$. 
    Then we have
    \begin{align*}
        D^{(h)}(G) &\leq \sum_{P\in\textbf{P}_h(r,v)}w(P)f^v_P 
        \\&= \sum_{P\in\textbf{P}_h(r,v)}f^v_P\sum_{e\in P}w(e) 
        \\ &= \sum_{e\in E}w(e)\sum_{P\in\textbf{P}_h(r,v),P\ni e}f^v_P
        \\ &\leq \sum_{e\in E}w(e)x_e 
        \\ &= \textbf{w}(x)
    \end{align*}
    where the first line is by definition of $v$ and the first constraint of \Cref{equ:lcmst-flow}, and the fourth line is by the second constraint of \Cref{equ:lcmst-flow}.

    Observe that the $h$-radius never increases in later recursive subinstances since we contract the chosen separators into $r$ before recursing. Combining this with the above claim, we have that the weight of each separator path that we buy is $O\left(\textbf{w}(x)\right)$. 
    
    Let $l_i$ be the set of nodes at level $i$ in $\mathcal{T}$ where $l_1$ contains the root. Then the total weight of the hierarchy is 
    \begin{align*}
        w\left(\bigcup_{H\in V(\mathcal{T})}L_H\right) &= \sum_{i\in[\operatorname{depth}(\mathcal{T})-1]} \sum_{H\in l_i} w\left(\left(\bigcup_{\hat{H}\in \operatorname{child}(H)} L_{\hat{H}}\right)\setminus L_H\right) \\
        &\leq \sum_{i\in[\operatorname{depth}(\mathcal{T})-1]} \sum_{H\in l_i} O\left(\textbf{w}(x_{|H})\right) \\
        &\leq \sum_{i\in[\operatorname{depth}(\mathcal{T})-1]} O(\textbf{w}(x)) \\
        &\leq O(\log n)\cdot \textbf{w}(x)
    \end{align*}
    concluding the proof.
\end{proof}

\noindent Our algorithm is similar to \Cref{alg:planarmain2} until we have to connect pieces together. Instead of reducing to many instances of length-constrained Steiner tree with few terminals, we do the following. For a piece $V_i$ we let $P_i$ be the minimum length path in $G$ of weight at most $\textbf{w}(x)$ among such paths between any vertex in the parent piece and any vertex in $P_i$, and add it to our solution. We formalize the intuition in the following lemma:
\begin{lemma}[Shortcuts]\label{lemma:shortcuts}
    Given $h\geq 1,\epsilon>0$, a node $(H,L)\in V(\mathcal{T})$ where $\mathcal{T}$ is a simple $h$-length $3/2$-division hierarchy $\mathcal{T}$ of a graph $G=(V,E)$ computed by \Cref{lemma:simple-hierarchy}, a $\beta$-partition $\mathcal{P}_H=\{V_1,V_2,\dots\}$ of $L$ into pieces computed by \Cref{lemma:low-diam-partition}, and an extreme point $x$ to \Cref{equ:lcmst-flow}, one can construct in polynomial time a set of edges $S_H$ 
    satisfying
    \begin{enumerate}
        \item \textbf{\textit{Shortcuts}}: for any $h$-MST $T$ rooted at $r$ we have $ d_{L\cup S_H}\left(r,v\right) \leq d_{T}\left(r,v\right) + h/\beta$.
        \item \textbf{\textit{Light}}: $w\left(S\right)\leq O(\beta)\cdot \textbf{w}(x_{|H})$.
    \end{enumerate}
\end{lemma}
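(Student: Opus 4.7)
The plan is to follow the construction sketched in the preceding paragraph: for each piece $V_i \in \mathcal{P}_H$ I pick a single path $P_i$ in $G$ from $r$ to some vertex $v^*_i \in V_i$, and set $S_H := \bigcup_i E(P_i)$. The two properties decouple across the $O(\beta)$ pieces guaranteed by \Cref{lemma:low-diam-partition}: Shortcuts will follow by combining a length bound of the form $l(P_i) \leq d_T(r, v^*_i)$ with the fact that each piece $V_i$ has diameter at most $h/\beta$ inside $L$, while Light will follow by bounding $w(P_i) = O(\textbf{w}(x_{|H}))$ and summing over the pieces.

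Concretely, for each $V_i$ I would take $P_i$ to be a minimum-length path in $G$ from $r$ to any vertex of $V_i$, subject to a weight budget of order $\textbf{w}(x_{|H})$; this is a constrained shortest path problem that can be solved in polynomial time by a pseudopolynomial DP or a Lagrangian procedure on the integer weights. Existence of a useful candidate comes from the LP averaging identity: for each target vertex $v$, swapping the order of summation,
\[
\sum_{P \in \textbf{P}_h(r,v)} f^v_P\, w(P)
= \sum_{e \in E} w(e) \sum_{\substack{P \in \textbf{P}_h(r,v) \\ P \ni e}} f^v_P
\leq \sum_{e \in E} w(e)\, x_e = \textbf{w}(x),
\]
so since $\sum_P f^v_P = 1$ some $P \in \textbf{P}_h(r,v)$ satisfies $l(P) \leq h$ and $w(P) \leq \textbf{w}(x)$. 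Taking $v = v^*_i := \argmin_{u \in V_i} d_T(r,u)$, the unique $T$-path from $r$ to $v^*_i$ is an additional length-$d_T(r,v^*_i)$ candidate, so the constrained search returns $P_i$ with $l(P_i) \leq d_T(r, v^*_i)$; then for any $v \in V_i$,
\[
d_{L \cup S_H}(r, v) \leq l(P_i) + d_L(v^*_i, v) \leq d_T(r, v^*_i) + h/\beta \leq d_T(r, v) + h/\beta,
\]
establishing Shortcuts. Light then follows from $w(S_H) \leq \sum_i w(P_i) = O(\beta \cdot \textbf{w}(x_{|H}))$.

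The main obstacle is the strengthening from $\textbf{w}(x)$ (what naive averaging gives) to $\textbf{w}(x_{|H})$ (what the lemma demands), together with simultaneously ensuring that the $T$-path from $r$ to $v^*_i$ actually fits within an $O(\textbf{w}(x_{|H}))$ weight budget used in the Shortcut argument. I plan to address both by localizing the LP argument to $H$: since each piece of $\mathcal{P}_H$ lies on the boundary of $H$, flow in $x$ destined for a vertex of $V_i$ can be truncated at its first entry point into $V(H)$ and rerouted through in-$H$ edges so that the effective fractional object used in the averaging involves only $x_{|H}$-weight, yielding the refined bound $O(\textbf{w}(x_{|H}))$. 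Making this truncation-and-reroute work without inflating path lengths beyond $h + O(h/\beta)$ is the technical crux, and I expect the extreme-point structure of \Cref{equ:lcmst-flow} to play a role in keeping the reroute clean.
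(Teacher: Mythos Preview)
Your construction and overall strategy are exactly the paper's: for each piece $V_i$ take $P_i$ to be a minimum-length path from $r$ to some vertex of $V_i$ subject to a weight budget, set $S_H=\bigcup_i P_i$, get Light by summing the $O(\beta)$ path weights, and get Shortcuts by walking $P_i$ and then traversing inside the diameter-$h/\beta$ piece. The localization you propose (truncate flow at its first entry into $H$) is precisely what the paper does implicitly: in the simple hierarchy the parent boundary is \emph{contracted} into the root, so the LP averaging carried out in that contracted region yields $\textbf{w}(x_{|H})\ge D^{(h)}(H)$ directly. That is the clean way to get the $\textbf{w}(x_{|H})$ bound, and you should phrase it that way rather than as a reroute-with-length-control problem.

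There is, however, a genuine gap in your Shortcuts step. You argue $l(P_i)\le d_T(r,v^*_i)$ by declaring the $T$-path from $r$ to $v^*_i$ an ``additional candidate'' for the weight-budgeted search. But for it to be a candidate its weight must lie under the budget, and the weight of a root-to-$v^*_i$ path in an $h$-MST $T$ can be as large as $w(T)$, which in general exceeds both $\textbf{w}(x)$ and $\textbf{w}(x_{|H})$ (the LP is a relaxation, so $\textbf{w}(x)\le\OPT$, not the reverse). Your truncation idea fixes the \emph{LP} side of the story and hence Light, but it does nothing to put the \emph{integral} $T$-path inside the budget, so the inequality $l(P_i)\le d_T(r,v^*_i)$ is currently unsupported. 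The paper's own proof is terse at exactly this point---it writes ``the distance is at most $h_i+h/\beta$'' without saying why $l(P_i)\le h_i$---so this is a subtlety the paper also glosses over rather than a place where you diverge from it; but it is the one step in your plan that does not go through as written.
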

\begin{proof}
    By \Cref{lemma:low-diam-partition}, we have that $D(H[V_i])\leq h/\beta$ for every $V_i\in\mathcal{P}_H$. In particular, since $L$ is a single separator path by \Cref{lemma:simple-hierarchy}, we have that each piece is a subpath of $L$ of length at most $h/\beta$. 
    Let $P_{i}$ be the minimum length path in $G$ of weight at most $\textbf{w}(x_{|H})$ among such paths between $r$ and a vertex in $V_i$. Such a path exists because $\textbf{w}(x_{|H})\geq D^{(h)}(H)$ as claimed before. We can find these paths $P_{i}$ efficiently. Indeed, for each piece, we can perform a shortest path computation between $r$ and each vertex of the segment, where we use the edge lengths as the distance metric and only consider the path to be feasible if its weight is at most $\textbf{w}(x_{|H})$. Then we may set $ S_H = \bigcup_i P_{i}$.

    Fix an $i$ and a vertex $v\in P_i$. Let $h_{i}$ be the minimum length of a path in an arbitrary $h$-MST $T$ between $r$ and a vertex in the $V_i$. One (possibly not shortest) $r,v$ path in $L\cup S_H$ walks $P_{i}$ and then along $H[V_i]$ towards $v$. So the distance $r$ must travel to reach $v$ in $L\cup S_H$ is at most
    $ h_{i}+h/\beta\leq d_{T}\left(r,v\right)+h/\beta$
    because $h_{i}$ is the minimum length of a path in $T$ between $r$ and any vertex in $V_i$. Since $T$ is an arbitrary $h$-MST, this holds for \textit{all} $h$-MSTs, proving the shortcut property. Also observe there are at most $O(\beta)$ paths $P_i$ by \Cref{lemma:low-diam-partition}, and each path has weight at most $\textbf{w}(x)$ by definition. Then $w\left(E\left(S_H\right)\right)\leq O(\beta)\cdot\textbf{w}(x_{|H})$, proving the light property. 
\end{proof}
\noindent
Observe that since $L$ is a single path in this case, we can alternatively use a much simpler procedure to partition $L$ into low diameter pieces (which are all subpaths of $L$), e.g.\ a greedy algorithm. 
\begin{figure}[H]
    \centering
    \includegraphics[width=0.35\linewidth]{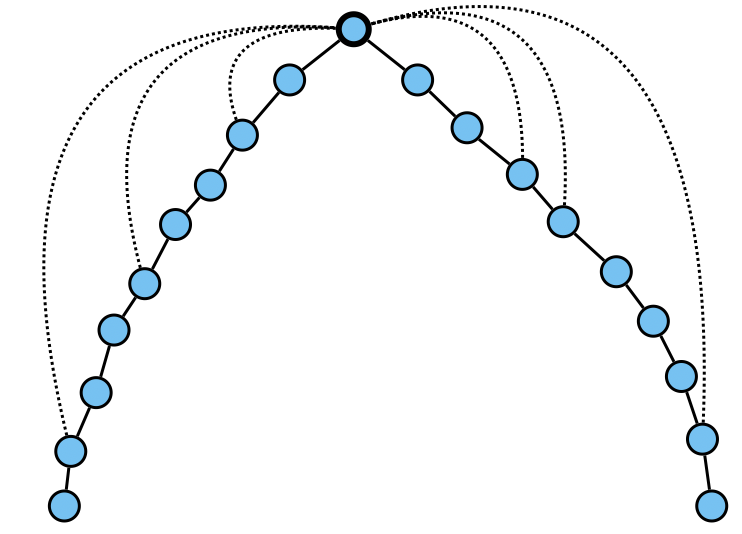}
    \caption{$L\cup S_H$. The root $r$ is the bolded node, the solid black edges form $L$, and the dotted lines represent the shortcut paths $\bigcup_i P_i$.}
    \label{fig:shortcuts}
\end{figure}

\subsection{Algorithm Description and Analysis}
We are now ready to describe the algorithm. Our algorithm computes an extreme point solution $x$ to \Cref{equ:lcmst-flow} to obtain $\textbf{w}(x)$, computes a simple $2h$-length $3/2$-division hierarchy, partitions the boundary into pieces, and connects each child piece to a parent piece via the shortest (in length) path whose weight is at most $\textbf{w}(x)$. 
The pseudocode is given in \Cref{alg:planar-intgap}. 

\begin{algorithm}[h]
    \caption{LP-Competitive Algorithm}  
    \begin{algorithmic}[1]
        \label{alg:planar-intgap}
        \Statex \textbf{Input}: undirected planar graph $G=\left(V,E\right)$ with root $r\in V$, extreme point solution $x$ to the LP \Cref{equ:lcmst-flow}, edge length and weight functions $l:E\to\mathbb{Z}_{\geq0},w:E\to \mathbb{Z}_{\geq0}$, length bound $h$, constant $\epsilon>0$.
        \Statex \textbf{Output}: a $(1+\eps)h$-length spanning tree of $G$ rooted at $r$ with weight at most $O\left(\frac{\log^{2}n}{\eps}\right)\cdot\textbf{w}(x)$.
        \Statex
        \State Set $\alpha=3/2,\beta=\frac{\log n}{\eps}$.
        \State Compute a simple $2h$-length-constrained $\alpha$-division hierarchy $\mathcal{T}$ of $G$ using \Cref{lemma:simple-hierarchy}.
        \State For each $(H,L_H)\in V(\mathcal{T})$, compute a $\beta$-partition $\mathcal{P}_H$ of $L_H$ into pieces using \Cref{lemma:low-diam-partition}.
        \State Compute a set of paths $S_H$ for every $v\in V(\mathcal{T})$ using \Cref{lemma:shortcuts}.
        \State Set $T\gets \left( V,\left( \bigcup_{H\in V(\mathcal{T})} E\left(L_H\right) \right) \cup \bigcup_{H\in V(\mathcal{T})} E\left(S_H \right) \right)$.
        \State \textbf{Return} a shortest path tree $\textbf{T}$ of $T$ rooted at $r$.
    \end{algorithmic}
\end{algorithm}

\begin{lemma}[Runtime]\label{lemma:runtime-intgap}
    The runtime of \Cref{alg:planar-intgap} is $\poly\left(n\right)$. 
\end{lemma}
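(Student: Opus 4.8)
The plan is simply to check that every line of \Cref{alg:planar-intgap} runs in polynomial time and that all intermediate objects it manipulates have polynomial size. First I would account for obtaining the extreme point $x$ of \Cref{equ:lcmst-flow}: although this LP has exponentially many path variables, as noted just above one can extract an extreme point in $\poly(n)$ time via the ellipsoid method with a min-cut separation oracle, so the value $\textbf{w}(x)$ and the restrictions $x_{|H}$ are available in polynomial time.

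Then I would go line by line. Step 2 computes $\mathcal{T}$ in $\poly(n)$ time by \Cref{lemma:simple-hierarchy}, whose proof also establishes that $\mathcal{T}$ has depth $O(\log n)$ with two children per non-leaf node, hence $|V(\mathcal{T})| = \poly(n)$. Step 3 runs \Cref{lemma:low-diam-partition} once per node of $\mathcal{T}$; each call is polynomial and there are $\poly(n)$ nodes, so Step 3 is polynomial. A useful sanity check here is that, because $\beta = \frac{\log n}{\eps}$ is only polylogarithmic, there is no guessing phase and no $\beta^{\beta}$ blowup as in \Cref{alg:planarmain2}: each node receives only $O(\beta) = \poly\log n$ pieces. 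Step 4 runs \Cref{lemma:shortcuts} once per node of $\mathcal{T}$; each such call performs, for each of the $O(\beta)$ pieces, a length-minimizing shortest-path computation subject to the weight budget $\textbf{w}(x_{|H})$, which is polynomial, so Step 4 is polynomial in total as well.

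Finally, Step 5 forms a union of $\poly(n)$ edge sets and Step 6 computes a single shortest path tree (e.g.\ via Dijkstra), both in $\poly(n)$ time. Summing the contributions of all steps yields the claimed $\poly(n)$ bound. There is no real obstacle here: the only points needing care are the bookkeeping facts that $|V(\mathcal{T})| = \poly(n)$ and that \Cref{equ:lcmst-flow}, despite its exponential-size formulation, is solvable in polynomial time, and both are supplied by results already established above.
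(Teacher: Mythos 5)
Your proof is correct and follows essentially the same route as the paper's: each step (hierarchy construction, piece partitioning, shortcut paths, union, shortest-path tree) is polynomial, and the hierarchy has $\poly(n)$ nodes since its depth is $O(\log n)$ with at most two children per node. One small note: the extreme point $x$ is supplied as \emph{input} to \Cref{alg:planar-intgap}, so the ellipsoid-method step you account for is not formally part of the algorithm's runtime (though it is indeed polynomial and is discussed just before the algorithm).
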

\begin{proof}
    Clearly, computing each separator and set of shortcut paths in each instance requires at most $\poly(n)$ time. The recursive depth is $O(\log n)$ and each instance makes at most $2$ recursive calls. Therefore, the runtime of \Cref{alg:planar-intgap} is $\poly(n)$. 
\end{proof}
\begin{lemma}[Everything]\label{lemma:everything-intgap}
    \Cref{alg:planar-intgap} returns a spanning tree $\textbf{T}$ of $G$ such that $w(\textbf{T})\leq O\left(\frac{\log^2 n}{\eps}\right)\textbf{w}(x)$ and for every $v\in V$ we have $d_{\textbf{T}}(r,v)\leq (1+\eps)h$.
\end{lemma}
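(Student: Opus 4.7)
The plan is to verify three things separately: feasibility of $\textbf{T}$ as a spanning tree of $G$, the weight bound $w(\textbf{T}) \leq O(\log^2 n/\eps) \cdot \textbf{w}(x)$, and the length bound $d_{\textbf{T}}(r,v) \leq (1+\eps)h$ for every $v \in V$. Feasibility is immediate from the hierarchy structure: every edge of $G$ lies in some leaf region of $\mathcal{T}$, leaves contain no non-boundary vertices so $V(L_H) = V(H)$ for each leaf $H$, and consequently $\bigcup_{H \text{ leaf}} L_H$ already spans $V$. Thus $T$ is a connected subgraph of $G$ spanning $V$, and the shortest path tree $\textbf{T}$ of $T$ rooted at $r$ is a spanning tree of $G$ that preserves distances from $r$ and satisfies $w(\textbf{T}) \leq w(T)$.

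For the weight bound, I would split $w(T)$ into the contribution from hierarchy boundaries and from shortcut paths. \Cref{lemma:simple-hierarchy} gives $w\left(\bigcup_{H \in V(\mathcal{T})} L_H\right) \leq O(\log n) \cdot \textbf{w}(x)$ directly. For shortcuts, \Cref{lemma:shortcuts} yields $w(S_H) \leq O(\beta) \cdot \textbf{w}(x_{|H})$ for each region, so $\sum_H w(S_H) \leq O(\beta) \sum_H \textbf{w}(x_{|H})$. I would bound $\sum_H \textbf{w}(x_{|H})$ level-by-level: on each of the $O(\log n)$ levels of $\mathcal{T}$ the regions share only edges on the single separator path $L$ introduced at that level (whose LP weight is at most $O(\textbf{w}(x))$ by the argument in \Cref{lemma:simple-hierarchy}), so each level contributes at most $O(\textbf{w}(x))$ to the sum. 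Therefore $\sum_H \textbf{w}(x_{|H}) \leq O(\log n) \cdot \textbf{w}(x)$, and with $\beta = \log n/\eps$ we get $\sum_H w(S_H) \leq O(\log^2 n/\eps) \cdot \textbf{w}(x)$, which dominates the boundary term.

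For the length bound, fix $v \in V$ and choose any leaf $H$ of $\mathcal{T}$ containing an edge incident to $v$; such a leaf exists since every edge lies in some leaf. Because leaves have no non-boundary vertices, $v \in V(L_H)$, so $v$ is a boundary vertex of $H$. Applying the shortcut property of \Cref{lemma:shortcuts} at this leaf, using the fixed $h$-length MST $T^*$, gives
\[
d_{L_H \cup S_H}(r, v) \;\leq\; d_{T^*}(r, v) + \frac{h}{\beta} \;\leq\; h + \frac{h}{\beta} \;=\; h\Bigl(1 + \frac{\eps}{\log n}\Bigr) \;\leq\; (1+\eps)h.
\]
Since $L_H \cup S_H \subseteq T$, the same bound holds in $T$, and the shortest path tree satisfies $d_{\textbf{T}}(r, v) = d_T(r, v) \leq (1+\eps)h$.

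The main obstacle will be cleanly charging the sum $\sum_H \textbf{w}(x_{|H})$ across the hierarchy. In principle a vertex or boundary edge can be reused among siblings, so a naive per-level argument might double-count. The cleanest fix is to exploit the fact that \Cref{lemma:simple-hierarchy} contracts each separator $P$ into a single node before recursing, so non-separator edges belong to exactly one child region and separator edges are charged only at the level where they appear; this leaves only the current-level separator to bound per level, which is $O(\textbf{w}(x))$ by the same $\textbf{w}(x) \geq D^{(h)}$ argument used in that lemma. Combining these pieces with $w(\textbf{T}) \leq w(T)$ closes both the weight bound and, together with the shortcut length calculation above, completes the proof.
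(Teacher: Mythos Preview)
Your feasibility and weight arguments are essentially the paper's: the paper bounds $\sum_{H\in l_i}\textbf{w}(x_{|H})\le \textbf{w}(x)$ per level (because in the simple hierarchy the two children of a region have disjoint edge sets once the separator is contracted away) and then sums over $O(\log n)$ levels, exactly as you sketch.

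The length argument, however, has a genuine gap. In \Cref{lemma:simple-hierarchy} the separator $P$ of each region is \emph{contracted into the root} before recursing; consequently, for a non-top region $H$ the ``$r$'' appearing in \Cref{lemma:shortcuts} is the contracted parent boundary, not the original root of $G$, and the shortcut paths $P_i$ live inside the (contracted) region $H$ --- this is also what makes the Light bound $w(S_H)\le O(\beta)\,\textbf{w}(x_{|H})$ possible, and you rely on that bound for your weight analysis. Reading the shortcut property at a deep leaf as a statement about the \emph{original} root is therefore inconsistent with the very weight bound you just used: if the $P_i$ really were $r$-to-piece paths in $G$, nothing would force their weight below $\textbf{w}(x_{|H})$.

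The paper handles this by induction on the depth $i$: after uncontracting, the shortcut only certifies a short path from some vertex $u$ on the parent separator to $v$, contributing one $h/\beta$ of slack; induction on the parent region then gets from the original $r$ to $u$ with slack $(i-1)h/\beta$, using that $u$ lies on the $r$--$v$ path in $T^*$. Summing over $O(\log n)$ levels yields the $(1+\eps)h$ bound. Your single application of \Cref{lemma:shortcuts} at a leaf only establishes the distance from the parent boundary to $v$, so the length bound is not yet proven; the missing ingredient is exactly this level-by-level uncontraction.
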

\begin{proof}
    Feasibility of $\textbf{T}$ is immediate by the same reason as in \Cref{lemma:realeverything}.

    We bound the weight of $T$ below: 
    \begin{align*}
        w(T) &\leq w\left(\bigcup_{H\in V(\mathcal{T})}E(L_H)\right) + w\left(\bigcup_{H\in V(\mathcal{T})}E(S_H)\right) \\
        &\leq O(\log n)\textbf{w}(x)+ w\left(\bigcup_{H\in V(\mathcal{T})}E(S_H)\right) \\
        &\leq O(\log n)\textbf{w}(x)+ \sum_{i\in [\operatorname{depth}(\mathcal{T})]} \sum_{H\in l_i} w\left(S_H\right) \\
        &\leq O(\log n)\textbf{w}(x)+ \sum_{i\in [\operatorname{depth}(\mathcal{T})]} \sum_{H\in l_i} O\left(\frac{\log n}{\eps}\right)\textbf{w}(x_{|H}) \\
        &= O(\log n)\textbf{w}(x)+ O\left(\frac{\log n}{\eps}\right)\sum_{i\in [\operatorname{depth}(\mathcal{T})]} \sum_{H\in l_i} \textbf{w}(x_{|H}) \\
        &\leq O(\log n)\textbf{w}(x)+ O\left(\frac{\log n}{\eps}\right)\sum_{i\in [\operatorname{depth}(\mathcal{T})]}  \textbf{w}(x) \\
        &\leq O(\log n)\textbf{w}(x) + O\left(\frac{\log^2 n}{\eps}\right)\cdot \textbf{w}(x) 
    \end{align*}
    where the second line is by \Cref{lemma:simple-hierarchy}, the fourth line is by \Cref{lemma:shortcuts}, and the seventh line is because $\operatorname{depth}(\mathcal{T})=O(\log n)$. Taking a shortest path tree of $T$ can only lower the weight, so the same weight bound holds for $\textbf{T}$.

    Next, the length bound. Fix an $(H,L)\in V(\mathcal{T})$ at depth $i$ in $\mathcal{T}$ a $v\in V(H)$, and an $h$-MST $T^*$. Let $P$ be the shortest $r,v$ path in $L\cup S_H$ and $u$ be the non-$v$ endpoint of $P$ if we had uncontracted $r$ into the previous level's boundary. We prove by induction on $i$ that $d_{T}(r,v)\leq d_{T^*}(r,v)+\frac{h\eps i}{\log n}$. The base case of $i=1$ is trivial because we haven't contracted any boundary to compute the shortcut paths. Then we have 
    \begin{align*}
        d_{T}(r,v) &\leq d_{T}(r,u) + d_{L\cup S_H}(r,v) \\
        &\leq d_{T^*}(r,u)+\frac{h\eps(i-1)}{\log n} + d_{L\cup S_H}(r,v) \\
        &\leq d_{T^*}(r,u)+\frac{h\eps (i-1)}{\log n} + \min_{\substack{u,v\text{ paths } P',\\w(P')\leq \textbf{w}(x_{|H})}}l(P')  + \frac{h\eps}{\log n} \\
        &= d_{T^*}(r,u)+\frac{h\eps i}{\log n} + \min_{\substack{u,v\text{ paths } P',\\w(P')\leq \textbf{w}(x_{|H})}}l(P') \\
        &\leq d_{T^*}(r,v)+\frac{h\eps i}{\log n}
    \end{align*}
    where the second line is by induction and the fact that $u$ is a vertex belonging to a region at depth $i-1$ in $\mathcal{T}$, the third line is by \Cref{lemma:shortcuts} and definition of $u$, and the fifth line is because $u$ must exist on the unique $r,v$ path in $T^*$ in order for $P$ to be used in $S_{H}$. Since $d_{T^*}(r,v)\leq h$ and the $\operatorname{depth}(\mathcal{T})=O(\log n)$, we have that the longest distance-to-root in the output of \Cref{alg:planar-intgap} is at most $h+\frac{h\eps\log n}{\log n}=h+h\eps=(1+\eps)h$. Finally, since taking a shortest path tree rooted at $r$ does not change any of the distances from $r$, the same length bound holds for $\textbf{T}$. 
\end{proof}

\noindent \Cref{thm:intgap} follows from \Cref{lemma:runtime-intgap,lemma:everything-intgap}. Observe that our proof of \Cref{lemma:everything-intgap} actually shows that for any $v\in V$ we have $d_{T^*}(r,v)\leq (1+\eps)d_{T}(r,v)$ for \textit{every} optimal solution $T$, which is much stronger than a general length bound of $(1+\eps)h$. 


\section{Hardness of Approximation from Group Steiner Tree}\label{sec:gst-reduction}
In this section we prove a new bicriteria lower bound on the hardness of length-constrained MST for general graphs via a reduction from group Steiner tree, improving upon the decades-old single-criteria lower bound implied by a reduction from set cover \cite{naor1997retractedshallowlight}. 

In group Steiner tree, we are given a graph with edge weights, a root, and $k$ disjoint groups $g_1, g_2, \dots, g_k$ where each $g_i\subseteq V$\footnote{Disjointness of the groups is a standard assumption. In particular, for any vertex $v$ belonging to multiple groups, we can turn $v$ into a star by connecting a dummy vertex to $v$ with weight $0$ edges for each group containing $v$, assigning each dummy vertex to a unique group that $v$ belongs to, and removing $v$ from all groups.}. The goal is to find a minimum-cost tree that connects the root to at least one node from each group. Group Steiner tree was proven to be hard to approximate within $\Omega\left(\log^2 n\right)$ even on trees by \cite{halperin2003polyloginapx}:
\begin{theorem}[\cite{halperin2003polyloginapx}]\label{thm:gstlowerbound}
    For every fixed $\eps>0$, group Steiner tree cannot be approximated $O\left(\log^{2-\eps}n\right)$ in poly-time unless $\textsf{NP}\subseteq \textsf{ZTIME}\left(n^{\poly(\log n)}\right)$, even for trees.
\end{theorem}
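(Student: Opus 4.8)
The plan is to derive this from a strong PCP-based hardness of \emph{Label Cover} (equivalently, two-prover one-round projection games, or the Min-Rep problem) via a recursively structured tree gadget, following the strategy of \cite{halperin2003polyloginapx}. The starting observation is that group Steiner tree on trees already generalizes set cover: take a depth-two tree whose first level of unit-cost edges selects ``sets'' and whose second level of zero-cost edges exposes ``elements,'' with one group per element; then a tree connecting the root to at least one leaf of every group is exactly a set cover of cost equal to the number of chosen sets, so the $(1-o(1))\ln n$ hardness of set cover transfers verbatim and already gives a $\log^{1-o(1)}n$ lower bound. The goal is to \emph{square} this loss, which I would do by nesting one hard set-cover-like construction inside another: at a constant-depth expansion of each ``element'' leaf of an outer hard instance I would plant a fresh copy of the inner construction, and place the groups of the composite instance so that any cheap Steiner tree is forced to solve the outer and inner layers \emph{consistently}. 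Completeness is the easy direction: a YES instance of the underlying problem yields a tree whose cost is the combined YES optima of both layers.

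Plain set-cover hardness is not robust enough to survive this composition, since a cheap Steiner tree in the composite instance need not split into a cheap outer solution plus cheap inner solutions---it can ``cheat'' by being inconsistent across layers. To forbid cheating I would reduce instead from a Label Cover instance $\mathcal L$ on a bipartite constraint graph with alphabet $\Sigma$ and projection constraints, obtained from the PCP theorem followed by $k$-fold parallel repetition, so that in the YES case every constraint is satisfiable while in the NO case every labeling satisfies at most a $2^{-\Omega(k)}$ fraction of constraints. Choosing $k = \poly(\log n)$ drives the soundness error down to $n^{-\poly(\log n)}$ but inflates the instance to size $n^{\poly(\log n)}$; this quasi-polynomial blow-up is precisely why the conclusion must be conditioned on $\textsf{NP}\subseteq\textsf{ZTIME}\left(n^{\poly(\log n)}\right)$ rather than on $\textsf{P}\neq\textsf{NP}$. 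In the tree, the top levels branch with unit-cost edges over a prover's questions and answers (a set-cover-like layer responsible for one $\log$ factor), a second nested copy of such a layer supplies the second $\log$ factor, and the groups are placed at the leaves so that a single leaf is reachable cheaply for all groups only when the two answer layers are consistent with one good labeling of $\mathcal L$.

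The hard part will be the soundness analysis: showing that any Steiner tree of cost smaller than $(\log N)^{2-\eps}$ times the YES optimum can be decoded into a labeling of $\mathcal L$ satisfying a non-negligible fraction of constraints, contradicting soundness. I would run a two-level averaging/charging argument---within each outer branch a cheap sub-tree must, by the set-cover lower bound applied to that branch, touch many answer nodes, and averaging over the touched answers yields a randomized labeling of the inner layer; the outer layer is charged the same way; and the projection structure of $\mathcal L$ is then used to argue that the two randomized labelings agree with probability bounded below by essentially the product of the two per-layer success probabilities, which is exactly where the two $\log$ factors multiply into $\log^2$. Pushing this quantitatively to the exponent $2-\eps$, rather than $2-c$ for some fixed $c$, is the delicate point, handled by tuning $k$ and the per-layer branching degree against the target $\eps$. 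The remaining steps are routine: the tree has $N = n^{\poly(\log n)}$ vertices and, with parameters chosen appropriately, a YES/NO gap of $(\log N)^{2-\eps}$ in the group-Steiner optimum, so a polynomial-time $O\left(\log^{2-\eps}N\right)$-approximation would decide $\mathcal L$---and hence every problem in $\textsf{NP}$---in time $\poly(N) = n^{\poly(\log n)}$, i.e.\ randomized quasi-polynomial time.
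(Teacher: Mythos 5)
The paper cites this theorem from Halperin--Krauthgamer \cite{halperin2003polyloginapx} without reproducing a proof, so there is no ``paper's own proof'' to compare against; I'll assess your sketch on its own terms.

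Your high-level outline is in the right spirit: the Halperin--Krauthgamer argument does start from a two-prover/Label-Cover instance amplified by parallel repetition to sub-constant soundness, it does embed that game into a tree gadget on which a cheap group-Steiner solution is forced to encode a consistent labeling, and the $n^{\poly(\log n)}$ blowup incurred by $\poly(\log n)$ repetitions is exactly why the conclusion is conditional on $\textsf{NP}\not\subseteq\textsf{ZTIME}\left(n^{\poly(\log n)}\right)$ rather than $\textsf{P}\neq\textsf{NP}$.

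That said, there are two genuine gaps. First, the conclusion is about $\textsf{ZTIME}$, not $\textsf{DTIME}$: the reduction in \cite{halperin2003polyloginapx} is randomized (Las Vegas) --- groups are defined via random sampling so that they simultaneously have the right covering behavior in the YES case and the right anti-covering behavior in the NO case --- and your sketch never accounts for where randomness enters. A fully deterministic composition of the kind you describe would naturally give $\textsf{NP}\subseteq\textsf{DTIME}\left(n^{\poly(\log n)}\right)$, a different statement. Second, the soundness analysis is where essentially all of the technical content lives, and the proposed ``two-level averaging'' does not capture it. In particular, the step asserting that the two per-layer randomized labelings agree with probability ``essentially the product of the two per-layer success probabilities'' is not the source of the $\log^2$ factor: that factor is a gap in \emph{cost} between the YES and NO optima of the group-Steiner instance, and turning a low-cost Steiner tree into a labeling beating the soundness bound requires a careful canonical/fractional-labeling decoding on the tree, together with the projection property, rather than naive averaging over touched leaves. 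Relatedly, the construction is not a flat two-level nesting of set-cover gadgets; it is a deeper recursive composition whose depth and branching are tuned jointly against the target exponent $2-\eps$, and the per-level bookkeeping (cost, size, and group definition across levels) is precisely the part your proposal leaves open. So this is a reasonable roadmap of the reference's approach, but not a proof.
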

\noindent A corollary to this theorem and our reduction is the following theorem:
\thmlowerbound*
\begin{proof}
\textbf{The reduction.} Given an instance of group Steiner tree involving a graph $F=(V,E)$ with root $r$ and disjoint groups $g_1,g_2,\dots,g_k$, we construct an instance of length-constrained MST tree as follows: Let $G$ be an undirected graph such that 
\begin{enumerate}
    \item \textbf{Vertices}: $V(G)$ is the union of $V(F)$ and a set copies of $V'=\left\{v'_1,v'_2,\dots\right\}$ of each vertex $v\in \bigcup_{i\in[k]}g_i$.
    \item \textbf{Edges}: $E(G)$ is the union of
    \begin{enumerate}
        \item the set $P$ of edges containing an edge of length $h$ and weight $0$ between each copy vertex in $V'$ and its original counterpart in $V(F)$ (hence we ``pull'' each vertex in a group away by length $h$),
        \item the set $C$ of edges containing an edge of length $0$ and weight $0$ between every pair of copy vertices whose original counterparts belong to the same group (hence we make a disjoint clique for each pulled group), 
        \item the set $S$ of edges containing an edge of length $h$ and weight $0$ between the root $r$ and every vertex in $V(F)$ (these are the ``spanning'' edges), 
        \item and $E(F)$, where for each $e\in E(F)$ we set $l_G(e)=0,w_G(e)=w_F(e)$.
    \end{enumerate} 
\end{enumerate}


\noindent Given a solution $T_G$ to the instance of length-constrained MST with $G,h$, we transform it into a solution $T_F$ to the instance of group Steiner tree with $F,\{g_1,g_2,\dots,g_k\}$ by taking the edges $E\left(T_G\right)\cap E(F)$. 
See \Cref{fig:gst-reduction} for an illustration of the reduction, which also shows that the reduction does not preserve planarity.

\textbf{Correctness of the reduction.} By construction, we have that if $T_F$ is feasible/optimal then so is $T_G$, so it suffices prove the opposite direction. For feasibility, fix any $g_i$. $T_G$ is feasible, $g_i$ must have some copy vertex in $V'$ that is connected to $r$ in $T_G$ via an $h$-length path. Such a path uses exactly one edge in $P$ which has length $h$ by definition of $P$. Then the rest of the path cannot use any edges of $S$ since those edges all have length $h$, so the rest of the path must be a path from some vertex in $g_i$ to $r$ using only edges in $E(F)$. This is what we need. Now suppose that $T_G$ is optimal but $T_F$ isn't, i.e.\ some tree $T_F''$ of $F$ has strictly less weight than $T_F$ and is feasible for the group Steiner tree instance on $F,r,\{g_1,g_2,\dots,g_k\}$. The only edges with positive weight in $G$ are those in $E(F)$. Then we can create a subgraph of $G$ using edges $E\left(T_F''\right)\cup \left(E\left(T_G\right)\cap P\right)\cup C\cup S$ with strictly less weight than $T_G$, contradicting $T_G$'s optimality.

\textbf{Lower bound.} 
Notice that if we ignore the length bound $h$, the only ways to get a solution with better weight in $G$ is by either purchasing a weight $0$ path between different pulled cliques using edges of $P$ (which necessarily has length at least $3h$), or purchasing a weight $0$ path from the root to any pulled clique via an edge in $S$ (which necessarily has length $2h$). Combining this with \Cref{thm:gstlowerbound}, we get the lower bound of $2$ on the length slack and $\Omega\left(\log^{2-\eps}n\right)$ on the weight approximation in \Cref{thm:newlowerbound}.
\end{proof}
\begin{figure}[h]
    \begin{subfigure}[b]{.25\textwidth}
      \centering
        \includegraphics[width=1.05\linewidth]{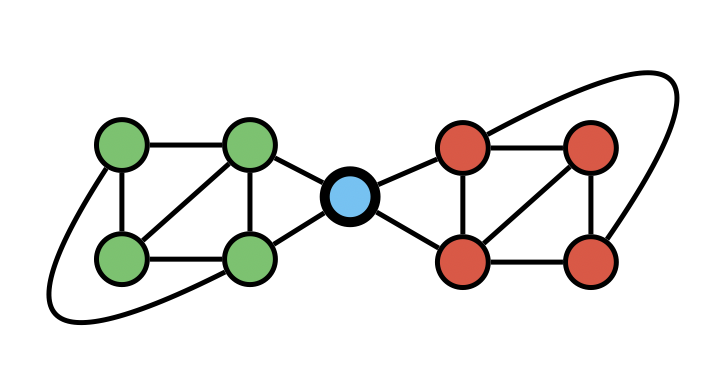}
      \caption{$F$.}\label{subfig:gst-instance}
    \end{subfigure}%
    ~
    \begin{subfigure}[b]{.25\textwidth}
      \centering
      \includegraphics[width=1.05\linewidth]{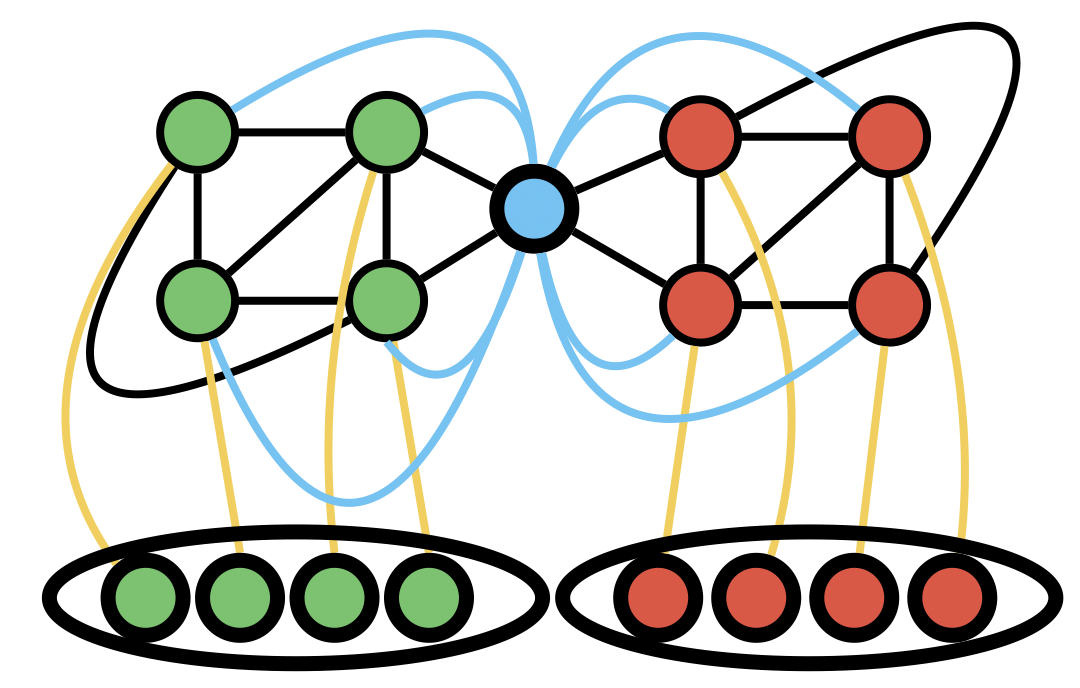}
      \caption{$G$.}\label{subfig:gstlcmst-instance}
    \end{subfigure}%
    ~
    \begin{subfigure}[b]{.25\textwidth}
      \centering
        \includegraphics[width=1.05\linewidth]{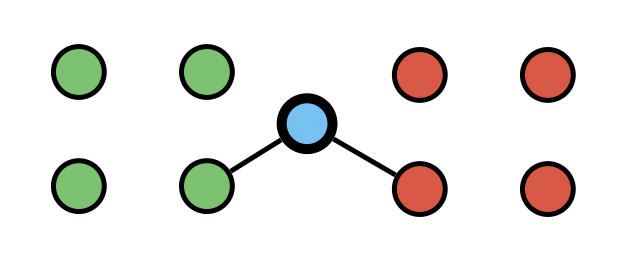}
      \caption{Solution of $F$.}\label{subfig:gst-solution}
    \end{subfigure}%
    ~
    \begin{subfigure}[b]{.25\textwidth}
      \centering
      \includegraphics[width=1.05\linewidth]{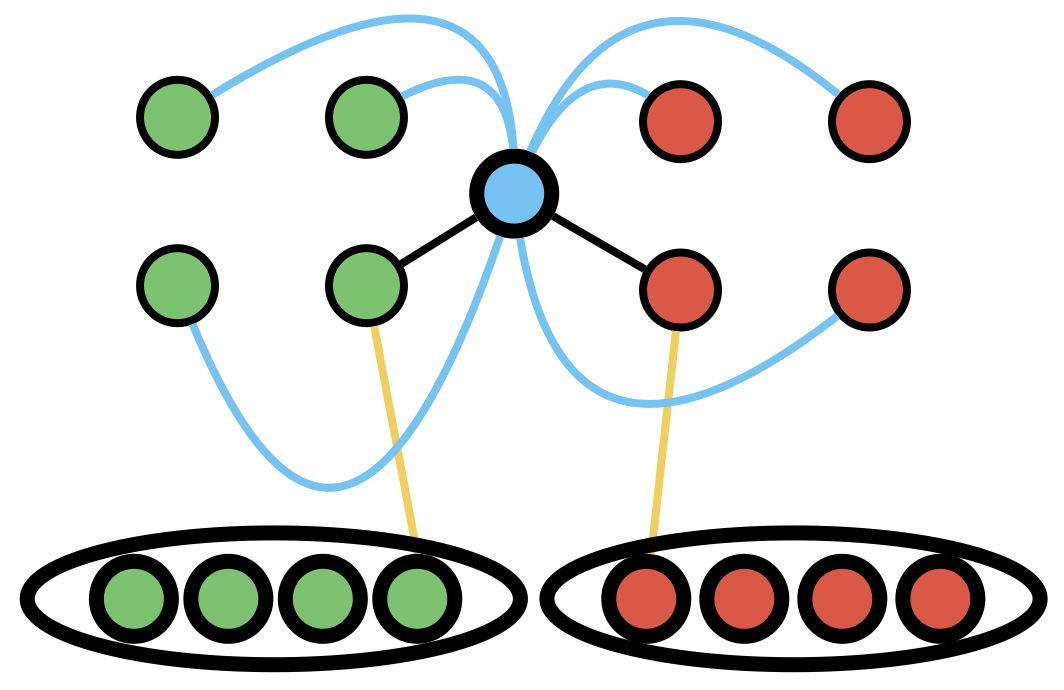}
      \caption{Solution of $G$.}\label{subfig:gstlcmst-solution}
    \end{subfigure}%
\caption{Given an instance of group Steiner tree on $F$ where $g_1$ contains green vertices and $g_2$ contains red vertices and the root $r$ is the bolded blue vertex (\Cref{subfig:gst-instance}), we transform it into an instance of length-constrained MST of $G,h$ where the bolded green/red vertices form the set of copy vertices $V'$, yellow edges form the set $P$, blue edges form the set $S$, and there is a clique of edges within both bolded ovals containing the bolded red/green vertices (\Cref{subfig:gstlcmst-instance}). A solution for $G$ where there is a spanning subtree of edges within both bolded ovals (\Cref{subfig:gstlcmst-solution}) is transformed into a solution for $F$ (\Cref{subfig:gst-solution}).}
\label{fig:gst-reduction}
\end{figure}

\appendix

\section{Variants of Our Algorithm}\label{sec:variants}
Here we describe how to adjust \Cref{alg:planarmain2} to improve the approximation guarantee in quasipolynomial time, and to work for length-constrained Steiner tree.
\subsection{A Quasipolynomial Time Algorithm}\label{sec:qpt}

We show how to improve the weight approximation from the main algorithm if we are willing to give up a quasipolynomial runtime. Basically, we can afford to be less careful with the setting of $\alpha,\beta$ if we only require quasipolynomial time. Specifically, we can obtain the following theorem:
\begin{restatable}{theorem}{qptthmmain}\label{thm:qptplanarmain}
    There is an $O\left(\log n\right)$ approximation algorithm with $1+o(1)$ length slack that runs in quasipolynomial time for planar length-constrained MST.
\end{restatable}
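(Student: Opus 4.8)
The plan is to re-run \Cref{alg:planarmain2} almost verbatim, but with a coarser and more forgiving choice of the two parameters $\alpha$ and $\beta$, and then to observe that the only place polynomial-time was genuinely needed is the size of the dynamic programming table $\DP$, which is $|V(\mathcal{T})|\cdot \beta^{O(\beta)}$. In the polynomial regime we were forced to take $\beta \approx \log n/(\eps^2\log\log n)$ so that $\beta^\beta = n^{O(1/\eps^2)}$; if we only demand quasipolynomial time $n^{\poly(\log n)}$, we can instead afford $\beta$ as large as (say) $\poly(\log n)$, since then $\beta^{O(\beta)} = 2^{O(\poly(\log n))} = n^{\poly(\log n)}$. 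Concretely I would set $\beta = \log n$ and $\alpha = 2$ (or any small constant), so that the hierarchy has depth $\log_\alpha n = O(\log n)$ and each boundary is broken into $O(\beta) = O(\log n)$ pieces.

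First I would recompute the weight bound exactly as in \Cref{lemma:realeverything}: the hierarchy itself costs $w\big(\bigcup_H E(L_H)\big) \le O(\alpha \log_\alpha n)\cdot\OPT = O(\log n)\cdot\OPT$ by \Cref{lemma:hierarchy} with $\alpha = O(1)$, and at each of the $O(\log_\alpha n)$ levels the length-constrained Steiner tree instances, solved via \Cref{thm:chandra} with approximation ratio $O(t^\delta/\delta^3)$ on $t = O(\beta) = O(\log n)$ terminals, contribute a multiplicative $O((\log n)^\delta/\delta^3)$ factor. Taking $\delta = 1/\log\log n$ makes $(\log n)^\delta = (\log n)^{1/\log\log n} = 2 = O(1)$ and $1/\delta^3 = (\log\log n)^3 = \poly(\log\log n)$, so each level costs an $O(\poly(\log\log n))$ factor. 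Summing the telescoping $\DP$ recurrence of \Cref{lemma:realeverything} over $\log_\alpha n = O(\log n)$ levels, with $\alpha=O(1)$ children per node, gives $\DP[G,g] \le O(\log n \cdot \poly(\log\log n))\cdot\OPT$. This is $O(\log^{1+o(1)} n)\cdot\OPT$ — to get a clean $O(\log n)$ one instead balances $\delta$ slightly differently, e.g.\ taking $\delta$ a small constant so the per-level loss is $(\log n)^{o(1)}$ absorbed across the $O(\log n)$ levels into $\log^{1+o(1)}n$; since the theorem only claims $O(\log n)$ I would push the $\poly(\log\log n)$ and the $(\log n)^{o(1)}$ pieces into the leading $\OPT$ via a $(\log\log n)$-fold recursion of the hierarchy construction, or simply take $\alpha = \log^{o(1)} n$ with $\delta$ chosen so $\alpha \cdot (\log n)^\delta = \log^{o(1)}n$, giving weight $O(\log^{1+o(1)}n)\cdot\OPT = O(\log n)\cdot\OPT$ once the $o(1)$ in the exponent is driven to $0$ as $n\to\infty$.

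Next I would recompute the length slack, which follows the second induction in \Cref{lemma:realeverything} unchanged: each level of the hierarchy costs an additive $2h/\beta$ in the distance-to-$r$, both from the diameter-$h/\beta$ pieces of \Cref{lemma:low-diam-partition} and from the coarseness $h/\beta$ of the guesses in $\mathcal G$, so over $\operatorname{depth}(\mathcal T) = O(\log_\alpha n) = O(\log n)$ levels the total excess is $O\big(\frac{h\log n}{\beta}\big) = O\big(\frac{h\log n}{\log n}\big) = O(h)$ — which is not $o(h)$. To get $1+o(1)$ slack I would instead take $\beta = \log n \cdot \omega(1)$, e.g.\ $\beta = \log n \cdot \log\log n$; then the excess is $O\big(\frac{h\log n}{\log n\log\log n}\big) = O(h/\log\log n) = o(h)$, and $\beta^{O(\beta)} = 2^{O(\log n \log\log n \cdot \log(\log n\log\log n))} = n^{O(\log\log n\cdot \log\log n)} = n^{\poly(\log n)}$, which is quasipolynomial. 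The runtime bound is then identical to \Cref{lemma:realalg-runtime} except the $\beta^{O(\beta)}$ term is now $n^{\poly(\log n)}$ rather than $n^{O(1/\eps^2)}$, and \Cref{thm:chandra}'s subroutine runs in $\beta^{O(1/\delta)} = (\log n \log\log n)^{O(\log\log n)} = n^{o(1)}$ per cell.

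The main obstacle is the delicate balancing act between three quantities — the hierarchy branching factor $\alpha$ (which multiplies the weight once), the Steiner-tree terminal count $O(\beta)$ together with the approximation parameter $\delta$ in \Cref{thm:chandra} (which multiplies the weight once per level), and the hierarchy depth $O(\log_\alpha n)$ (which counts the levels) — all while keeping $\beta$ large enough that the per-level length loss $h/\beta$ times the depth is $o(h)$, yet small enough that $\beta^{O(\beta)}$ stays quasipolynomial. The clean statement $O(\log n)$ with $1+o(1)$ slack requires checking that one can simultaneously drive the weight exponent to exactly $1$ (not $1+\Omega(1)$) and the slack to $1+o(1)$; I expect this to work by choosing $\alpha = 2$, $\beta = \log n \log\log n$, and $\delta \to 0$ slowly (say $\delta = 1/\sqrt{\log\log n}$) so that the per-level weight loss $\alpha\cdot\beta^\delta/\delta^3 = 2\cdot(\log n\log\log n)^{1/\sqrt{\log\log n}}\cdot(\log\log n)^{3/2} = 2^{O(\sqrt{\log\log n})} = \log^{o(1)} n$, and the total weight is $\log n \cdot \log^{o(1)}n \cdot \OPT = O(\log^{1+o(1)}n)\cdot\OPT$; absorbing the $\log^{o(1)}n$ into the stated $O(\log n)$ is then a matter of convention, or one states the result as $O(\log^{1+o(1)}n)$ which is what the theorem's "$O(\log n)$" morally means. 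I would verify all the arithmetic carefully in the final write-up since the margins here are thin, but no new ideas beyond \Cref{alg:planarmain2} and its analysis are needed.
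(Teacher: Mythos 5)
Your proposal follows essentially the same route as the paper's own proof sketch in \Cref{sec:qpt}: relax $\alpha$ to a constant, take $\beta = \poly(\log n)$ so that the guess-table size $\beta^{O(\beta)}$ becomes quasipolynomial, and set $\delta \approx 1/\log\log n$ in \Cref{thm:chandra} so that $\beta^{\delta} = O(1)$. The paper's specific choices are $\alpha = O(1)$, $\beta = \log^2 n$, $\delta = 1/(2\log\log n)$, versus your $\alpha = 2$, $\beta = \log n\log\log n$, $\delta = 1/\log\log n$; these are numerically interchangeable and both give $1+o(1)$ slack and quasipolynomial running time. No new idea is needed beyond re-tuning, exactly as you say.

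One point where your write-up is actually \emph{more} careful than the paper: you explicitly track the $1/\delta^3 = \poly(\log\log n)$ factor from \Cref{thm:chandra} and conclude that the weight bound one actually obtains is $O(\log n \cdot \poly(\log\log n)) = O\left(\log^{1+o(1)} n\right)\cdot\OPT$, not literally $O(\log n)\cdot\OPT$. The paper's sketch writes the per-level loss as just $\beta^{\delta}$ in its displayed product and then dispatches the exponent via $\log^{1+1/\log\log n} n = O(\log n)$; this handles $\beta^{\delta}$ but silently drops the $1/\delta^3$ term. Your observation that, within this framework, one cannot simultaneously get exactly $O(\log n)$ weight and $1+o(1)$ slack (since $O(\log n)$ weight would force $\delta = \Omega(1)$ and $\beta = O(1)$, but $\beta = O(1)$ gives $\Theta(\log n)$ slack over $\Theta(\log n)$ levels) is correct, and your reading that ``$O(\log n)$'' here should really be ``$O\left(\log^{1+o(1)} n\right)$'' matches what both arguments in fact deliver.
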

\begin{proof}[Proof Sketch]
    Previously, $\alpha,\beta$ had the following roles: we guessed $h^*$ for each piece in buckets of (roughly) $1/\beta$ in order to have $\beta^{\beta}$ total guesses, where $\beta=\beta(\eps)$ had to be chosen such that $\beta^\beta$ was polynomial on $n$. Next, we needed
$$\frac{h\log_{\alpha}n}{\beta}=h\eps\implies \alpha=\exp\left(\frac{\ln n}{\eps\beta}\right)$$
for the length slack. Lastly, we charged $\alpha$ to the weight of to connect our Steiner trees to the separator hierarchy, and charged $\beta^\delta$ to the weight due to \Cref{thm:chandra}.

    So we want to minimize $\alpha,\beta,\epsilon$ all at once (although the $\delta$ gives us some leeway on setting $\beta$ as we will see soon). If we only care about the total number of guesses to be quasipolynomial, then we could guess every integer in $[h]$ for each piece, which only requires $\beta$ be at most some $\poly(\log n)$. The dynamic programming table defined as before with this many guesses has quasipolynomial size. Now suppose $\alpha=O(1),\beta=\log^c n$; we have $\eps=1/\log^{c-1}n$ (as we require $\alpha=\exp\left(\frac{\ln n}{\eps\beta}\right)$), and this is $o(1)$ for any $c>1$. Then the total weight of our solution is
    $$\overbrace{O(1)}^{\alpha}\cdot \overbrace{O(\log n)}^{\log_{\alpha}n} \cdot \overbrace{O(\log^{c\delta}n)}^{\beta^\delta}\cdot\OPT$$
    so we just need to set $\delta<1/c$ to beat the lower bound of \Cref{thm:newlowerbound}. But we can do even better since we are already giving up quasipolynomial time: we can apply \Cref{thm:chandra} with $\delta$ as low as $\frac{1}{c\log\log n}$, giving a quasipolynomial time subroutine obtaining a $\poly(\log\log n)$ approximation of the subinstance. Then we can run \Cref{alg:planarmain2} in quasipolynomial time by setting $\eps=1/\log n,\delta=\frac{1}{2\log\log n},\alpha=O(1),\beta=\log^2n$. Noting that $\log^{1+\frac{1}{\log\log n}}n=O(\log n)$ we obtain \Cref{thm:qptplanarmain}.
\end{proof}

\noindent Under stronger assumptions, this shows that length-constrained MST is easier on planar graphs than on general graphs for the class of quasipolynomial time algorithms as well. We have the following theorem due to \cite{grandoni2019log2} and our reduction in \Cref{sec:gst-reduction}:
\begin{theorem}
    Length-constrained MST cannot be $o(\log^2 n/\log\log n)$-approximated in quasipolynomial time with length slack $s$ for $s<2$ unless $\textsf{NP}\subseteq\bigcap_{\eps\in(0,1)}\textsf{ZTIME}\left(2^{n^\eps}\right)$, or the Projection Game Conjecture is false.
\end{theorem}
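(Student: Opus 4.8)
The plan is to observe that this theorem is, essentially, a corollary of the reduction already used to prove \Cref{thm:newlowerbound}, once \Cref{thm:gstlowerbound} is replaced by the stronger quasipolynomial-time hardness of group Steiner tree established by \cite{grandoni2019log2}: namely, group Steiner tree admits no $o\!\left(\log^2 n/\log\log n\right)$-approximation in quasipolynomial time unless $\textsf{NP}\subseteq\bigcap_{\eps\in(0,1)}\textsf{ZTIME}\left(2^{n^\eps}\right)$ or the Projection Game Conjecture fails. So the only thing to check is that our polynomial-time reduction composes correctly with quasipolynomial-time algorithms and preserves the relevant parameters.

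Concretely, I would proceed as follows. First, take the polynomial-time reduction from the proof of \Cref{thm:newlowerbound} mapping a group Steiner tree instance $(F,r,g_1,\dots,g_k)$ to a length-constrained MST instance $(G,h)$ built from the edge sets $P$, $C$, $S$, and $E(F)$. Note that $|V(G)|=O(|V(F)|)$ and $|E(G)|=O(|E(F)|+|V(F)|^2)$, so the instance size blows up only polynomially; in particular $\log |V(G)|=\Theta(\log|V(F)|)$, and hence a $\log^2/\log\log$ (in)approximability factor in one instance translates to the same factor, up to constants, in the other. Second, re-invoke the correctness analysis of that reduction \emph{for length slack $s<2$}: every path reaching a copy vertex in a pulled clique within distance $sh$ must use exactly one $P$-edge (length $h$) followed by a path from its original counterpart to $r$ consisting only of $E(F)$-edges, since any detour through a second $P$-edge or through an $S$-edge already costs at least $2h>sh$. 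Consequently $E(T_G)\cap E(F)$ is feasible for the group Steiner tree instance and, as only $E(F)$-edges carry positive weight, has weight at most $w(T_G)$; conversely any group Steiner tree solution lifts to a length-constrained MST solution of the same weight with length slack $1$. Third, compose: a quasipolynomial-time $o\!\left(\log^2 n/\log\log n\right)$-approximation with length slack $s<2$ for length-constrained MST would yield, via this reduction, a quasipolynomial-time $o\!\left(\log^2 n/\log\log n\right)$-approximation for group Steiner tree, contradicting \cite{grandoni2019log2} unless the stated complexity inclusion holds or the Projection Game Conjecture is false.

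The main point requiring care—rather than a genuine obstacle—is simply that the threshold $s<2$ is exactly the one forced by the two possible ``cheating'' routes in $G$: the $S$-edge-then-$P$-edge route from $r$ into a clique (length $2h$) and the $P$-$C$-$P$ chain between distinct cliques (length $\ge 3h$). Both are already handled in the proof of \Cref{thm:newlowerbound}, so plugging in the stronger group Steiner tree hardness yields the claim with no further work.
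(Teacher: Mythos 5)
Your proposal is correct and matches what the paper implicitly intends: the paper states this theorem as ``due to'' \cite{grandoni2019log2} without writing out a proof, leaving the reader to observe that the reduction already constructed in the proof of \Cref{thm:newlowerbound} composes with the quasipolynomial-time hardness of group Steiner tree from that reference. You have filled in exactly that inference: the reduction is polynomial-size (so $\log|V(G)| = \Theta(\log|V(F)|)$ and the $\log^2/\log\log$ factor is preserved up to constants), the correctness argument and the $s<2$ threshold are unchanged, and the composition gives the stated hardness. No discrepancy.
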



\subsection{Planar Length-Constrained Steiner Tree}\label{sec:steiner}
Our algorithms are generalizable to length-constrained Steiner tree. In length-constrained Steiner tree, we are given the same input as in length-constrained MST along with a terminal subset $U\subseteq V,|U|=t$ and are required to return a tree of $G$ with minimum total edge weight among trees $T$ of $G$ satisfying $d_T(r,u)\leq h$ for each $u\in U$.

\subsubsection{Extending \Cref{alg:planarmain2}}
\begin{theorem}\label{coro:lcsteiner}
    There is a polynomial time 
    $O\left(\log^{1+\eps}t\right)$ approximation algorithm with $1+\eps$ length slack for planar length-constrained Steiner tree, where $t$ is the number of terminals. 
\end{theorem}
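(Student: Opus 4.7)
The plan is to run essentially Algorithm~\ref{alg:planarmain2} with two small modifications that replace $n$ by $t$ throughout the analysis. First, I would assign vertex weights $W(v) = 1$ for $v \in U$ and $W(v) = 0$ otherwise. Second, I would redefine a region $(H, L_H)$ to be a leaf of the hierarchy once it contains no non-boundary terminal, rather than no non-boundary vertex. With these modifications, \Cref{lemma:hierarchy-tree} gives hierarchy depth $O(\log_\alpha t)$ since the total vertex weight is now $t$.

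For the weight guarantee, I would re-run the analyses of \Cref{lemma:hierarchy,lemma:realeverything} with the new weight function. The hierarchy-boundary cost becomes $O(\alpha \log_\alpha t) \cdot \OPT$ by the same telescoping argument of \Cref{lemma:hierarchy}, and the DP induction $\DP[H, g] \leq \alpha \beta^\xi C i \cdot \OPT_{|H}$ goes through unchanged, since the recurrence only uses the branching factor $\alpha$ and the charge $\OPT_{|H}$ at each level. Taking $\xi = \eps/2$, $\alpha = \log^\xi t$, and $\beta = \log t/(\xi^2 \log \log t)$ as in \Cref{sec:realalg}, the total solution weight is $O(\log^{1+\eps} t) \cdot \OPT$, and the runtime remains $\poly(n) \cdot t^{O(1/\eps^2)}$ by the same counting as in \Cref{lemma:realalg-runtime}.

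For the length guarantee, the induction of \Cref{lemma:realeverything} carries over to every $u \in U$. At a leaf $(H, L)$, each terminal in $V(H)$ lies in $L$ by our new leaf definition, hence belongs to some boundary piece and is at distance $0$ from itself, seeding the induction. Each of the $\log_\alpha t$ levels above then contributes at most $2h/\beta$ to the root distance, summing to $h + h\eps$ exactly as before. To extract a Steiner tree, I would return the subtree of the resulting shortest-path tree spanning $\{r\} \cup U$, which preserves both the weight and length bounds.

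The main obstacle is verifying that the division construction of \Cref{lemma:lc-div} remains valid when the vertex weight function has many zero entries. Since \Cref{thm:cycle-separator} accepts any nonnegative vertex weighting, the separator computations are well-defined, and the invariants in \Cref{defn:lc-region,defn:LC-division} are preserved: the light-boundary bound depends on $D^{(h)}$ and $\alpha$ but not on $W$, and the few-components and separated-boundary properties are purely metric. Care is only needed to confirm that leaves containing no non-boundary terminal require no DP work, which is immediate since the DP base case already returns $0$ on leaves and interior non-terminal vertices of such leaves are irrelevant to the Steiner tree objective.
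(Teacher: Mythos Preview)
Your high-level plan—terminal-weighted hierarchy, depth $\log_\alpha t$, parameters rescaled in $t$—matches the paper's approach. But you are missing one ingredient that the paper explicitly adds: an extra layer of guessing in the dynamic program.

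The bound of \Cref{lemma:correct-guess} relies on the optimal tree $T^*$ touching every boundary piece, so that $h^*(\hat V)=\min_{v\in\hat V}d_{T^*}(r,v)$ is well-defined and a ``correct'' guess in the sense of \Cref{defn:correct} always exists. In the Steiner setting $T^*$ need not visit every piece of a region's boundary; for any piece $\hat V$ with $\hat V\cap V(T^*)=\emptyset$, the quantity $h^*(\hat V)$ is undefined, no value in $\mathcal{G}$ qualifies as correct, and the argument that $E(T^*_{|H})\cup X\cup X'$ is feasible for the sub-instance collapses (there is simply no path in $T^*$ reaching the terminal corresponding to an untouched child piece). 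Hence the induction $\DP[H,g]\le \alpha\beta^\xi Ci\cdot\OPT_{|H}$ does not go through as written. The paper patches this by additionally guessing, for each parent piece of the current region, a Boolean indicating whether $T^*$ uses that piece, and only attaching the $X$-edges to pieces guessed ``used''. This costs an extra factor of $2^{O(\beta)}=\poly(n)$ in the DP size, which is absorbed.

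Your stated ``main obstacle'' (whether \Cref{lemma:lc-div} tolerates zero vertex weights) is not the real issue; the cycle-separator machinery is indifferent to zero weights. The genuine obstacle is that $T^*$ no longer spans, which breaks the comparison to $T^*$ in the weight analysis unless you add the usage guesses.
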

\begin{proof}[Proof Sketch]
    Here $T^*,\OPT$ denote the optimal solution and its weight to our instance of length-constrained Steiner tree on graph $G=(V,E)$, terminals $U$, and length bound $h$.
    Recall that our algorithm for length-constrained MST computed a length-constrained division hierarchy assuming that the vertices were uniformly weighted, although all of the length-constrained separator/division/hierarchy technology that we defined works for general nonnegative vertex weight functions. 
    
    We compute the hierarchy using \Cref{lemma:hierarchy} with the weight function $W$ that assigns all terminals weight $1$ and all other vertices weight $0$. This implicitly uses \Cref{lemma:lc-div} (which implicitly uses \Cref{lemma:lengthseparator}). In particular, in the step of the algorithm in \Cref{lemma:lc-div} where we balance non-boundary vertex weights, we now balance the number of non-boundary \textit{terminals}, and we obtain divisions whose regions contain at most a $1/\alpha$ fraction of all non-boundary terminals. Note the hierarchy has depth $\log_\alpha t$.

    In the dynamic programming, we add to our guessing framework a guess for each parent piece, given a node in the hierarchy. Specifically, for each parent piece, we guess whether or not $T^*$ uses the piece, i.e.\ if a vertex in the piece is in $T^*$. Then we define the subinstances of length-constrained Steiner tree for each guess and region by adding edges to a parent piece only if we guess that $T^*$ uses it, and the terminals of the subinstance are defined as before. This adds a multiplicative $2^{O\left(\beta\right)}=\poly(n)$ to the total number of guesses, and thus the size of the dynamic program is still $\poly(n)$. 

    The arguments for runtime, length, and weight are roughly the same. However, we actually obtain an $O\left(\log^{1+\eps}t\right)$ approximation by setting $\alpha,\beta$ to be in terms of $t$ rather than $n$.
\end{proof}

\subsubsection{Extending \Cref{alg:planar-intgap}}

\begin{theorem}
    Let $x$ be an optimal fractional solution to the flow-based LP of length-constrained Steiner tree with graph $G$, terminals $U,|U|=t$, root $r$, and length bound $h$; there exists a polynomial-time algorithm that returns a spanning tree $T$ of weight $O\left(\frac{\log^2 t}{\eps}\right)w(x)$ and satisfying $d_T(r,v)\leq (1+\eps)h$ for every $v\in U$.  
\end{theorem}
\begin{proof}[Proof Sketch]
    Given an instance of length-constrained Steiner tree with graph $G=(V,E)$ and terminals $U\subseteq V$, we modify the LP by replacing every $\forall\text{ } v\in V-r$ in \Cref{equ:lcmst-flow} with $\forall\text{ }v\in U$. Then we may modify \Cref{alg:planar-intgap} as follows: we compute the hierarchy using \Cref{lemma:hierarchy} with the weight function $W$ that assigns all terminals weight $1$ and all non-terminals weight $0$. For the shortcuts, we only need to consider adding shortcuts to pieces that contain a terminal, rather than every possible piece. As a result, a very similar analysis follows where things are in terms of $t$ rather than $n$.
\end{proof}

\section{Length-Constrained MST is Basically Directed Steiner Tree}\label{sec:dst-reduction}
In this section we prove the following theorem:
\begin{theorem}
    There exists a polynomial time $\alpha$-approximation algorithm for length-constrained MST with length slack $1$ if and only if there exists a polynomial time $\alpha$-approximation algorithm for directed Steiner tree. 
\end{theorem}
\noindent
And this theorem is immediately implied by the following three lemmas we prove in the following three subsections:
\begin{lemma}[Length-Constrained MST is Basically Length-Constrained Steiner Tree]\label{lemma:lcmst-lcst}
    There exists a polynomial time $\alpha$-approximation algorithm for length-constrained MST with length slack $1$ if and only if  there exists a polynomial time $\alpha$-approximation algorithm for length-constrained Steiner tree.
\end{lemma}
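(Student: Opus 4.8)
The plan is to prove the two directions separately. The easy direction is essentially definitional: length-constrained MST is the special case of length-constrained Steiner tree with terminal set $U=V$, and feasible solutions, the optimal value, and the length-slack-$1$ guarantee all transfer verbatim, so an $\alpha$-approximation for length-constrained Steiner tree is in particular an $\alpha$-approximation for length-constrained MST. For the converse, the idea is to reduce a length-constrained Steiner tree instance $(G,w,l,r,h,U)$ to a length-constrained MST instance by attaching every non-terminal to $r$ by a ``free'' edge that uses up the entire length budget, so that in any feasible MST solution these edges are useless for reaching terminals.

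Concretely, I would first rescale lengths to make every original edge positive: replace $l$ by $l'(e):=2n\cdot l(e)+1$ and $h$ by $h':=2nh+(n-1)$. Since every simple path uses at most $n-1$ edges, a simple path has $l$-length at most $h$ iff it has $l'$-length at most $h'$, so this changes neither the feasible set nor the weights of the Steiner instance. Then form $G''$ from $G$ by adding, for every $v\in V\setminus(U\cup\{r\})$, an edge $e_v=(r,v)$ of weight $0$ and length $h'$, and let $I''=(G'',w,l',r,h')$ be the resulting length-constrained MST instance (note $V(G'')=V(G)$, so ``spanning'' is unchanged). Two facts then drive the reduction. First, $\OPT(I'')\le\OPT$: take an optimal Steiner tree $T^*$ and its minimal subtree $T^*_{\min}$ spanning $\{r\}\cup U$; every leaf of $T^*_{\min}$ lies in $\{r\}\cup U$, hence every vertex of $T^*_{\min}$ lies on a root-to-terminal path and is within $h'$ of $r$, and appending the pendant edge $e_v$ for each $v\notin V(T^*_{\min})$ (legal, since such $v\notin U\cup\{r\}$) yields a feasible spanning tree of $G''$ of the same weight. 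Second, any length-slack-$1$ feasible MST solution $T$ for $I''$ restricts to a feasible Steiner tree of no larger weight: for a terminal $t$, the $r$–$t$ path in $T$ is simple, so it can use a new edge only as its first edge, which would leave a length budget of $h'-h'=0$ for the remainder, forcing the remainder to be empty (original edges have $l'\ge1$) and hence $t=v$, which is impossible; so the $r$–$t$ path avoids all new edges, and the component of $T\cap E(G)$ containing $r$ is a feasible Steiner tree of weight at most $w(T)$. Running the assumed $\alpha$-approximation for length-constrained MST on $I''$ and extracting this component gives, in polynomial time, an $\alpha$-approximate length-constrained Steiner tree for the original instance (infeasibility transfers the same way, since new edges never help reach terminals).

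The main obstacle is exactly the interaction of zero-length edges with the free connector edges: without the rescaling step an MST solution could reach a terminal $t$ by taking the free edge to some non-terminal $v$ and then a zero-length but arbitrarily expensive (and not independently reachable) path from $v$ to $t$, which has no analogue in the Steiner instance and would destroy the weight guarantee; the rescaling forces every connector edge to consume the entire length budget, neutralizing it. This is also precisely why the reduction requires the MST algorithm to have length slack exactly $1$: slack $s>1$ would leave budget $(s-1)h'>0$ after a connector edge and reopen the same loophole. The rest of the argument is routine bookkeeping about minimal subtrees and components of forests.
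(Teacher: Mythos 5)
Your proof takes essentially the same route as the paper's---attach each non-terminal to $r$ with a weight-$0$ pendant edge of length equal to the budget so that an MST solution restricts to a Steiner solution, and conversely any Steiner solution extends to an MST solution by adding pendant edges---but you add a length-rescaling preprocessing step ($l'(e)=2n\,l(e)+1$, $h'=2nh+n-1$) that the paper omits. That rescaling is not cosmetic: it closes a genuine gap in the paper's argument. The paper adds pendant edges of length exactly $h$ and then claims ``correctness is immediate,'' but since the paper allows $l:E\to\mathbb{Z}_{\geq 0}$ (zero-length edges), a feasible MST solution $T_G$ can reach a terminal $t$ by taking a pendant edge to a non-terminal $v$ followed by a zero-length path $v\to t$ in $E(F)$; the paper's extraction $T_F=E(T_G)\cap E(F)$ then drops the pendant edge and $t$ is disconnected from $r$, so $T_F$ is infeasible. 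A concrete instance exhibiting this: $V(F)=\{r,v,t\}$, $U=\{t\}$, $h=1$, edge $(r,t)$ of length $1$ and weight $100$, edge $(v,t)$ of length $0$ and weight $1$; the MST optimum in $G$ is $\{(r,v),(v,t)\}$ of weight $1$, which restricts to the infeasible $\{(v,t)\}$, whereas the Steiner optimum is $\{(r,t)\}$ of weight $100$. Your rescaling forces every original edge to have positive $l'$-length, so a pendant edge exhausts the entire budget and cannot be followed by anything, which is exactly what the reduction needs; your verification that rescaling preserves the feasible set (simple paths have at most $n-1$ edges) and your observation that the slack-$1$ requirement is precisely what makes the pendant edges inert are both correct. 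In short, your write-up is correct and is a strictly more careful version of the paper's proof; the paper's version, read literally, is wrong on instances with zero-length edges.
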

\begin{lemma}[$\rightarrow$]\label{lemma:lcmst-dst}
    There exists a polynomial time $\alpha$-approximation algorithm for length-constrained MST with length slack $1$ if there exists a polynomial time $\alpha$-approximation algorithm for directed Steiner tree. 
\end{lemma}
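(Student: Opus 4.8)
The plan is to give a direct approximation-preserving reduction from length-constrained MST to directed Steiner tree (DST): given an instance of length-constrained MST, build an auxiliary directed graph, run the assumed $\alpha$-approximation for DST on it, and translate the solution back. The natural construction is the standard ``layered graph'' (a.k.a. time-expanded graph) trick used for hop-constrained problems. First I would create $h+1$ copies $v_0, v_1, \dots, v_h$ of each vertex $v$, where $v_i$ intuitively represents ``$v$ reached from $r$ using a path of length exactly $i$.'' For each edge $e = (u,v)$ of the original (undirected) graph with length $l(e)$ and weight $w(e)$, add directed arcs $u_i \to v_{i + l(e)}$ and $v_i \to u_{i+l(e)}$ for every $i$ with $i + l(e) \le h$, each of cost $w(e)$. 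Also add zero-cost arcs $v_i \to v_{i+1}$ for each $v$ and each valid $i$ (padding, so that a node reached early can be ``held''), or equivalently make the terminal for $v$ the whole set $\{v_i\}_i$. Set the DST root to $r_0$ and the terminal set to be one terminal per original vertex $v \neq r$, realized either by contracting $\{v_i : i \le h\}$ into a single sink per $v$ (via zero-cost arcs into a fresh sink $t_v$) so that DST must connect $r_0$ to each $t_v$.

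The key steps, in order: (1) Show the construction is polynomial-size — it has $O(nh)$ vertices and $O(mh)$ arcs, and since we may assume $h \le n$ (or more carefully, $h$ polynomially bounded; if lengths are large one first caps $h$ and contracts zero-length edges, as is standard) this is polynomial. (2) Show soundness of ``LCMST $\Rightarrow$ DST'': given any $h$-length spanning tree $T$ of the original graph, orient it away from $r$; each vertex $v$ sits at some distance $d_T(r,v) =: i_v \le h$, so mapping the tree edge into $u_{i_u} \to v_{i_v}$ (when $i_v = i_u + l(e)$, which holds along the tree since distances are realized by tree paths) yields a subgraph of the layered graph of the same total cost connecting $r_0$ to some $v_{i_v}$, hence to $t_v$, for every $v$; so $\mathrm{OPT}_{\mathrm{DST}} \le \mathrm{OPT}_{\mathrm{LCMST}}$. (3) Show completeness of ``DST $\Rightarrow$ LCMST'': a feasible DST solution is a subgraph connecting $r_0$ to every $t_v$; take a shortest-path arborescence inside it, project each arc $u_i \to v_j$ back to the undirected edge $(u,v)$ (dropping the zero-cost padding arcs and sink arcs), and observe this edge set has weight at most the DST solution's cost and, by construction of the layers, every vertex is within distance $h$ of $r$; take a shortest-path (w.r.t. $l$) tree of this edge set to get an honest $h$-length spanning tree of no greater weight. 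Combining (2) and (3) with the assumed $\alpha$-approximation for DST gives the claimed $\alpha$-approximation with length slack $1$.

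The main obstacle I expect is handling the size of the layered graph when $h$ is not polynomially bounded in $n$ (e.g. $h$ given in binary): then $O(nh)$ is not polynomial. The fix — which I would spell out carefully — is a preprocessing step: contract all edges of length $0$ (they never consume the budget), and observe that any feasible solution uses at most $n-1$ edges each of length $\ge 1$ along any root-to-vertex path, so we may replace $h$ by $\min(h, n-1)$ without changing feasibility or cost; after this reduction the layered graph has polynomial size. A secondary subtlety is making sure the projection in step (3) really produces a \emph{tree} with the length guarantee rather than just a connected subgraph — this is dispatched cleanly by taking a shortest-path tree at the end, which can only decrease weight and never increases any $d(r,v)$, exactly as done in \Cref{lemma:realeverything} and \Cref{lemma:everything-intgap}. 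Everything else is bookkeeping about arc costs equalling edge weights and layer indices tracking path lengths.
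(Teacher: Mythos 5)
Your layered-graph reduction is essentially the paper's reduction, and the core argument (orient the optimal tree away from $r$ and embed it in the appropriate layers for one direction; project a DST solution back and take a shortest-path tree for the other) is correct and matches the paper's. One small difference worth noting: the paper's construction advances exactly one layer per edge ($u_i \to v_{i+1}$), which is only correct under the implicit convention that every edge has length $1$; your version advances $l(e)$ layers per edge, which is the right generalization to arbitrary integer lengths, so on that point you are being more careful than the paper.

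However, the fix you propose for the ``$h$ too large to build an $O(nh)$-size graph'' obstacle does not work as stated. After contracting length-$0$ edges the remaining edges have length $\geq 1$, so a simple root-to-vertex path has at most $n-1$ edges, but its \emph{length} can still be far larger than $n-1$ when edge lengths exceed $1$. Replacing $h$ by $\min(h, n-1)$ therefore strictly tightens the length constraint and changes the instance (e.g.\ if every edge has length $n$ and $h = n^2$, the original instance is feasible but the capped one is not). The honest statement is that the layered-graph reduction is polynomial when $h$ is polynomially bounded in the input size (for instance when $l$ is given in unary, or after a separate scaling/subdivision step), which is the implicit assumption the paper is making as well --- but it is not something you can get for free by the cap you describe. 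If you want to keep that paragraph, you should either restrict attention to polynomially bounded $h$ up front, or explain a genuine preprocessing (subdividing edges and bounding $\sum_e l(e)$, or bucketing distinct reachable distance values) that yields a polynomial-size layered graph without introducing length slack.
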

\begin{lemma}[$\leftarrow$]\label{lemma:dst-lcmst}
    There exists a polynomial time $\alpha$-approximation algorithm for directed Steiner tree if there exists a polynomial time $\alpha$-approximation algorithm for length-constrained MST with length slack $1$. 
\end{lemma}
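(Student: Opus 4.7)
The plan is to reduce directed Steiner tree (DST) to length-constrained Steiner tree via a layered-graph construction that encodes directedness as strictly increasing layer indices, and then apply \Cref{lemma:lcmst-lcst} (the $\Rightarrow$ direction) to convert the hypothesized length-constrained MST approximation into a length-constrained Steiner tree approximation.

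Given a DST instance with directed graph $D=(V,E)$, root $r$, terminals $U \subseteq V$, and arc weights $w: E \to \mathbb{Z}_{\geq 0}$ on $n = |V|$ vertices, I would build an undirected instance on vertex set $\{v_i : v \in V,\ 0 \leq i \leq n-1\}$ (the $n$ layered copies of $V$). For each arc $(u,v) \in E$ and each $i \in \{0, \ldots, n-2\}$, I add an undirected edge $u_i v_{i+1}$ of length $1$ and weight $w(u,v)$. For each terminal $t \in U$ and each $i$, I add an undirected ``ladder'' edge $t_i t_{i+1}$ of length $0$ and weight $0$. The undirected root is $r_0$, the terminal set is $U' = \{t_{n-1} : t \in U\}$, and the length bound is $h = n-1$.

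Next, I would argue $\OPT_{\text{DST}} = \OPT_{\text{LCST}}$ on this instance. For the easy direction, an optimal DST arborescence has depth at most $n-1$, so mapping each arc $(u,v)$ appearing at depth $i+1$ to the undirected edge $u_i v_{i+1}$ yields a subgraph of the layered graph connecting $r_0$ to each $t$ at some layer $\leq n-1$; the zero-cost terminal ladders then extend each such connection up to $t_{n-1}$ without increasing weight, and by construction the length from $r_0$ to each terminal is at most $n-1$. For the other direction, I observe that every edge in the layered graph moves between consecutive layers, so any $r_0$-to-$t_{n-1}$ path of length at most $n-1$ must be strictly monotone in the layer index. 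Projecting every length-$1$ edge $u_i v_{i+1}$ used in the returned undirected tree to the arc $(u,v)$ gives a subgraph of $D$ in which every $t \in U$ is reachable from $r$ (by monotone-path projection), and this projection has weight at most that of the undirected solution; taking any spanning arborescence of the reachable subgraph yields a DST solution of no greater weight.

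Putting it together, an $\alpha$-approximation for length-constrained MST with slack $1$ yields an $\alpha$-approximation for length-constrained Steiner tree with slack $1$ via \Cref{lemma:lcmst-lcst}, which I would apply to the layered instance; the returned undirected tree has weight at most $\alpha \cdot \OPT_{\text{LCST}} = \alpha \cdot \OPT_{\text{DST}}$, and the projection/extraction above yields an $\alpha$-approximate DST solution in polynomial time, since the layered graph has size $O(n|E|)$. The main obstacle I anticipate is the reverse direction of the projection argument, specifically ensuring that edges of the returned Steiner tree lying outside any root-to-terminal shortest path cannot spoil the monotone-path projection; I would sidestep this cleanly by post-processing the returned Steiner tree to keep only the union of its $r_0$-to-$t_{n-1}$ paths across $t \in U$, which only decreases weight and guarantees that every retained edge lies on some length-$\leq n-1$ monotone path, making the projection-to-arcs step unambiguous.
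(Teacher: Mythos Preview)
Your construction has a genuine bug: the zero-length ladder edges $t_i t_{i+1}$ destroy the monotonicity you rely on. You claim that ``every edge in the layered graph moves between consecutive layers, so any $r_0$-to-$t_{n-1}$ path of length at most $n-1$ must be strictly monotone in the layer index,'' but this inference only holds if every inter-layer edge has length $1$. With zero-length ladders, a path can climb for free along one terminal's ladder, then traverse an arc-edge \emph{backwards} (from layer $i{+}1$ to layer $i$), effectively reversing that arc's direction.

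Concretely: take $D$ on $\{r,t,s\}$ with terminals $\{t,s\}$ and arcs $(r,t)$ of weight $0$, $(t,s)$ of weight $100$, $(s,t)$ of weight $1$. Then $\OPT_{\text{DST}}=100$ (the only directed $r$--$s$ path is $r\to t\to s$). In your layered graph with $n=3$, $h=2$, the path $r_0\,\text{--}\,t_1\,\text{--}\,t_2\,\text{--}\,s_1\,\text{--}\,s_2$ has length $1+0+1+0=2$ and uses the arc-edge $s_1t_2$ (coming from arc $(s,t)$, weight $1$) in the ``down'' direction. The resulting LCST solution has weight $1$, and your projection of its length-$1$ edges yields arcs $\{(r,t),(s,t)\}$, from which $s$ is \emph{not} reachable from $r$. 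Your post-processing step (restricting to root-to-terminal paths) does not help, since this non-monotone path is itself one of those paths.

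The easy fix is to give the ladder edges length $1$ as well; then every edge has length $1$, any length-$(n{-}1)$ path from layer $0$ to layer $n{-}1$ is forced to be monotone, and the rest of your argument goes through. The paper avoids the issue differently: it introduces separate ``special'' terminal copies, attaches the layer-$i$ copy of $t$ to its special copy by an edge of length $n-i$, and sets $h=n$; this also forces any feasible root-to-terminal path to be monotone among the regular copies, since any backtracking would exceed the length budget.
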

\subsection{Length-Constrained MST is Basically Length-Constrained Steiner Tree}
We show that if we are restricted to length slack $1$, then length-constrained MST is equivalent to length-constrained Steiner tree. 
\begin{proof}[Proof of \Cref{lemma:lcmst-lcst}]
    Length-constrained MST is a special case of length-constrained Steiner tree, so it suffices to reduce length-constrained Steiner tree to length-constrained MST. 
    
    \textbf{The reduction.} We use the same trick with the set $S$ of edges in the reduction in the proof of \Cref{thm:newlowerbound}. Specifically, given an instance of length-constrained Steiner tree with graph $F$, terminals $U\subseteq V(F)$, and length bound $h$, we construct an instance of length-constrained MST with graph $G$ such that $V(G)=V(F)$ and $E(G)$ is the union of the set $S$ of edges containing an edge of length $h$ and weight $0$ between the root $r$ and every vertex in $V(F)\setminus U$ with $E(F)$, where we keep the same lengths and weights for edges in $E(F)$. 

    Given a solution $T_G$ to the instance of length-constrained MST with $G,h$, we transform it into a solution $T_F$ to the instance of length-constrained Steiner tree with $F,U,h$ by taking the edges $E\left(T_G\right)\cap E(F)$. A length-constrained Steiner tree solution $T'_F$ of $F,U,h$ gives a solution $T'_G$ to the length-constrained MST instance with $G,h$ by taking the union of $E\left(T'_F\right)\cup S$, removing redundant edges in $S$ (i.e.\ $r,v$ edges in $S$ such that there already exists an $r,v$ path in $E\left(T'_F\right)$) from there to form a tree.

    \textbf{Correctness of the reduction.} This is immediate by construction.
\end{proof}
\subsection{Reduction to Directed Steiner Tree}
We give a reduction from length-constrained minimum spanning tree to the directed Steiner tree problem. However, the reduction does not preserve planarity, preventing us from simply using e.g.\ the algorithm of \cite{friggstad2023planardst}, and thus motivating the main algorithm of this paper. 
\begin{proof}[Proof of \Cref{lemma:lcmst-dst}]
\textbf{The reduction.} Given an instance of length-constrained MST with graph $F=(V,E)$, root $r$, and diameter bound $h$, we construct a directed layered graph $G$ rooted at $r$ as follows: 
\begin{enumerate}
    \item\textbf{Layers}: there are $h+1$ layers indexed from $0$ to $h$. 
    \item \textbf{Vertices}: layer $0$ contains just $r$, and the remaining layers $1$ through $h$ contain a copy of $V(F)-r$. For a vertex $v\in V(F)-r$ we let $v_i$ be the copy of $v$ in layer $i$.
    \item \textbf{Edges}: for every $(r,v)\in E(F)$, we add a directed edge of the same weight from $r$ to $v_1$ in $H$. For $(u,v)\in E(F)$ where $u,v\neq r$ we add a directed edge of the same weight from $u_i$ to $v_{i+1}$, for every $i\in[h]$. We also add a directed edge of weight $0$ from $v_i$ to $v_{i+1}$ for every $i\in[h-1]$. 
    \item \textbf{Terminals}. In this instance of directed Steiner tree, we let $r$ be the root and all the vertex copies in layer $h$ to be the terminals $U$. 
\end{enumerate}
Given a solution $T_G$ to the instance of directed Steiner tree with $G,U,r$, we show how to transform it into a solution $T_F$ to the instance of length-constrained MST with $F,r,h$. But first, some notation: we let $T(v)$ be the unique root to $v$ path in a tree $T$. The first step of the transformation is to remove all of the weight $0$ edges from $T_G$. Let $T''_G\subseteq T_G$ be the new decremented tree. Now for every $v\in V(G)-r$ we have a path $T''_G\left(v_{i_v}\right)$ where $i_v\leq h$ and no two vertices in $T''_G\left(v_{i_v}\right)$ are copies of the same vertex in $V(F)$. Then $T''_G\left(v_{i_v}\right)$ is equivalent to a simple path from $r$ to $v$ in $F$ that is sure to exist by our construction of $G$ from $F$, and we add this path to $T_F$. We do this for all $v\in V(F)-r$\footnote{We remark that a reduction from length-constrained \textit{Steiner} tree to directed Steiner tree can be done by basically using this construction. The only difference is that instead of making the entire layer $h$ be the set of terminals, we only set the copies that correspond to terminals from the length-constrained Steiner tree instance to be the terminals.}. 

We give a small example showing that this reduction does not preserve planarity. Let $F$ be the union of $K_4$ and the root $r$, which is connected to two arbitrary vertices of the $K_4$, and let $h=2$. Clearly $F$ is planar since it is $\left(K_5,K_{3,3}\right)$-free, while $G$ is not since it contains $K_{3,3}$ as a minor. See \Cref{fig:nonplanarDST}.

\textbf{Correctness of the reduction.} By construction, we have that if $T_F$ is feasible/optimal then so is $T_G$, so it remains to show that if $T_G$ is feasible/optimal for its instance then so is $T_F$. Observe that any edge in $T_G$ is directed from some layer $i$ to $i+1$, so every $T_G(v_h)$ must visit each layer exactly once.
If $T_G$ is feasible yet $T_F$ isn't, then $T_F$ either violates the $h$ bound because it isn't connected or there is some $v$ such that $|T_F(v)|>h$. Indeed, by our transformation this means that some terminal is not connected via a path to $r$ in $T_G$, or $T_F$ contains a cycle (we can WLOG assume this is not the case). So $T_G$ must not be feasible either. 

Observe that $w\left(T_F\right)=w\left(T_G\right)$ since for every edge in $T_F(v)$, there is an edge of identical weight in $T_G(v)$, and the edges between identical vertex copies in $G$ have weight $0$.
So if $T_G$ is optimal and $T_F$ isn't, then there is some solution $L$ of the length-constrained MST instance such that $w(L)<w\left(T_F\right)=w\left(T_G\right)$. Then using the transformation described above on $L$, we can construct a solution of the directed Steiner tree instance of strictly less weight than $T_G$, contradicting $T_G$'s optimality. Therefore if $T_G$ is optimal then so is $T_F$. 
\end{proof}
\begin{figure}[H]
    \begin{subfigure}[b]{.25\textwidth}
      \centering
        \includegraphics[width=.7\linewidth]{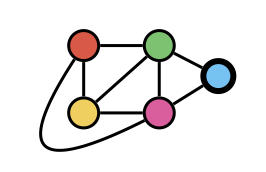}
      \caption{$F$.}\label{subfig:lcmst-instance}
    \end{subfigure}%
    ~
    \begin{subfigure}[b]{.25\textwidth}
      \centering
      \includegraphics[width=.7\linewidth]{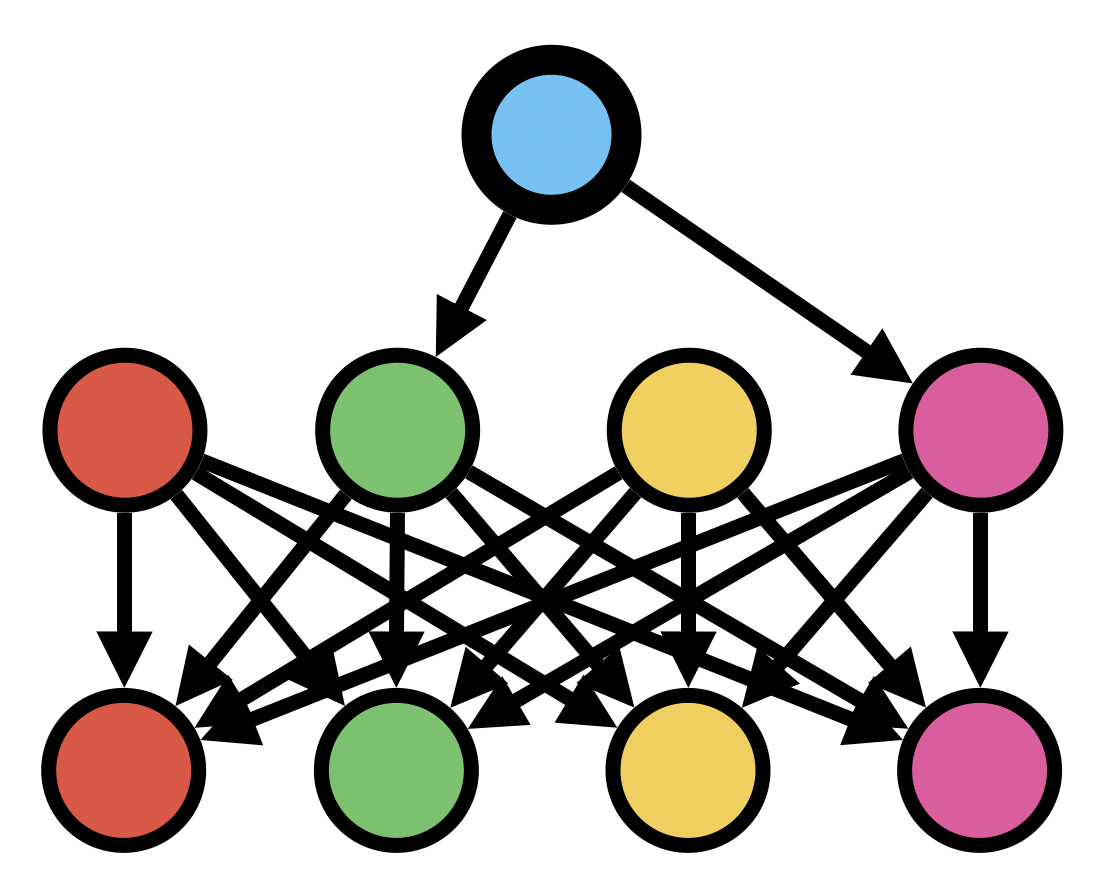}
      \caption{$G$.}\label{subfig:dst-instance}
    \end{subfigure}%
    ~
    \begin{subfigure}[b]{.25\textwidth}
      \centering
        \includegraphics[width=.7\linewidth]{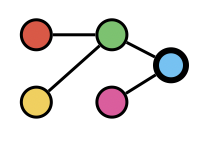}
      \caption{Solution of $F$.}\label{subfig:lcmst-solution}
    \end{subfigure}%
    ~
    \begin{subfigure}[b]{.25\textwidth}
      \centering
      \includegraphics[width=.7\linewidth]{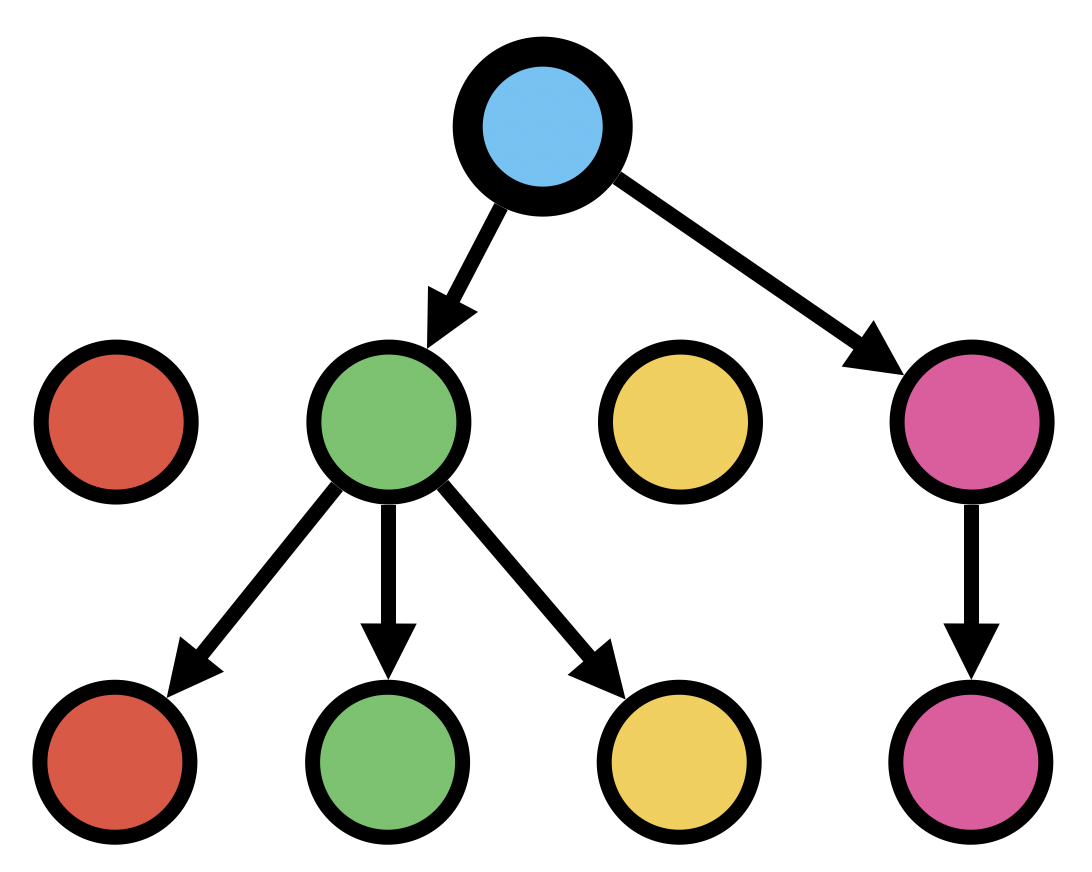}
      \caption{Solution of $G$.}\label{subfig:dst-solution}
    \end{subfigure}%
\caption{Given an instance of length-constrained MST on graph $F$ with the root as the bolded blue vertex and $h=2$ (\Cref{subfig:lcmst-instance}), we transform it into a directed $h$-layered graph $G$ where all edges are directed downwards (\Cref{subfig:dst-instance}). We solve the directed Steiner tree instance on $G$ (\Cref{subfig:dst-solution}), and transform it into a solution for the length-constrained MST instance on $F$ (\Cref{subfig:lcmst-solution}).}
\label{fig:nonplanarDST}
\end{figure}
\noindent 
Our reduction’s runtime depends linearly on $h$ (assuming integer lengths); this dependence is likely necessary since $h$ often (provably necessarily) appears in the runtime of algorithms for length-constrained problems. In particular, \cite{kociumaka2022bellman} showed that computing $h$-length-constrained distances requires $\Omega(hm)$ time under the APSP hypothesis. It would be surprising if there was a reduction with faster runtime than what's required to compute length-constrained distances.

\subsection{Reduction from Directed Steiner Tree}
We wrap things up with a reduction from directed Steiner tree to length-constrained Steiner tree, which combined with our reduction to length-constrained MST, shows that directed Steiner tree is equivalent to length-constrained MST (with length slack $1$). 
\begin{proof}[Proof of \Cref{lemma:dst-lcmst}]
    \textbf{The reduction.} Given an instance of directed Steiner tree involving a directed graph $F$ with root $r$ and terminal set $U$, we construct an instance of length-constrained Steiner tree with undirected layered graph $G$ as follows:
    \begin{enumerate}
        \item \textbf{Layers}: There are $n$ layers indexed from $0$ to $n-1$.
        \item \textbf{Vertices}: In layer $i$, there is a copy of each vertex in $V(F)$, and for each terminal in $U$, there is a special copy, for every $i\in[0,n-1]$. 
        \item \textbf{Edges}: For every directed edge $e=(u\to v)$ in $E(F)$, we have an edge of length $1$ and weight $w_F(e)$ between the copy of $u$ in layer $i$ and the copy of $v$ in layer $i+1$ for every $i\in[0,n-2]$. For every terminal $t\in U$, we add an edge of length $n-i$ and weight $0$ between its copy in layer $i$ and its special copy in layer $i$, for every $i\in[0,n-1]$. For every terminal $t\in U$, we add a clique of edges of length and weight $0$ among all $n$ of its special copies.
        \item \textbf{Terminals}: We let all of the special copies of each terminal in $U$ be the terminals $U_G$.
        \item \textbf{Length constraint}: We set the length constraint to be $h=n$.
    \end{enumerate}
    Given a solution $T_G$ to the instance of length-constrained Steiner tree with $G,r,h$, we can get a solution $T_F$ to the instance of directed Steiner tree with $F,U,r$ by taking the directed edge from $u$ to $v$ in $F$ if $T_G$ contains an edge between a copy of $u$ and a copy of $v$. See \Cref{fig:dst-to-lcst}.
    
    \textbf{Correctness of the reduction.} By construction, we have that if $T_F$ is feasible/optimal then so is $T_G$. Also by construction, if $T_G$ is feasible then so is $T_F$. Since the only edges with positive weight in $G$ are edges in $E(F)$, we have that if $T_G$ is optimal then so is $T_F$.

    Note this was a reduction to length-constrained Steiner tree. By \Cref{lemma:lcmst-lcst}, we arrive at the lemma statement.
\end{proof}
\begin{figure}[H]
    \begin{subfigure}[b]{.25\textwidth}
      \centering
        \includegraphics[width=.5\linewidth]{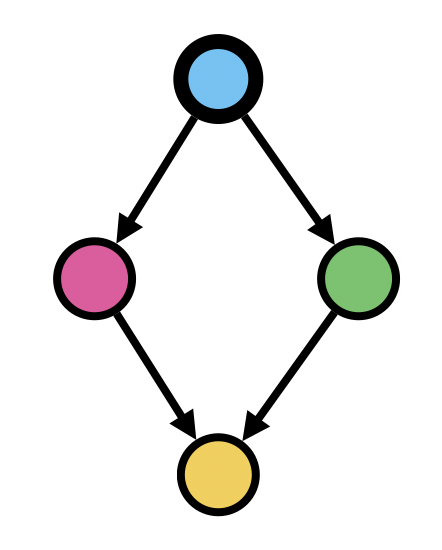}
      \caption{$F$.}\label{subfig:dst-to-lcst-init}
    \end{subfigure}%
    ~
    \begin{subfigure}[b]{.25\textwidth}
      \centering
      \includegraphics[width=.9\linewidth]{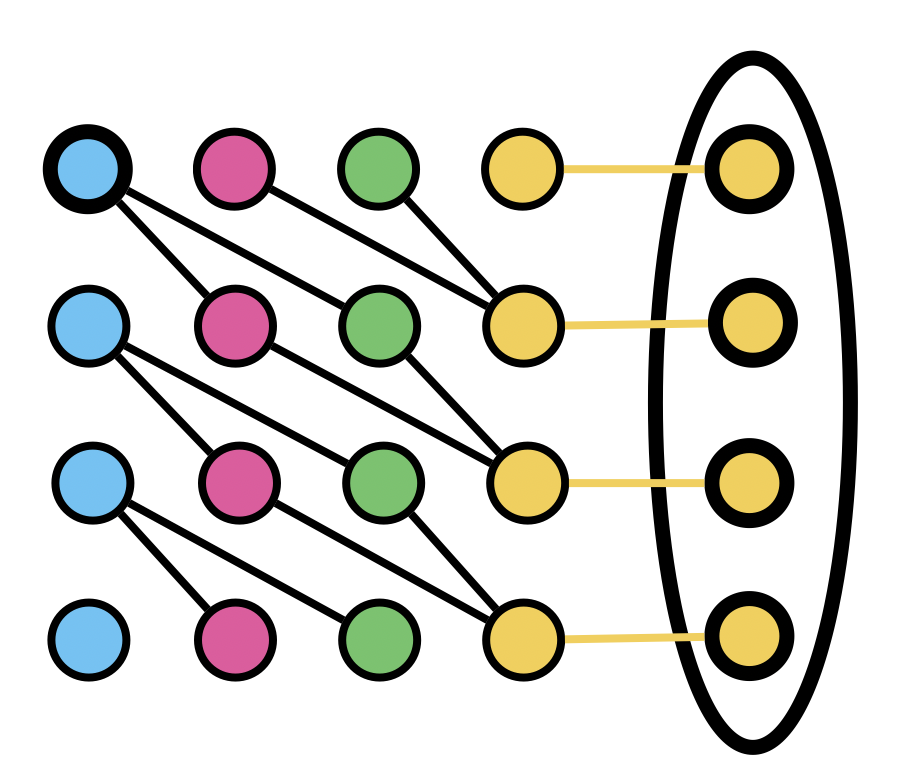}
      \caption{$G$.}\label{subfig:dst-to-lcst}
    \end{subfigure}%
    ~
    \begin{subfigure}[b]{.25\textwidth}
      \centering
        \includegraphics[width=.5\linewidth]{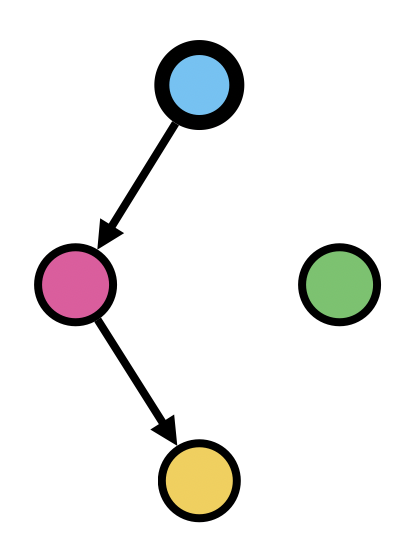}
      \caption{Solution of $F$.}\label{subfig:dst-to-lcmst-init-sol}
    \end{subfigure}%
    ~
    \begin{subfigure}[b]{.25\textwidth}
      \centering
      \includegraphics[width=.9\linewidth]{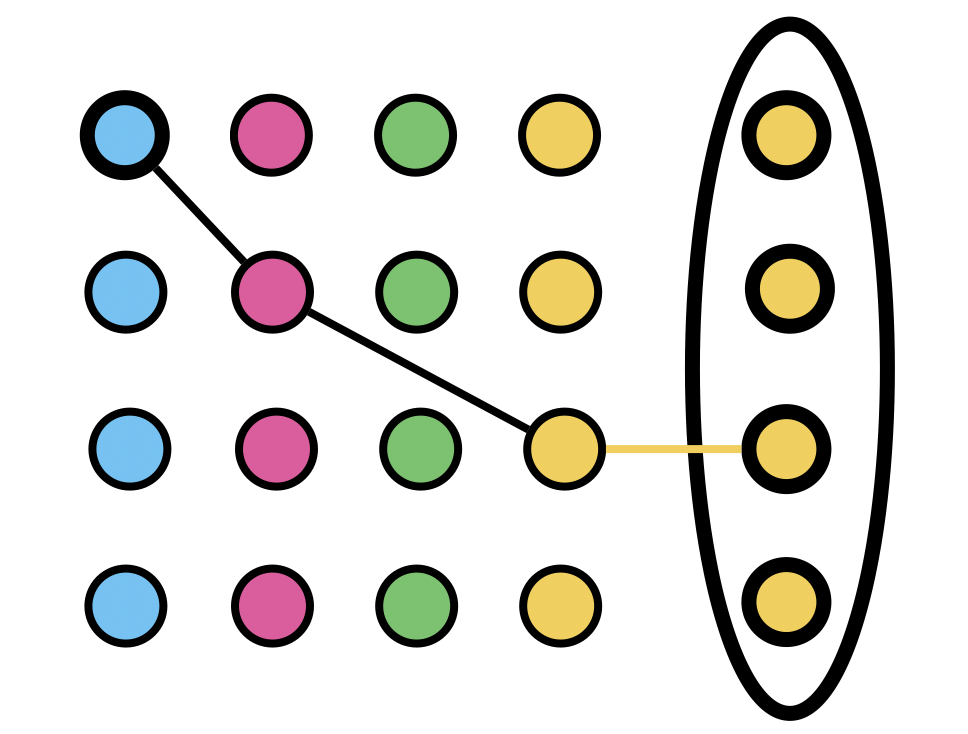}
      \caption{Solution of $G$.}\label{subfig:dst-to-lcmst-sol}
    \end{subfigure}%
\caption{Given an instance of directed Steiner tree on graph $F$ where the root is the bolded blue vertex and (single) terminal is the yellow vertex (\Cref{subfig:dst-to-lcst-init}), we transform it into an $n$-layered graph $G$ where black edges have length $1$ and yellow edges at layer $i$ have length $n-i$ (\Cref{subfig:dst-to-lcst}). We solve the length-constrained Steiner tree instance on $G$ (\Cref{subfig:dst-to-lcmst-sol}), and transform it into a solution for the directed Steiner tree instance on $F$ (\Cref{subfig:dst-to-lcmst-init-sol}).}
\label{fig:dst-to-lcst}
\end{figure}

\bibliographystyle{alpha}
\bibliography{references}

\end{document}